\definecolor{darkblue}{rgb}{0,0,.5}
\newcommand{\reals}{\mathbb{R}} 
\newcommand{\R}{\mathbb{R}}
\newcommand{\naturals}{\mathbb{N}} 
\newcommand{\N}{\mathbb{N}}
\newtheorem{theorem}{Theorem}
\newtheorem{lemma}{Lemma}[section]
\newtheorem{corollary}{Corollary}[section]
\newtheorem{definition}{Definition}[section]
\newtheorem{prop}{Proposition}
\renewcommand\log{\ln}
\begin{document}

\title{Bounding, Concentrating, and Truncating: Unifying Privacy Loss Composition for Data Analytics}
\author[1]{Mark Cesar}
\author[1]{Ryan Rogers}
\affil{Data Science Applied Research, LinkedIn}

\maketitle 

\begin{abstract}
Differential privacy (DP) provides rigorous privacy guarantees on individual's data while also allowing for accurate statistics to be conducted on the overall, sensitive dataset.  To design a private system, first private algorithms must be designed that can quantify the privacy loss of each outcome that is released.  However, private algorithms that inject noise into the computation are not sufficient to ensure individuals' data is protected due to many noisy results ultimately concentrating to the true, non-privatized result.  Hence there have been several works providing precise formulas for how the privacy loss accumulates over multiple interactions with private algorithms.  However, these formulas either provide very general bounds on the privacy loss, at the cost of being overly pessimistic for certain types of private algorithms, or they can be too narrow in scope to apply to general privacy systems. 
In this work, we unify existing privacy loss composition bounds for special classes of differentially private (DP) algorithms along with general DP composition bounds. In particular, we provide strong privacy loss bounds when an analyst may select pure DP, bounded range (e.g. exponential mechanisms), or concentrated DP mechanisms in any order. We also provide optimal privacy loss bounds that apply when an analyst can select pure DP and bounded range mechanisms in a batch, i.e. non-adaptively. Further, when an analyst selects mechanisms within each class adaptively, we show a difference in privacy loss between different, predetermined orderings of pure DP and bounded range mechanisms.  Lastly, we compare the composition bounds of Laplace and Gaussian mechanisms based on histogram datasets and present new top-$k$ private algorithms based on truncated Gaussian noise.
\end{abstract}
 
 \clearpage
 \tableofcontents
 \clearpage
 
\section{Introduction} 
Differential privacy (DP) provides a mathematical formalism to an intuitive notion of what it means for a computation to be private --- the computation should produce similar results with or without any one individual's data.  With this formalism, we can quantify the privacy loss of a computation that injects noise when evaluated on a sensitive dataset.  This allows us to determine which DP algorithms are more private than others, i.e. which has smaller privacy loss.  Furthermore, if multiple computations are done on the same dataset we can still quantify the privacy loss over the entire interaction with the dataset, i.e. DP composes.  As opposed to measuring utility empirically, e.g. prediction accuracy of a classification task, privacy loss in DP requires analytical bounds over worst case datasets and outcomes.  Improvements to the privacy loss bounds show that a given algorithm might actually be more private than originally proposed, with no changes to the algorithm itself.  

Hence, there have been many works in precisely bounding the overall privacy loss.  There are multiple composition bounds to use, including bounds that hold for any DP algorithms \cite{DworkMcNiSm06, DworkRoVa10, KairouzOhVi17, MurtaghVa16}, as well as improved composition bounds that only apply to specific types of DP algorithms \cite{Abadietal16, Mironov17, BunSt16, DongRoSu19, DurfeeRo19, DongDuRo19}.  In the design of a DP system, we would like to provide the best possible bounds on the privacy loss that apply for combinations of general DP algorithms as well as specific types of DP algorithms that enjoy much better composition bounds.  One can simply use the most general formulations to provide a loose bound on the overall privacy loss, but this neglects the improvements that can be made, which allow for more queries or more accurate results.  

Consider a privacy system for data analytics, supporting tasks such as counting queries and exploratory analysis.  These general tasks typically use the Laplace mechanism \cite{DworkMcNiSm06} or Gaussian mechanism \cite{DworkKeMcMiNa06} to provide noisy counts,  as well as exponential mechanisms \cite{McSherryTa07}, a general class of DP algorithms that have been shown to achieve much improved composition bounds.  In particular, \citet{DongDuRo19} showed that one can query nearly four times more exponential mechanisms for the same overall privacy bound as what can be achieved with using the general, optimal DP composition bounds \cite{KairouzOhVi17}.  This improvement is because \citet{DurfeeRo19} defined a stronger condition that exponential mechanisms satisfy, called \emph{bounded range} (BR).  In particular $\diffp$-BR implies $\diffp$-DP, whereas $\diffp$-DP implies $2\diffp$-BR.  Further, the Gaussian mechanism does not satisfy (pure) $\diffp$-DP, but rather a slight relaxation called \emph{concentrated} DP (CDP) \cite{DworkRo16} or zero-mean concentrated DP (zCDP) \cite{BunSt16}.  

To see that combining DP and BR mechanisms can arise naturally, consider a privacy system that allows for general top-$k$ queries.  One would typically use a two phase approach to ensure DP.  The first phase would use a series of exponential mechanisms to \emph{discover} the domain of elements in the top-$k$.  Given the discovered set, the second phase goes back to the dataset to add noise to the true counts of the discovered elements via the Laplace mechanism and then release the noisy counts.    In fact, most DP top-$k$ algorithms use this two phase approach, see for example \citet{BhaskarLaSmTh10} and \citet{DurfeeRo19}.  One approach to bounding the privacy loss of such an interaction would be to simply use the general DP composition bounds, but this ignores the improved composition bounds that are possible via the BR analysis.  Another approach would be to analyze the composition bounds via BR, but this results in doubling the privacy parameter for each Laplace mechanism, as was done in LinkedIn's privacy system that handles top-$k$ queries \cite{RogersSuPeDuLeKaSaAh20}. We then follow a line of research proposed in \citet{DongDuRo19}, studying the privacy loss bounds that combine both general DP bounds and improved BR bounds.

We make several contributions in this work.  First, we provide a bound on the overall privacy guarantee in a much more general setting than has been studied before. Specifically, we allow bounds on the overall privacy loss when an analyst can select at each round of interaction a privacy parameter from a set of preregistered privacy parameters $\cE$, without replacement, and then selects a corresponding mechanism that can be selected as a function of previous outcomes.  It seems more natural for an analyst's choice of mechanism to also be dependent on a privacy parameter, since that determines the level of accuracy of the mechanism.  Previous composition bounds required knowing the ordering of the privacy parameters in advance, so could not be selected as an analyst interacts with a private system, with the exception of the \emph{privacy odometer} bounds from \citet{RogersRoUlVa16} which does not require a preregistered set of privacy parameters but defines a different privacy guarantee than traditional DP.  Although loose, we provide bounds to the overall privacy loss in this more general setting which are able to beat optimal bounds for general pure-DP mechanisms by incorporating improved bounds for BR and concentrated DP mechanisms.  These results allow for much more freedom in how the analyst interacts with a privacy system; see Section~\ref{sec:concentration}.

Our next contribution is that we provide the optimal privacy loss bound when $k-m$ of the $k$ $\diffp$-DP mechanisms are $\diffp$-BR and are non-adaptively selected in the homogenous privacy parameter setting, i.e. all privacy parameters are $\diffp$.  With these bounds, we can interpolate between the two extremes of only composing $\diffp$-DP mechanisms \cite{KairouzOhVi17, MurtaghVa16} and composing only $\diffp$-BR mechanisms \cite{DongDuRo19}.  Note that these bounds allow for worst case orderings of $\diffp$-DP and $\diffp$-BR mechanisms, as long as the ordering is selected in advance.

We then demonstrate that ordering between $\diffp$-BR and $\diffp$-DP mechanisms matters when the mechanisms are allowed to be adaptively selected.  Hence the privacy loss can differ between an analyst adaptively selecting exponential mechanisms after using Laplace mechanisms and an analyst that alternates between exponential mechanisms and Laplace mechanisms (as one would do with a top-$k$ DP system).

One omission from our consideration of the optimal privacy parameters of exponential mechanisms and general pure DP mechanisms is the Gaussian mechanism, which adds Gaussian noise to return noisy counts.  We address the Gaussian mechanism in Section~\ref{sec:GM}, which gives a comparison of the overall DP parameters when Laplace or Gaussian noise is added to histograms.  
We see that for reasonable parameter settings, they give roughly the same privacy loss, but when the number of privatized results grow, Gaussian noise gives smaller privacy loss.  Based on this, we then propose Gaussian based variants of mechanisms presented in LinkedIn's recently deployed privacy system \cite{RogersSuPeDuLeKaSaAh20}.
We see the inclusion of the Gaussian mechanism to the optimal DP composition bounds with pure DP and BR as a fruitful direction for future work.

\section{Preliminaries} 
We begin by defining differential privacy, which considers two \emph{neighboring} datasets $x,x'$ from some data universe $\cX$, i.e. $x$ is the same as $x'$ except one user's data has been removed or added, sometimes denoted as $x \sim x'$.  Note that DP is parameterized by the \emph{privacy loss} parameter $\diffp>0$ and a small probability of privacy failure $\delta \in [0,1]$.

\begin{definition}[\citet{DworkMcNiSm06}]
A randomized algorithm $M: \cX \to \cY$ that maps input set $\cX$ to some arbitrary outcome set $\cY$ is $(\diffp,\delta)$-differentially private (DP) if for any neighboring datasets $x,x'$ and outcome sets $S \subseteq \cY$,
\[ 
\Pr\left[ M(x) \in S\right] \leq e^\diffp \Pr\left[ M(x') \in S\right] + \delta.
\]
When $\delta = 0$, we typically say that $M$ is $\diffp$-DP or \emph{pure} DP.  
\end{definition}

One of the most useful properties of DP is that it composes, meaning that if one were to repeatedly use different DP algorithms, which can be adaptively selected at each round, then the result will be DP, although with a slightly worse privacy loss parameter.  A class of mechanisms that enjoys improved composition bounds are bounded range (BR) mechanisms, which is similar to pure DP  \cite{DurfeeRo19, DongDuRo19}.  Roughly, composing $\diffp$-BR mechanisms in a batch (i.e. non-adaptively) is nearly the same, in terms of the accumulated privacy loss, as composing $\diffp/2$-DP mechanisms.
\begin{definition}[\citet{DurfeeRo19}]
A randomized algorithm $M: \cX \to \cY$ that maps input set $\cX$ to some arbitrary outcome set $\cY$ is $\diffp$-bounded range (BR) if for any neighboring datasets $x,x'$ and outcomes $y_1,y_2 \in \cY$, 
\[ 
\frac{\Pr\left[ M(x) = y_1 \right]}{\Pr\left[ M(x') = y_1\right]} \leq e^\diffp \frac{\Pr\left[ M(x) = y_2 \right]}{\Pr\left[ M(x') = y_2\right] }
\]
\end{definition}

We then have the following connection between pure DP and BR.
\begin{lemma}[\citet{DurfeeRo19}]
If $M$ is $\diffp$-DP then it is $2 \diffp$-BR.  Alternatively, if $M$ is $\diffp$-BR then $M$ is $\diffp$-DP.  Furthermore, if $M$ is $\diffp$-BR then for each pair of neighbors $x,x'$, there exists a $t \defeq t(x,x')\in [0,\diffp]$ such that 
\[
t - \diffp \leq \log\left(\frac{\Pr\left[ M(x) = y \right]}{\Pr\left[ M(x') = y\right]}\right) \leq t.
\]
\end{lemma}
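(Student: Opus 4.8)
The plan is to reduce all three assertions to elementary facts about the \emph{privacy loss function} of $M$ on a fixed pair of neighbors. For $x \sim x'$, write $\ell(y) \defeq \log\left(\tfrac{\Pr[M(x) = y]}{\Pr[M(x') = y]}\right)$ (reading these as densities/pmfs, and noting up front that a finite BR or pure-DP bound forces $M(x)$ and $M(x')$ to have the same support, so $\ell$ is finite where it matters). With this notation, $\diffp$-DP of $M$ is equivalent to the pointwise bound $\ell(y) \le \diffp$ for all $y$ together with $\ell(y) \ge -\diffp$ (the latter coming from applying the DP inequality to the reversed pair $x' \sim x$), while $\diffp$-BR is equivalent to $\ell(y_1) - \ell(y_2) \le \diffp$ for all $y_1, y_2$, i.e.\ $\sup_y \ell(y) - \inf_y \ell(y) \le \diffp$.

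Given this, the first two claims are immediate. If $M$ is $\diffp$-DP then $\ell$ takes values in $[-\diffp,\diffp]$, so $\ell(y_1) - \ell(y_2) \le 2\diffp$ for all $y_1,y_2$, which is $2\diffp$-BR. Conversely, if $M$ is $\diffp$-BR, then because $\Pr[M(x)=\cdot]$ and $\Pr[M(x')=\cdot]$ are both probability distributions with total mass $1$, there must be an outcome $y_2$ with $\Pr[M(x)=y_2] \le \Pr[M(x')=y_2]$, i.e.\ $\ell(y_2) \le 0$; plugging this $y_2$ into the BR inequality gives $\ell(y_1) \le \ell(y_2) + \diffp \le \diffp$ for every $y_1$, which is $\diffp$-DP.

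For the two-sided bound, fix $x \sim x'$ and take $t \defeq \sup_y \ell(y)$. By definition of the supremum $\ell(y) \le t$ for all $y$, and taking the supremum over $y_1$ in the BR inequality $\ell(y_1) - \ell(y) \le \diffp$ yields $\ell(y) \ge t - \diffp$ for all $y$; this is exactly the claimed sandwich. Finally $t \in [0,\diffp]$: the same total-mass argument as above shows some outcome has $\ell \ge 0$, so $t \ge 0$, and $t \le \diffp$ because the second claim already gives $\diffp$-DP, hence $\ell \le \diffp$ pointwise. The one place that needs care --- and the main thing I would pin down --- is the measure-theoretic bookkeeping when $\cY$ is continuous: one should interpret $\Pr[M(x)=y]$ as densities, replace ``there exists $y_2$ with $\ell(y_2) \le 0$'' by ``a positive-measure set of such outcomes,'' and justify that a finite BR constant implies mutual absolute continuity so that $\ell$ is well defined and finite almost everywhere. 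With that settled, everything else is arithmetic on the range of $\ell$.
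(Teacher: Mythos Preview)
The paper does not prove this lemma; it is stated in the preliminaries with a citation to \citet{DurfeeRo19} and used as a black box. So there is no ``paper's own proof'' to compare against.

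That said, your argument is correct and is essentially the standard one. Recasting both definitions in terms of the privacy-loss function $\ell(y)$ and observing that $\diffp$-DP pins $\ell$ to the interval $[-\diffp,\diffp]$ while $\diffp$-BR bounds the \emph{range} $\sup\ell - \inf\ell$ by $\diffp$ makes all three claims fall out immediately. The total-mass trick (both $M(x)$ and $M(x')$ integrate to $1$, so $\ell$ must cross zero) is exactly the right way to anchor the BR interval inside $[0,\diffp]$ and to extract pure DP from BR. Your choice $t = \sup_y \ell(y)$ is the natural one and matches how the parameter $t(x,x')$ is used later in the paper (e.g.\ in the generalized randomized response of Definition~\ref{defn:gen_rr}).

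The only caveat you flag yourself: in the continuous case the ``there exists $y_2$ with $\ell(y_2)\le 0$'' step should be read as ``the set $\{\ell \le 0\}$ has positive measure,'' and $\sup_y \ell(y)$ should be an essential supremum. With finite BR parameter forcing mutual absolute continuity, this bookkeeping goes through without incident.
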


It turns out that one of the canonical mechanisms in DP, the exponential mechanism \cite{McSherryTa07}, with parameter $\diffp$ is $\diffp$-BR.  Thus, when working with the exponential mechanism, one should consider composition for BR mechanisms, rather than DP mechanisms. We then define the exponential mechanism, which provides a general way to construct DP mechanisms.  It takes a quality score $u: \cX \times \cY \to \R$ and samples an outcome $y$ based on the quality score $u(x,y)$ computed on the input data.  We will also need to define the \emph{range} of the quality score
\[
\tilde\Delta u \defeq \sup_{x \sim x'} \left\{ \max_{y \in \cY} \{ u(x,y) - u(x', y) \} - \min_{y' \in \cY}\{ u(x,y') - u(x',y') \} \right\}.
\]

The original definition is due to \citet{McSherryTa07}, but then it was recently modified with a slight refinement, based on the range of the quality score rather than the \emph{global sensitivity}, in \citet{DongDuRo19}.  
\begin{definition}[Exponential Mechanism \cite{McSherryTa07, DongDuRo19}]
For a given quality score $u: \cX \times \cY \to \R$ the \emph{exponential mechanism} $M_u: \cX \to \cY$ with input $x \in \cX$ returns $y \in \cY$ with probability proportional to 
\[ 
\exp\left(  \frac{\diffp u(x,y)}{\tilde\Delta u}\right).
\]
\end{definition}
\citet{McSherryTa07} showed that $M_u$ is $\diffp$-DP and \citet{DurfeeRo19} showed that it satisfies the stronger condition of $\diffp$-BR.   
In order to prove the optimal composition bounds for DP or BR mechanisms, it is sufficient to consider a simpler mechanism, based on randomized response \cite{Warner65}.  This mechanism takes a single bit and returns a bit, subject to DP or BR.  We provide the generalized version of randomized response below, which is from \citet{DongDuRo19}.
\begin{definition}[Generalized Random Response]\label{defn:gen_rr}
For any $\diffp \geq 0$ and $t \in [0,\diffp]$,
let $\grr{\diffp,t}: \{0,1\} \rightarrow \{0,1\}$ be a randomized mechanism in terms of probabilities $q_{\diffp,t}$ and $p_{\diffp,t}$ such that

\begin{align*}
\grr{\diffp,t}(0) = 0 \text{ w.p. } \frac{1 - e^{t-\diffp}}{1 - e^{-\diffp}} \eqdef q_{\diffp,t} \qquad & \text{ and } \qquad \grr{\diffp,t}(0) = 1 \text{ w.p. } \frac{e^{t-\diffp} - e^{-\diffp}}{1 - e^{-\diffp}} \eqdef 1 - q_{\diffp,t}
\\
\grr{\diffp,t}(1) = 0 \text{ w.p. } \frac{e^{-t} - e^{-\diffp}}{1 - e^{-\diffp}} \eqdef p_{\diffp,t} \qquad & \text{ and } \qquad \grr{\diffp,t}(1) = 1 \text{ w.p. } \frac{1 - e^{-t}}{1 - e^{-\diffp}} \eqdef 1 - p_{\diffp,t}.
\end{align*}
\end{definition}

Note that $\grr{2\diffp,\diffp}(\cdot)$ is the standard randomized response mechanism from DP and is $\diffp$-DP.  When considering the worst case optimal composition bounds of $k$ adaptively or nonadaptively chosen $\diffp$-DP mechanisms, we need only consider the overall privacy loss of $k$ repeated instances of $\grr{2\diffp,\diffp}(\cdot)$ \cite{KairouzOhVi17}.  For obtaining the worst case optimal composition bound of $\diffp$-BR mechanisms, it gets more complicated.  The main difficulty is in how each $t_i \in [0,\diffp]$ is selected at each round $i \in [k]$.  If all the $t_i$'s are preselected, corresponding to non-adaptively selecting $\diffp$-BR mechanisms in advance, then \citet{DongDuRo19} provide an efficiently computable formula for the optimal composition bound.  However, if each $t_i$ can be selected as a function of previous outcomes, i.e. the adaptive setting, then they provide a recursive formula for computing the optimal bound on the privacy loss, which is conjectured to be hard to compute, even in the homogenous privacy parameter setting.  

We will write $\mbr(\diffp)$ to denote the class of all $\diffp$-BR mechanisms, and similarly $\mdp(\diffp)$ will denote the class of all $\diffp$-DP mechanisms.  As discussed earlier, we will need to differentiate our bounds when the mechanisms can be adaptively selected at each round or not.  We then write $\cM_1 \times \cM_2 = \{ (M_1(\cdot), M_2(\cdot)) : M_1 \in \cM_1, M_2 \in \cM_2 \}$ as the non-adaptively selected class of mechanisms from $\cM_1$ and $\cM_2$. Alternatively, we will write $(\cM_1, \cM_2) = \{ M_1(\cdot), M_2(\cdot , M_1(\cdot))  : M_1 \in \cM_1, M_2 \in \cM_2   \}$ to denote the class of mechanisms that can be adaptively selected at each round, based on outcomes of previously selected mechanisms.

Since DP has two privacy parameters, we will typically fix $\diffp_g$ and write the best possible $\delta$ as a function of $\diffp_g$.  Hence, we will use the following definition for the optimal privacy parameter as a function of $\diffp_g$, which was also used in \cite{DongDuRo19}.  

\begin{definition}[Optimal Privacy Parameters]\label{def:optimal_delta}
Given a mechanism $M: \cX \to \cY$ and any $\diffp_g \in \R$, we define the optimal $\delta$ to be
\[
\delta_{\opt}(M;\diffp_g) \defeq \inf \big\{ \delta\geq 0: \text{ M is } (\diffp_g,\delta)\text{-DP}\big\}.
\]
Further, if $\cM$ is a class of mechanisms $M: \cX \to \cY$, then for any $\diffp_g \in \R$, we define 
\[
\delta_{\opt}(\cM;\diffp_g) \defeq \sup_{M \in \cM} \delta_{\opt}(M;\diffp_g).
\]
\end{definition}

We have the following formula for the optimal composition bound for homogenous, adaptively selected pure DP mechanisms.
\begin{theorem}[Optimal Homogeneous DP Composition \cite{KairouzOhVi17}]\label{thm:opt_DPCOMP}
For every $\diffp>0$ and $\diffp_g \geq 0$, we have the following where $\diffp_i = \diffp$ for all $i \in [k]$
\[
\deltaopt((\mdp(\diffp_1), \cdots, \mdp(\diffp_k)) ; \diffp_g) = \frac{1}{(1 + e^{\diffp})^k} \sum_{\ell = \left\lceil \tfrac{\diffp_g + k \diffp}{2 \diffp} \right\rceil }^k {k \choose \ell} \left( e^{\ell\diffp} - e^{\diffp_g + (k-\ell) \diffp} \right).
\]
\end{theorem}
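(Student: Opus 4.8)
The plan is to establish the identity by matching a lower bound with an upper bound, the extremal instance in both cases being $k$ independent copies of the $\diffp$-DP randomized response $\grr{2\diffp,\diffp}$ (which is $\diffp$-DP, as noted after Definition~\ref{defn:gen_rr}). Throughout I would use the hockey-stick characterization of the optimal privacy parameter: for any $M$ and neighbors $x \sim x'$, the quantity $\Pr[M(x) \in S] - e^{\diffp_g}\Pr[M(x') \in S]$ is maximized over $S$ by the likelihood-ratio set $S^\star_{x,x'} = \{ y : \Pr[M(x)=y] > e^{\diffp_g}\Pr[M(x')=y]\}$, so that $\deltaopt(M;\diffp_g) = \sup_{x \sim x'}\left( \Pr[M(x) \in S^\star_{x,x'}] - e^{\diffp_g}\Pr[M(x') \in S^\star_{x,x'}] \right)$, and similarly for a class of mechanisms after taking $\sup_{M}$.

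For the lower bound, let $M^{(k)}$ be the non-adaptive composition of $k$ i.i.d.\ copies of $\grr{2\diffp,\diffp}$, viewed as a mechanism on the one-bit dataset; it lies in $(\mdp(\diffp),\dots,\mdp(\diffp))$, so $\deltaopt$ of the whole class is at least $\deltaopt(M^{(k)};\diffp_g)$. Taking the neighbors $x = 0$, $x' = 1$, an output $y \in \{0,1\}^k$ with exactly $\ell$ zero-coordinates has $\Pr[M^{(k)}(0)=y] = e^{\ell\diffp}/(1+e^\diffp)^k$ and $\Pr[M^{(k)}(1)=y] = e^{(k-\ell)\diffp}/(1+e^\diffp)^k$, hence likelihood ratio $e^{(2\ell-k)\diffp}$, which exceeds $e^{\diffp_g}$ exactly when $\ell > (\diffp_g + k\diffp)/(2\diffp)$. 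Summing the $\binom{k}{\ell}$ outputs at each level $\ell$ from $\lceil (\diffp_g + k\diffp)/(2\diffp)\rceil$ up to $k$ — the boundary level, when the threshold is an integer, contributes $0$ because then $e^{\diffp_g + (k-\ell)\diffp} = e^{\ell\diffp}$, so the non-strict inequality is harmless — yields exactly the claimed expression.

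For the upper bound I would invoke the hypothesis-testing / privacy-region viewpoint of \citet{KairouzOhVi17}: a mechanism is $(\diffp,\delta)$-DP iff, for every pair of neighbors, the region of achievable (type-I, type-II) error pairs for distinguishing $M(x)$ from $M(x')$ is contained in the privacy region $R(\diffp,\delta)$ cut out by the two lines associated with $(\diffp,\delta)$; a mechanism is $\diffp$-DP iff this region lies in $R(\diffp,0)$, and $\grr{2\diffp,\diffp}$ is extremal in that its region equals $R(\diffp,0)$. The key structural fact is that these regions compose monotonically under adaptive composition: if each round's mechanism $M_i$ (conditioned on any prefix) has region contained in that of a fixed dominating mechanism $N_i$, then the adaptive composition of the $M_i$'s has region contained in that of the product of the $N_i$'s. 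This I would prove by induction on $k$, the inductive step using that conditioning on a prefix preserves $\diffp$-DP and that the region-composition operation is monotone in each argument. Applying this with every $N_i = \grr{2\diffp,\diffp}$ shows the composed region of an arbitrary adaptive sequence of $\diffp$-DP mechanisms is contained in that of $M^{(k)}$, hence its optimal $\delta$ at level $\diffp_g$ is at most $\deltaopt(M^{(k)};\diffp_g)$, which matches the lower bound.

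The main obstacle is this upper-bound reduction: one needs the equivalence between $(\diffp,\delta)$-DP and containment in the privacy region, the identification of $\grr{2\diffp,\diffp}$ as the extremal $\diffp$-DP mechanism, and — most delicately — the monotonicity of region-composition under \emph{adaptive} choices, which lets a per-round domination be chained through all $k$ rounds; care is also needed because the dominating mechanism must be the same randomized response regardless of the prefix, which works here only because the privacy parameter $\diffp$ is homogeneous. Once the reduction is in place, evaluating the hockey-stick divergence of the product of randomized responses is the routine binomial computation sketched above.
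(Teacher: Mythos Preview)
The paper does not give its own proof of this theorem: it is stated in the Preliminaries section as a known result, attributed to \citet{KairouzOhVi17}, and is used only as a black box (e.g., as a comparison curve in Figure~\ref{fig:AdvancedCompBRDP} and as the $m=k$ endpoint of Theorem~\ref{thm:opt_nonadaptive}). So there is nothing in the paper to compare your argument against.

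That said, your sketch is correct and is essentially the argument of the cited reference. The lower bound is the straightforward hockey-stick computation for $k$ i.i.d.\ copies of $\grr{2\diffp,\diffp}$ on the neighboring inputs $0$ and $1$, and your binomial sum is exactly the stated formula. For the upper bound you invoke the hypothesis-testing/privacy-region characterization and the fact that $\grr{2\diffp,\diffp}$ dominates every $\diffp$-DP mechanism, then chain this domination through adaptive rounds; this is precisely the ``optimal mechanism'' reduction of \citet{KairouzOhVi17}, and your caveat that the reduction needs the per-round parameter to be the same $\diffp$ (homogeneity) is on point. The only place to be slightly more careful is the adaptive step: the cleanest phrasing is that for any fixed prefix the round-$i$ mechanism is $\diffp$-DP, hence dominated by $\grr{2\diffp,\diffp}$, and then the data-processing/monotonicity of the hockey-stick divergence under such pointwise domination lets you replace rounds one at a time; but you have identified this as the crux, and the argument goes through.
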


We have the following result from Theorem 3 in \citet{DongDuRo19}, which results in improved DP composition bounds when compared with the optimal DP composition bounds for general $\diffp$-DP mechanisms, but only applies in the non-adaptive setting.
\begin{theorem}[Optimal Homogeneous Non-adaptive BR Composition \cite{DongDuRo19}]
\label{thm:non-adaptive}
Consider the case where over $k$ rounds, each mechanism is a non-adaptively selected $\diffp$-BR mechanism.  We set $t_\ell^* = \frac{\diffp_g + (\ell + 1) \diffp}{k + 1}$ where if $t_{\ell}^* \notin [0,\diffp]$, then we round it to the closest point in $[0,\diffp]$.  We then have the following, where we write $[y]_+ = \max\{ 0,y\}$, and write $\diffp_i = \diffp$ for $i \in [k]$
\[
\delta_{\opt}(\mbr(\diffp_1)\times \cdots \times \mbr(\diffp_k);\diffp_g) = \max_{ \ell \in \{0,\cdots,  k\}}  \sum_{i = 0}^k {k \choose i} \frac{(e^{-t_{\ell}^*} - e^{-\diffp})^{k-i}( 1- e^{-t_{\ell}^*} )^{i}}{(1 - e^{-\diffp})^k} \left[ e^{k t_{\ell}^*  - i\diffp} - e^{\diffp_g} \right]_+.
\]
\end{theorem}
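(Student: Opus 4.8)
The plan is to prove the two matching inequalities that pin down the optimum. Write $P_t,Q_t$ for the laws of $\grr{\diffp,t}(0)$ and $\grr{\diffp,t}(1)$, and recall that for a pair $(P,Q)=(M(x),M(x'))$ the best $\delta$ at level $\diffp_g$ is $\sup_S\bigl(P(S)-e^{\diffp_g}Q(S)\bigr)$, attained at $S=\{y:\log(dP/dQ)(y)>\diffp_g\}$, so an outcome on which the log-ratio equals $r$ contributes $[e^{r}-e^{\diffp_g}]_+$ times its $Q$-mass. First I would reduce to products of generalized random response. Fix neighbors $x\sim x'$; non-adaptivity makes $M(x)=\bigotimes_jP^{(j)}$ and $M(x')=\bigotimes_jQ^{(j)}$, and the structural lemma of \citet{DurfeeRo19} quoted above gives $\log(dP^{(j)}/dQ^{(j)})\in[t_j-\diffp,t_j]$ pointwise for some $t_j\in[0,\diffp]$. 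I would then invoke the domination machinery underlying the optimal pure-DP composition theorem of \citet{KairouzOhVi17}, in its bounded-range form \cite{DurfeeRo19,DongDuRo19}: a pair whose log-ratio is confined to $[t_j-\diffp,t_j]$ is dominated, for hockey-stick divergences of products, by the two-point pair $(P_{t_j},Q_{t_j})$ concentrating its mass at the endpoints $t_j$ and $t_j-\diffp$, and domination is preserved under tensoring. This yields $\delta_{\opt}(M;\diffp_g)\le\sup_{\vec t\in[0,\diffp]^k}\delta(\vec t)$ with $\delta(\vec t):=\sup_S\bigl(\bigotimes_jP_{t_j}(S)-e^{\diffp_g}\bigotimes_jQ_{t_j}(S)\bigr)$; conversely each $\grr{\diffp,t}$ is $\diffp$-BR (its log-ratios $t$ and $t-\diffp$ differ by exactly $\diffp$), so $(\grr{\diffp,t})^{\otimes k}$ is an admissible non-adaptive family and $\delta(t,\dots,t)$ lower-bounds $\delta_{\opt}$ for each $t$.

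Next I would show $\sup_{\vec t\in[0,\diffp]^k}\delta(\vec t)=\max_{t\in[0,\diffp]}\delta(t,\dots,t)$. As $\delta$ is symmetric, it suffices to establish Schur-concavity on the cube: writing $\delta(\vec t)=\mathbb{E}\bigl[(e^{L}-e^{\diffp_g})_+\bigr]$ with $L=\sum_jL_j$, the $L_j$ independent and $e^{L_j}$ a mean-one two-point variable on $\{e^{t_j-\diffp},e^{t_j}\}$ (mean $1$ since $\mathbb{E}_{Q_{t_j}}[dP_{t_j}/dQ_{t_j}]=1$), this should follow from the fact that averaging two coordinates only contracts $e^{L}$ in the convex order, together with convexity of $x\mapsto(x-e^{\diffp_g})_+$. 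This symmetrization is the step I expect to be the main obstacle; it is essentially the non-adaptive content of Theorem~3 in \citet{DongDuRo19}, which I would cite or reprove.

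It then remains to compute $g(t):=\delta(t,\dots,t)$ and maximize over $[0,\diffp]$. For the homogeneous product, symmetry means the outcome matters only through the number $i$ of coordinates equal to $1$; Definition~\ref{defn:gen_rr} gives $\tfrac{\Pr[\grr{\diffp,t}(0)=0]}{\Pr[\grr{\diffp,t}(1)=0]}=e^{t}$ and $\tfrac{\Pr[\grr{\diffp,t}(0)=1]}{\Pr[\grr{\diffp,t}(1)=1]}=e^{t-\diffp}$, so on ``$i$ ones'' the log-ratio is $kt-i\diffp$ and $Q_t^{\otimes k}$ gives that event probability $\binom{k}{i}(1-p_{\diffp,t})^{i}p_{\diffp,t}^{k-i}=\binom{k}{i}\frac{(1-e^{-t})^{i}(e^{-t}-e^{-\diffp})^{k-i}}{(1-e^{-\diffp})^{k}}$. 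Summing the contributions $\binom{k}{i}\frac{(1-e^{-t})^{i}(e^{-t}-e^{-\diffp})^{k-i}}{(1-e^{-\diffp})^{k}}[e^{kt-i\diffp}-e^{\diffp_g}]_+$ over $i$ gives exactly the claimed summand with $t$ in place of $t_\ell^*$, and the reverse divergence only re-adds $g(\diffp-t)$, which does not enlarge the supremum over $t\in[0,\diffp]$.

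Finally I would locate $\max_{t\in[0,\diffp]}g(t)$. The map $g$ is continuous, and on any interval where $\{i:e^{kt-i\diffp}>e^{\diffp_g}\}=\{0,\dots,\ell\}$ it equals $\Pr[\mathrm{Bin}(k,1-q_{\diffp,t})\le\ell]-e^{\diffp_g}\Pr[\mathrm{Bin}(k,1-p_{\diffp,t})\le\ell]$, which is analytic there. Using $\frac{d}{d\theta}\Pr[\mathrm{Bin}(k,\theta)\le\ell]=-k\binom{k-1}{\ell}\theta^{\ell}(1-\theta)^{k-1-\ell}$ together with $\frac{1-q_{\diffp,t}}{1-p_{\diffp,t}}=e^{t-\diffp}$ and $\frac{q_{\diffp,t}}{p_{\diffp,t}}=e^{t}$, the equation $g'(t)=0$ collapses after cancellation to $(k+1)t=\diffp_g+(\ell+1)\diffp$, so the unique stationary point of the $\ell$-th piece is $t_\ell^*=\frac{\diffp_g+(\ell+1)\diffp}{k+1}$, and a sign check shows it is that piece's maximum. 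Hence $g$ is piecewise unimodal with peaks at the $t_\ell^*$ and its maximum over $[0,\diffp]$ is attained at one of these points clamped to $[0,\diffp]$; taking $\max_{\ell\in\{0,\dots,k\}}g(t_\ell^*)$ (with clamping) gives the stated formula. The remaining care is routine: if some $t_\ell^*$ falls outside its piece or outside $[0,\diffp]$, the relevant sub-interval maximum sits at a shared breakpoint, where the switched $[\cdot]_+$ term vanishes and $g$ is continuous, or at an endpoint already captured by an adjacent clamped $t_{\ell'}^*$.
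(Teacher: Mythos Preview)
Your outline matches the approach of \citet{DongDuRo19} (and the $m=0$ case of this paper's Theorem~\ref{thm:opt_nonadaptive}): reduce to products of $\grr{\diffp,t}$, symmetrize the $t_j$'s to a common $t$, write the resulting $\delta$ as a sum over binomial events, and locate the critical points $t_\ell^*$. The derivative step is handled in the paper via an inductive identity (Lemma~\ref{lem:deriv_zero}, generalizing Lemma~5.7 of \citet{DongDuRo19}); your binomial-CDF derivative route, using $\tfrac{d}{d\theta}\Pr[\mathrm{Bin}(k,\theta)\le\ell]=-k\binom{k-1}{\ell}\theta^{\ell}(1-\theta)^{k-1-\ell}$ together with $q_{\diffp,t}/p_{\diffp,t}=e^{t}$ and $(1-q_{\diffp,t})/(1-p_{\diffp,t})=e^{t-\diffp}$, is a clean alternative that collapses to the same equation $(k+1)t=\diffp_g+(\ell+1)\diffp$.

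The one place to be careful is your symmetrization heuristic. You want $\delta(\vec t)\le\delta(\bar t,\dots,\bar t)$ with $\bar t$ the coordinate average; this is exactly Lemma~5.4 of \citet{DongDuRo19} (stated here as Lemma~\ref{lem:gen_convexity}). Phrased in your language, averaging $t_1,t_2$ must make $e^{L}$ \emph{larger} in convex order so that $\mathbb{E}\bigl[(e^L-e^{\diffp_g})_+\bigr]$ increases --- not ``contract'' it, which would push the inequality the wrong way. Concretely, for two coordinates the support of $e^{L_1+L_2}$ depends only on $t_1+t_2$, and one checks that $p_{\diffp,\bar t}^2\ge p_{\diffp,t_1}p_{\diffp,t_2}$ and $(1-p_{\diffp,\bar t})^2\ge(1-p_{\diffp,t_1})(1-p_{\diffp,t_2})$ (log-concavity of $t\mapsto p_{\diffp,t}$ and of $t\mapsto 1-p_{\diffp,t}$), so the averaged pair puts more mass at the extremes: the convex order goes up, not down. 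With that sign corrected (or with a direct appeal to Lemma~5.4 as you indicate), your argument goes through.
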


We will also use concentrated differential privacy (CDP) \cite{DworkRo16} as a privacy definition, which is typically used to analyze mechanisms that add Gaussian noise to statistics.  Note that there are other variants related to CDP, including zero-mean CDP (zCDP) \cite{BunSt16} and R\'enyi DP (RDP) \cite{Mironov17}.  We first define the privacy loss random variable in terms of two random variables $Y$ and $Z$ over the same support
\[
L_{Y|| Z} \defeq \log\left(  \frac{\Pr[Y = y]}{\Pr[Z= y]}\right) \qquad \text{ where } y \sim Y.
\]
\begin{definition}[Concentrated Differential Privacy]
A randomized algorithm $M$ is $(\mu,\tau)$-CDP if for all neighboring inputs $x,x'$, we have $\E[L_{M(x)||M(x')}] \leq \mu$ and for any $\lambda \in \R$ we have
$
\E\left[\exp\left( \lambda \left(L_{M(x) || M(x') } - \E[L_{M(x)||M(x')}] \right) \right)\right] $$\leq $$e^{\lambda^2\cdot \tau^2/2}.
$
\end{definition}
Note that if a statistic $f: \cX \to \R$ can change by at most $\Delta_f$ on neighboring datasets, i.e. the sensitivity of $f$, then $M(x) = \Normal{f(x)}{\Delta_f^2 \sigma^2}$ satisfies $(\tfrac{1}{2\sigma^2}, \tfrac{1}{\sigma})$-CDP \cite{DworkRo16}.

We also provide the definition of zero-mean CDP (zCDP) from \citet{BunSt16}, which is useful for us in a later section due to its approximate version.  It is based on the R\'{e}nyi divergence of order $\alpha >1$ between two distributions $P$ and $Q$ over the same domain, denoted as $D_\alpha(P || Q)$ where
\[
D_{\alpha}(P||Q) \defeq \frac{1}{\alpha - 1} \log \E_{z \sim P} \left[ \left( \frac{P(z)}{Q(z)} \right)^{\alpha - 1}\right].
\]

\begin{definition}[Zero-mean Concentrated Differential Privacy]
A randomized algorithm $M: \cX \to \cY$ is $\delta$-approximately $(\xi,\rho)$-zCDP if for any neighbors $x, x' \in \cX$, there exists events $E$ and $E'$, such that $\Pr[E], \Pr[E'] \geq 1-\delta$ and for every $\alpha > 1$ we have the following bound in terms of the R\'enyi divergence $D_\alpha(\cdot || \cdot)$ of order $\alpha$
\[ 
D_\alpha(M(x)|_E || M(x')|_{E'} ) \leq \alpha \rho +\xi \qquad \text{ and } \qquad D_\alpha(M(x')|_{E'} || M(x)|_{E} ) \leq \alpha \rho + \xi.
\]
where $M(x)|_E$ is the distribution of $M(x)$ conditioned on event $E$ and similarly for $M(x') |_{E'}$,  If $\delta = 0$, then we say $M$ is $(\xi,\rho)$-zCDP and if $\xi = 0$ we write $\rho$-zCDP.
\label{defn:zCDP}
\end{definition}

We then have the following connection between CDP and zCDP.
\begin{lemma}[\citet{BunSt16}]\label{lem:CDPtozCDP}
If $M$ is $(\mu,\tau)$-CDP, then $M$ is also $(\mu - \tau^2/2, \tau^2/2)$-zCDP.
\end{lemma}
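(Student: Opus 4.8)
The plan is to identify the R\'enyi divergence of order $\alpha$ with the moment generating function of the privacy loss random variable, and then invoke the two defining inequalities of $(\mu,\tau)$-CDP with the parameter choice $\lambda = \alpha-1$.

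First I would fix a pair of neighbors $x,x'$ and write $L \defeq L_{M(x)||M(x')}$ and $\mu' \defeq \E[L]$, where $\mu' \le \mu$ by the first part of the CDP condition. For any $\alpha > 1$ I would rewrite the R\'enyi divergence as
\[
D_\alpha(M(x)\,||\,M(x')) = \frac{1}{\alpha-1}\log \E_{y \sim M(x)}\!\left[\left(\frac{\Pr[M(x)=y]}{\Pr[M(x')=y]}\right)^{\!\alpha-1}\right] = \frac{1}{\alpha-1}\log \E\!\left[e^{(\alpha-1)L}\right],
\]
which is just the standard expression of the R\'enyi divergence in terms of the privacy loss.

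Next I would center the exponent, $\E[e^{(\alpha-1)L}] = e^{(\alpha-1)\mu'}\,\E[e^{(\alpha-1)(L-\mu')}]$, and apply the sub-Gaussian MGF bound in the definition of CDP with $\lambda = \alpha-1$ (legitimate since that bound is quantified over all $\lambda \in \R$ and $\alpha>1$), giving $\E[e^{(\alpha-1)(L-\mu')}] \le e^{(\alpha-1)^2\tau^2/2}$. Taking logarithms and dividing by $\alpha-1$ yields
\[
D_\alpha(M(x)\,||\,M(x')) \le \mu' + \frac{(\alpha-1)\tau^2}{2} \le \mu + \frac{(\alpha-1)\tau^2}{2} = \alpha\cdot\frac{\tau^2}{2} + \left(\mu - \frac{\tau^2}{2}\right),
\]
which is exactly the bound $D_\alpha \le \alpha\rho + \xi$ with $\rho = \tau^2/2$ and $\xi = \mu - \tau^2/2$. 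Since the neighbor relation is symmetric, running the identical argument with $x$ and $x'$ interchanged gives $D_\alpha(M(x')\,||\,M(x)) \le \alpha\rho + \xi$ as well, and because no probability mass was truncated, the events $E,E'$ in Definition~\ref{defn:zCDP} can be taken to be the whole outcome space, so the conclusion holds with $\delta=0$; that is, $M$ is $(\mu - \tau^2/2,\ \tau^2/2)$-zCDP.

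There is no real obstacle here: the argument is a one-line MGF manipulation. The only points requiring any care are making the R\'enyi-divergence-to-MGF identity precise in whatever measure-theoretic setting the privacy loss is defined, and confirming that the substitution $\lambda = \alpha - 1$ is admissible in the CDP bound, both of which are routine.
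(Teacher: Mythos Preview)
The paper does not actually prove this lemma; it is simply quoted from \citet{BunSt16} without argument. Your proposal is correct and is precisely the standard Bun--Steinke argument: identify $D_\alpha(M(x)\|M(x'))$ with $\tfrac{1}{\alpha-1}\log \E[e^{(\alpha-1)L}]$, center by the mean, apply the sub-Gaussian MGF bound at $\lambda=\alpha-1$, and rearrange $\mu + (\alpha-1)\tau^2/2 = \alpha\cdot\tau^2/2 + (\mu-\tau^2/2)$. There is nothing to compare against in the present paper, and your derivation stands on its own.
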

Note that we have the following connection between zCDP and DP.
\begin{lemma}[\citet{BunSt16}] 
If $M$ is $\rho$-zCDP then $M$ is also $(\rho + 2 \sqrt{\rho \log(1/\delta)} , \delta)$-DP for any $\delta >0$.
\end{lemma}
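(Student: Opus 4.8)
The plan is to route everything through the privacy loss random variable $L \defeq L_{M(x)||M(x')}$ and its moment generating function. First I would specialize Definition~\ref{defn:zCDP} with $\xi = 0$ and $\delta = 0$: $\rho$-zCDP says that for every pair of neighbors $x\sim x'$ and every order $\alpha > 1$ we have $D_\alpha(M(x)||M(x')) \leq \alpha\rho$ (and symmetrically with $x,x'$ swapped). Writing the R\'enyi divergence out, $D_\alpha(M(x)||M(x')) = \frac{1}{\alpha-1}\log\E\!\big[e^{(\alpha-1)L}\big]$, so with $\lambda \defeq \alpha - 1 > 0$ this is exactly a bound on the moment generating function of $L$, namely $\E[e^{\lambda L}] \leq e^{\lambda(\lambda+1)\rho} = e^{\lambda^2\rho + \lambda\rho}$.

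Next I would turn this sub-exponential MGF bound into a tail bound by a Chernoff argument. Set $\diffp \defeq \rho + 2\sqrt{\rho\log(1/\delta)}$; since $\delta\in(0,1)$ we have $\diffp > \rho$. Markov's inequality applied to $e^{\lambda L}$ gives, for any $\lambda > 0$, $\Pr[L > \diffp] \leq e^{-\lambda\diffp}\,\E[e^{\lambda L}] \leq \exp\!\big(\lambda^2\rho - \lambda(\diffp-\rho)\big)$. The exponent is a convex quadratic in $\lambda$ minimized at $\lambda = (\diffp-\rho)/(2\rho) > 0$ — an admissible choice, corresponding to $\alpha = 1+\lambda > 1$ — and at this value the exponent equals $-(\diffp-\rho)^2/(4\rho) = -\log(1/\delta)$. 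Hence $\Pr[L > \diffp] \leq \delta$, and the identical computation with $x$ and $x'$ exchanged gives $\Pr[L_{M(x')||M(x)} > \diffp] \leq \delta$.

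Finally I would invoke the standard bridge from a tail bound on the privacy loss to approximate DP. For any measurable $S \subseteq \cY$, decompose $\Pr[M(x)\in S]$ according to whether $L \leq \diffp$ or $L > \diffp$: on the first event the density ratio $M(x)/M(x')$ is at most $e^\diffp$, so that contribution is at most $e^\diffp\Pr[M(x')\in S]$, while the second contribution is at most $\Pr[L > \diffp] \leq \delta$. This yields $\Pr[M(x)\in S] \leq e^\diffp\Pr[M(x')\in S] + \delta$ for all neighbors and all $S$, i.e. $M$ is $(\diffp,\delta)$-DP, which is the claim.

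I do not expect a genuine obstacle here: the only care needed is (i) interpreting the ``privacy loss'' as a log Radon–Nikodym derivative when $\cY$ is an arbitrary (possibly continuous) outcome space, so that the decomposition in the last step is valid, and (ii) checking the optimizing $\lambda$ lies in the admissible range $\lambda > 0$, which holds precisely because $\delta < 1$ forces $\diffp > \rho$. The substance of the argument is entirely in the first two steps — correctly rewriting the R\'enyi bound as a sub-exponential MGF bound and then optimizing the Chernoff parameter.
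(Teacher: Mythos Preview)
Your proposal is correct and is precisely the standard argument from \citet{BunSt16} (their Lemma~3.5): rewrite the R\'enyi-divergence bound as an MGF bound on the privacy loss, apply a Chernoff/Markov tail bound optimized over $\lambda>0$, and then use the usual ``decompose on $\{L\le\diffp\}$'' step to pass to $(\diffp,\delta)$-DP. The paper does not give its own proof of this lemma---it is stated with a citation only---so there is nothing further to compare; your computation of the optimizing $\lambda=(\diffp-\rho)/(2\rho)$ and the resulting exponent $-(\diffp-\rho)^2/(4\rho)=-\log(1/\delta)$ is exactly the calculation in the cited reference.
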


\section{Set-wise Adaptive Composition \label{sec:concentration}} 

We now present a general way to interact with a privacy system that allows the analyst lots of freedom in selecting different private mechanisms.  We unify the variants of DP (BR, DP, CDP) by analyzing with CDP, since it is the weakest of the three.  We allow the analyst to select CDP parameters from a set $\cE \defeq \{ (\mu_i, \tau_i) : i \in [k] \}$ at each round, but the order need not be predetermined. The analyst selects CDP parameters from $\cE$ without replacement so that after $k$ rounds each parameter has been used.   This extra freedom to the analyst was not considered in previous composition bounds and stops short of the complete freedom of adaptive parameter selection in \emph{privacy odometers} \cite{RogersRoUlVa16}.  Allowing an analyst to adaptively select the order of parameters arises naturally in practice.  Consider a setting where an analyst does heavy hitter selection (with BR mechanisms) and then decides whether they want counts as well with Laplace noise (DP), with Gaussian noise (CDP) to do some post processing (see later section that describes such a post processing mechanism), or simply a new heavy hitter query. 

We now detail the experiment between the analyst and the private mechanisms.  We start with the analyst selecting CDP parameters $(\mu_1,\tau_1)$ and then a CDP mechanism with those parameters, while also updating $\cE \gets \cE \setminus \{(\mu_1,\tau_1)\}$.  Similar to other works on privacy loss composition, the analyst may then select neighboring datasets $x_1^{(0)}, x_1^{(1)}$.  Once the analyst sees the outcome from the selected mechanism evaluated on the (unknown) $b \in \{0,1\}$ dataset, the analyst gets to adaptively select CDP parameters from the remaining set $\cE$, a corresponding mechanism, and neighboring datasets $x_2^{(0)}, x_2^{(1)}$.  We then delete the selected privacy parameters and continue.  The interaction proceeds until $\cE = \emptyset$.  The information that is hidden from the analyst is the bit $b \in \{0,1 \}$ that determines which of the neighboring datasets is used at each round.

We refer to this protocol between the analyst and the privacy system as the \emph{$\cE$-set-wise adaptive composition experiment $b$}.  As in \citet{DworkRoVa10}, we refer to $V^{(b)}$ as the \emph{view} of the analyst in the $\cE$-set-wise adaptive composition experiment $b$.  Hence, we want to be able to show that the two experiments with $b$ and $1-b$ are similar.\footnote{As presented, we do not allow for the analyst to randomize over different choices of mechanisms at each round in the experiment.  Hence, the analyst deterministically selects the next mechanism based on the previous outcomes.  However, the differential privacy guarantees for the experiment would not change if we allow the analyst to randomize over choices of mechanisms at each round, as was shown in Lemma~4.3 in \cite{DongDuRo19}}  

\begin{definition}
 We say that $\cE$ is $(\diffp_g,\delta)$-\emph{differentially private under set-wise adaptive composition} if for any outcome set $S$ for the $\cE$-set-wise adaptive composition experiment $b$, we have $\Pr[V^{(b)} \in S ] \leq e^{\diffp_g} \Pr[V^{(1-b)} \in S] + \delta$.
\end{definition}

In order to provide a privacy bound, we start with a general result that follows from \citet{HowardRaMcSe18}, but we include a proof in Appendix~\ref{app:ConcentrationProof} for completeness.  
\begin{lemma}\label{lem:concentration}
Let $(\Omega, \cF, \Pr)$ be a probability triple where $\emptyset \defeq \cF_0\subseteq \cF_1, \subseteq \cdots \subseteq \cF$ be an increasing sequence of $\sigma$-algebras.  Let $X_i$ be a real-valued $\cF_i$-measurable random variable 
and $X_0 = 0$.  Assume that there exists random variables $B_i$ that are $\cF_{i-1}$ measurable for $i \geq 1$ and $B_0 = 0$ such that $\sum_{i = 0}^k B_i^2 \leq b^2$ a.s. for some constant $b$.  If for all $\lambda > 0$, we have
$
\E[ e^{\lambda X_i} \mid \cF_{i-1}] \leq e^{\lambda^2 B_i^2/2 } \text{ a.s. } \forall i,
$
then we have for all $\beta > 0 $
\[
\Pr\left[\sum_{ i = 1}^k  X_i \geq \beta \right] \leq e^{\tfrac{-\beta^2}{2b^2}}.
\]
\end{lemma}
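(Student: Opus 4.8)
The plan is to run the standard exponential-supermartingale (Chernoff) argument, adapted so that the \emph{random} variance proxy $\sum_i B_i^2$ is controlled by its almost-sure upper bound $b^2$. Throughout I may assume $b>0$; the degenerate case $b=0$ forces $B_i\equiv 0$, hence $\E[e^{\lambda X_i}\mid\cF_{i-1}]\le 1$ for every $\lambda>0$, which forces $X_i\le 0$ a.s., so $\sum_i X_i\le 0<\beta$ and both sides vanish.

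First, fix $\lambda>0$ and define the process $M_0\defeq 1$ and, for $1\le j\le k$,
\[
M_j \defeq \exp\left( \lambda \sum_{i=1}^{j} X_i - \frac{\lambda^2}{2}\sum_{i=1}^{j} B_i^2 \right).
\]
I would show $(M_j)_{j\ge 0}$ is a supermartingale with respect to $(\cF_j)_{j\ge 0}$. Since $B_j$ is $\cF_{j-1}$-measurable, the factor $\exp(-\lambda^2 B_j^2/2)$ pulls out of the conditional expectation, giving
\[
\E[M_j\mid \cF_{j-1}] = M_{j-1}\cdot e^{-\lambda^2 B_j^2/2}\,\E\!\left[e^{\lambda X_j}\mid \cF_{j-1}\right] \le M_{j-1},
\]
where the inequality is exactly the hypothesis $\E[e^{\lambda X_i}\mid\cF_{i-1}]\le e^{\lambda^2 B_i^2/2}$. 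Taking expectations along the chain yields $\E[M_k]\le \E[M_0]=1$.

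Next I would convert a tail event for $\sum_i X_i$ into one for $M_k$. On the event $\{\sum_{i=1}^k X_i\ge\beta\}$, using $\sum_{i=1}^{k} B_i^2\le b^2$ almost surely, we get
\[
M_k = \exp\left( \lambda\sum_{i=1}^k X_i - \frac{\lambda^2}{2}\sum_{i=1}^k B_i^2\right) \ge \exp\left(\lambda\beta - \frac{\lambda^2}{2}b^2\right).
\]
Applying Markov's inequality to the nonnegative random variable $M_k$ then gives
\[
\Pr\left[\sum_{i=1}^k X_i\ge\beta\right] \le \Pr\left[M_k\ge e^{\lambda\beta-\lambda^2 b^2/2}\right] \le e^{-\lambda\beta+\lambda^2 b^2/2}\,\E[M_k] \le e^{-\lambda\beta+\lambda^2 b^2/2}.
\]
Finally, optimizing the exponent over $\lambda>0$ with the choice $\lambda=\beta/b^2$ produces $-\lambda\beta+\lambda^2 b^2/2 = -\beta^2/(2b^2)$, which is the claimed bound.

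The one genuinely delicate point is the interplay between the random quantity $\sum_i B_i^2$ and the deterministic $b^2$: the predictability assumption $B_i\in\cF_{i-1}$ is precisely what makes the supermartingale step work, and the a.s.\ bound $\sum_i B_i^2\le b^2$ is what lets us replace the random exponent by a deterministic one on the relevant event \emph{before} invoking Markov (rather than needing a uniform bound on the $X_i$'s themselves). Everything else is routine; this is essentially the discrete-time specialization of the line-crossing inequalities of \citet{HowardRaMcSe18}.
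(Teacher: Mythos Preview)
Your proof is correct and follows essentially the same exponential-supermartingale/Chernoff argument as the paper: define $M_j=\exp(\lambda\sum_{i\le j}X_i-\tfrac{\lambda^2}{2}\sum_{i\le j}B_i^2)$, use the conditional sub-Gaussian hypothesis together with $B_j\in\cF_{j-1}$ to get the supermartingale property, bound $\E[M_k]\le 1$, invoke the a.s.\ bound $\sum_i B_i^2\le b^2$, apply Markov, and optimize $\lambda=\beta/b^2$. The only cosmetic difference is that the paper first passes through the intermediate inequality $\E[\exp(\lambda\sum_i X_i)]\le e^{\lambda^2 b^2/2}$ before Markov, whereas you apply Markov directly to $M_k$ and lower-bound it on the tail event; the content is identical.
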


Many concentration bounds rely on a subgaussian bound.  This particular result allows us to get a subgaussian bound on the sum of mean zero terms when we know a bound on the sum of subgaussian parameters for each term and each individual subgaussian parameter need not be a fixed constant.  This allows the analyst to adaptively select different classes of mechanisms at each round, where the subgaussian parameter changes based on the class that is selected.

We then define the privacy loss random variable, which we will bound with high probability to get a bound on the overall privacy loss.  At each round $i$, the analyst has all the necessary information from the previous outcomes to decide on which neighboring datasets to use, which class of mechanism to select, which of the remaining privacy parameters to choose, and which specific mechanism $M_i$ to pick.  The previous outcomes and choices of the analyst up to round $i$ are random variables from the sigma algebra $\cF_{i-1}$. The selected randomized mechanism $M_i$ takes the input to some outcome set $\cY_i$ and for each $y_i \in \cY_i$ we define the following for neighboring datasets $x,x'$
\[ 
L_i(y_i) \defeq \log\left( \frac{\Pr[M_i(x) = y_i \mid \cF_{i-1}]}{\Pr[M_i(x') = y_i \mid \cF_{i-1}]} \right) \qquad L_i \defeq L_i(Y_i) \text{ where } Y_i \sim M_i(x) \mid \cF_{i-1}.
\]
We then consider the full privacy loss over the entire $\cE$-set-wise adaptive composition experiment, $\sum_{i=1}^k L_i$.  Similar to \citet{DworkRoVa10}, we aim to bound the accumulated privacy loss with high probability, i.e. $\Pr[ \sum_{i=1}^k L_i \geq \diffp_g ] \leq \delta$, so that the $\cE$-set-wise adaptive composition experiment is $(\diffp_g, \delta)$-DP.

We then want to use Lemma~\ref{lem:concentration}, so we define the following random variables 
\begin{equation}
X_i = L_i - \E[L_i \mid \cF_{i-1}].
\label{eq:Xmartingale}
\end{equation}
Next, we need bounds on $\E[L_i \mid \cF_{i-1}]$ and on the subgaussian parameter for each $X_i$.  

\begin{lemma}\label{lem:subgaussBounds}
For $X_i$ given in \eqref{eq:Xmartingale}, if the analyst selects an $\diffp$-DP mechanism $M_i$ given $\cF_{i-1}$, then for all $\lambda \in \R$ we have,
\[
\E[\exp(\lambda X_i) \mid \cF_{i-1}] \leq e^{\lambda^2 \diffp^2 /2} \qquad \& \qquad \E[L_i \mid \cF_{i-1} ] \leq \diffp \left( \frac{e^{\diffp} - 1}{e^{\diffp} + 1} \right).
\]
If the analyst selects an $\alpha$-BR mechanism $\cM_i$ given $\cF_{i-1}$ then for all $\lambda \in \R$ we have

\[
\E[\exp(\lambda X_i) \mid \cF_{i-1}] \leq e^{\lambda^2 \alpha^2 /8} \qquad \& \qquad \E[L_i \mid \cF_{i-1} ] \leq \frac{\alpha}{e^\alpha - 1} - 1 - \log\left( \frac{\alpha}{e^\alpha - 1} \right) .
\]

If the analyst selects a $(\mu,\tau)$-CDP mechanism $M_i$ given $\cF_{i-1}$, then for all $\lambda \in \R$, we have,
\[
\E[\exp(\lambda X_i) \mid \cF_{i-1}] \leq e^{\lambda^2 \tau^2 /2} \qquad \& \qquad \E[L_i \mid \cF_{i-1} ] \leq \mu.
\]
\end{lemma}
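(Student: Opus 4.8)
The plan is to handle the three cases separately, noting first that once we condition on $\cF_{i-1}$ every choice the analyst makes at round $i$ is determined, so $L_i$ is simply the ordinary privacy-loss random variable $L_{M_i(x)\|M_i(x')}$ of a fixed mechanism $M_i$ on a fixed neighbouring pair $x,x'$, and the hypothesis in force ($\diffp$-DP, $\alpha$-BR, or $(\mu,\tau)$-CDP) applies directly to it. The CDP case then requires no work: with $X_i = L_i - \E[L_i\mid\cF_{i-1}]$, the two claimed bounds $\E[L_i\mid\cF_{i-1}]\le\mu$ and $\E[\exp(\lambda X_i)\mid\cF_{i-1}]\le e^{\lambda^2\tau^2/2}$ are precisely the two inequalities in the definition of $(\mu,\tau)$-CDP, evaluated at $M_i$ and $(x,x')$.

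For the moment-generating-function bounds in the DP and BR cases, the point is that $L_i$ lies in a short interval almost surely. If $M_i$ is $\diffp$-DP then, since $\delta=0$ forces $M_i(x)$ and $M_i(x')$ to share null sets, the density ratio exists and lies in $[e^{-\diffp},e^{\diffp}]$, so $L_i\in[-\diffp,\diffp]$, an interval of width $2\diffp$; if $M_i$ is $\alpha$-BR then the third clause of the bounded-range lemma of \citet{DurfeeRo19} gives $t=t(x,x')\in[0,\alpha]$ with $L_i\in[t-\alpha,t]$, an interval of width $\alpha$. In each case $X_i$ is centred and supported on a translate of that interval, so Hoeffding's lemma ($\E[e^{\lambda(Z-\E Z)}]\le e^{\lambda^2(b-a)^2/8}$ whenever $Z\in[a,b]$, for every real $\lambda$) yields $\E[\exp(\lambda X_i)\mid\cF_{i-1}]\le e^{\lambda^2(2\diffp)^2/8}=e^{\lambda^2\diffp^2/2}$ and $\le e^{\lambda^2\alpha^2/8}$ respectively.

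The only substantive step is the bound on the conditional mean $\E[L_i\mid\cF_{i-1}]$, which equals the KL divergence $D(M_i(x)\|M_i(x'))$. Writing $P=M_i(x)\mid\cF_{i-1}$, $Q=M_i(x')\mid\cF_{i-1}$, and $L(y)=\log(P(y)/Q(y))$, boundedness of $L$ gives $P\ll Q$ and hence the single moment identity $\E_{y\sim P}[e^{-L(y)}]=1$, while $L$ takes values in a closed interval $[a,a+w]$ with $(a,w)=(-\diffp,2\diffp)$ in the DP case and $(a,w)=(t-\alpha,\alpha)$, $a\in[-\alpha,0]$, in the BR case. Since $\E[L]$ is a linear functional of the law of $L$ on the compact convex set of probability laws on $[a,a+w]$ cut out by the one linear constraint $\E[e^{-L}]=1$, it is maximised at an extreme point, i.e.\ a law on at most two points; and since $e^{-y}$ is strictly convex, the prescribed value $1$ of $\E[e^{-L}]$ is compatible with the largest $\E[L]$ only on the chord joining the endpoints, so the maximiser is supported on $\{a,a+w\}$ --- in the DP case this is exactly the worst-case randomised response $\grr{2\diffp,\diffp}$. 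Parametrising by the endpoint weights, the moment identity determines $a$ and leaves a one-variable optimisation: for DP (fixed $a=-\diffp$, $w=2\diffp$) one gets $\E[L]=\diffp(e^{\diffp}-1)/(e^{\diffp}+1)$ directly, and for BR the first-order condition reads $\Pr[L=a]+\Pr[L=a+\alpha]\,e^{-\alpha}=(1-e^{-\alpha})/\alpha$ (which keeps $a\in[-\alpha,0]$ because $e^{-\alpha}\le(1-e^{-\alpha})/\alpha\le 1$), and substituting back and simplifying gives $\tfrac{\alpha}{e^{\alpha}-1}-1-\log\bigl(\tfrac{\alpha}{e^{\alpha}-1}\bigr)$.

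I expect the main obstacle to be making the two-point reduction fully rigorous --- that the extreme points of the constraint set are endpoint-supported two-point laws and that the linear objective is attained there --- together with verifying that the optimal endpoints stay in the admissible range in the BR case. An equivalent and arguably cleaner route avoids extreme points via a Lagrangian/tangent-line bound: for $d>0$, $\E[L]=\E\!\left[L+d(e^{-L}-1)\right]\le\max_{\ell\in[a,a+w]}\{\ell+d(e^{-\ell}-1)\}$, and since $\ell\mapsto\ell+d(e^{-\ell}-1)$ is convex one tunes $d$ so its values at the two endpoints coincide; this reproduces $\diffp(e^{\diffp}-1)/(e^{\diffp}+1)$ for DP and the stated bound for BR. Everything else is either a direct appeal to a definition or a textbook application of Hoeffding's lemma.
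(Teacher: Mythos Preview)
Your proposal is correct. For the CDP case and for the sub-Gaussian (MGF) bounds in the DP and BR cases, your argument matches the paper's exactly: invoke the definition of $(\mu,\tau)$-CDP, and apply Hoeffding's lemma to the bounded privacy loss (interval width $2\diffp$ for DP, $\alpha$ for BR).

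Where you depart from the paper is in the bounds on the conditional mean $\E[L_i\mid\cF_{i-1}]$ for the DP and BR cases. The paper does not derive these at all; it simply cites \citet{KairouzOhVi17} for the DP bound $\diffp(e^{\diffp}-1)/(e^{\diffp}+1)$ and \citet{DongDuRo19} for the BR bound $\tfrac{\alpha}{e^{\alpha}-1}-1-\log\bigl(\tfrac{\alpha}{e^{\alpha}-1}\bigr)$. You instead give a self-contained argument via the moment identity $\E_P[e^{-L}]=1$ together with a two-point reduction (or equivalently the convex tangent-line bound $\E[L]\le\max_{\ell\in[a,a+w]}\{\ell+d(e^{-\ell}-1)\}$ with $d$ tuned to equalise endpoints). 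Your calculations check out: in the DP case the endpoint-supported law is exactly $\grr{2\diffp,\diffp}$, and in the BR case solving $(1-p)+pe^{-\alpha}=(1-e^{-\alpha})/\alpha$ and substituting back indeed gives the stated expression, with the optimal $a=\log\bigl((1-e^{-\alpha})/\alpha\bigr)\in[-\alpha,0]$ as you note. The upshot is that your version is longer but self-contained, whereas the paper's is a two-line appeal to prior literature; both are valid, and yours has the merit of exposing \emph{why} randomized response and its BR analogue are the extremal mechanisms for the KL bound.
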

\begin{proof}
The statement about the CDP mechanism is simply due to the definition of $(\mu,\tau)$-CDP \cite{DworkRo16}.   The expectations of the privacy losses for $\diffp$-DP and $\alpha$-BR are from \cite{KairouzOhVi17} and \cite{DongDuRo19}, respectively.  From \citet{Hoeffding63}, we know that for bounded random variables $X_i \in [a,b]$ that $X$ is subgaussian with parameter $(b-a)^2/4$.  Hence, for $\diffp$-DP, we have $b- a \leq 2 \diffp$ and for $\alpha$-BR, we have $b-a \leq \alpha$.
\end{proof}

Note that the \emph{advanced composition} bound from \cite{DworkRoVa10} uses Azuma's inequality for a concentration bound on the privacy loss and \cite{DurfeeRo19} uses the more general Azuma-Hoeffding bound.  Here, we use Lemma~\ref{lem:concentration} to get a bound on the privacy loss in a more general setting.

\begin{lemma}\label{lem:genbound}
Let $\cE = \{ (\mu_i,\tau_i ) : i \in [k] \} \}$.  The set $\cE$ is $(\diffp_g, \delta)$-differentially private under set-wise adaptive composition for any $\delta >0$, where
\begin{equation}
\diffp_g =  \sum_{i \in [k] }\mu_i + \sqrt{ 2 \sum_{i \in [k]} \tau_i^2  \log(1/\delta)}.
\label{eq:setWiseCompEps}
\end{equation}
\end{lemma}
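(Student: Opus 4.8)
The plan is to combine the subgaussian bounds from Lemma~\ref{lem:subgaussBounds} with the concentration inequality in Lemma~\ref{lem:concentration}, applied to the martingale difference sequence $X_i = L_i - \E[L_i \mid \cF_{i-1}]$. First I would set up the filtration: let $\cF_i$ be the $\sigma$-algebra generated by the analyst's view after round $i$ (the outcomes $Y_1,\dots,Y_i$ together with all of the analyst's choices, which are $\cF_{i-1}$-measurable by the deterministic-selection convention). Since the analyst selects the privacy parameters from $\cE$ without replacement, after $k$ rounds every $(\mu_i,\tau_i)$ has been consumed exactly once; the indices are permuted by an $\cF$-adapted process, but the \emph{multiset} of subgaussian parameters used is exactly $\{\tau_i^2 : i \in [k]\}$ regardless of the order.

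Next, for each round, whatever class of mechanism the analyst picks (pure DP, BR, or CDP), Lemma~\ref{lem:subgaussBounds} gives $\E[e^{\lambda X_i}\mid \cF_{i-1}] \le e^{\lambda^2 B_i^2/2}$ where $B_i$ is $\cF_{i-1}$-measurable: for a $\diffp$-DP mechanism $B_i = \diffp$, for an $\alpha$-BR mechanism $B_i = \alpha/2$, and for a $(\mu,\tau)$-CDP mechanism $B_i = \tau$. Crucially, every $\diffp$-DP mechanism is $(\tfrac12\diffp^2,\ldots)$-CDP-like in the sense that its privacy loss is subgaussian with parameter $\diffp^2$ and has mean at most $\diffp$; similarly an $\alpha$-BR mechanism behaves like CDP with $\tau = \alpha/2$. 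Since the pre-registered set $\cE$ lists the CDP parameters $(\mu_i,\tau_i)$, and each chosen mechanism — after the BR-to-CDP and DP-to-CDP reductions — has subgaussian parameter bounded by the $\tau_i^2$ assigned to that round and conditional mean privacy loss bounded by $\mu_i$, we get $\sum_{i=0}^k B_i^2 \le \sum_{i\in[k]} \tau_i^2 =: b^2$ almost surely, and $\sum_{i=1}^k \E[L_i\mid\cF_{i-1}] \le \sum_{i\in[k]} \mu_i$ almost surely. Now apply Lemma~\ref{lem:concentration} with $\beta = \sqrt{2 b^2 \log(1/\delta)}$ to obtain
\[
\Pr\!\left[\sum_{i=1}^k X_i \ge \sqrt{2\textstyle\sum_i \tau_i^2 \,\log(1/\delta)}\right] \le \delta.
\]
On the complementary event, $\sum_{i=1}^k L_i \le \sum_{i=1}^k \E[L_i\mid\cF_{i-1}] + \sqrt{2\sum_i \tau_i^2\log(1/\delta)} \le \sum_i \mu_i + \sqrt{2\sum_i \tau_i^2\log(1/\delta)} = \diffp_g$, so $\Pr[\sum_i L_i \ge \diffp_g] \le \delta$.

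Finally I would convert the high-probability privacy-loss bound into an $(\diffp_g,\delta)$-DP statement for the experiment. This is the standard argument (as in \citet{DworkRoVa10}): define the ``bad'' set $B = \{v : L_{V^{(b)}\|V^{(1-b)}}(v) > \diffp_g\}$; the tail bound just proved shows $\Pr[V^{(b)} \in B] \le \delta$, and for any outcome set $S$ one splits $\Pr[V^{(b)}\in S] = \Pr[V^{(b)} \in S\setminus B] + \Pr[V^{(b)}\in S\cap B] \le e^{\diffp_g}\Pr[V^{(1-b)}\in S] + \delta$. One subtlety worth spelling out is that the martingale privacy loss $\sum_i L_i$ for the \emph{composed} experiment equals the log-likelihood ratio of the full views $V^{(b)}$ versus $V^{(1-b)}$ — this is a telescoping/chain-rule identity for the joint density over $(Y_1,\dots,Y_k)$, and it requires that the analyst's choices at round $i$ are a deterministic function of $\cF_{i-1}$, which the setup guarantees.

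The step I expect to be the main obstacle is the bookkeeping around the \emph{without-replacement, adaptively-ordered} selection of parameters: one must argue carefully that no matter which permutation of $\cE$ the (adaptive) analyst induces, the sum of subgaussian parameters actually used is the fixed quantity $\sum_i \tau_i^2$, and likewise for the means — i.e. that the adaptive ordering cannot inflate $b^2$. This is where the $\cF_{i-1}$-measurability of $B_i$ in Lemma~\ref{lem:concentration} is essential: it lets the bound $B_i$ depend on the analyst's (adaptive) choice at round $i$ while still summing to the a.s.\ constant $b^2$. Everything else — the three subgaussian cases, the Chernoff-style tail, and the conversion to DP — is routine given the lemmas already proved.
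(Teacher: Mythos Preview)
Your proposal is correct and follows exactly the approach the paper sets up (the paper does not spell out a proof of this lemma, but the surrounding text makes clear the intended argument is precisely: form $X_i = L_i - \E[L_i\mid\cF_{i-1}]$, invoke Lemma~\ref{lem:subgaussBounds} for the subgaussian parameter, observe that without-replacement selection forces $\sum_i B_i^2 = \sum_i \tau_i^2$ almost surely, apply Lemma~\ref{lem:concentration}, and finish with the standard \citet{DworkRoVa10} conversion). One small remark: as stated, Lemma~\ref{lem:genbound} is purely about a set $\cE$ of CDP parameters, so the DP and BR cases you discuss are not needed here---that specialization is exactly what the paper does \emph{after} the lemma to obtain \eqref{eq:setWiseCompEps2}---but including them does no harm and your identification of the without-replacement bookkeeping as the only real subtlety is spot on.
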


Consider the setting where the analyst is allowed to select $m_{\texttt{DP}}$ pure-DP mechanisms, $m_{\texttt{BR}}$ BR mechanisms, and $m_{\texttt{CDP}}$ concentrated DP mechanisms \cite{DworkRo16}.  Furthermore, there are preregistered privacy parameters $\cE^{\texttt{DP}} = \{\diffp_i = \diffp >0 : i \in [m_{\texttt{DP}}] \}$ for pure DP mechanisms, $\cE^{\texttt{BR}} = \{\alpha_i = \alpha>0 : i \in [m_{\texttt{BR}}] \}$ for BR mechanisms, and $\cE^{\texttt{CDP}} = \{(\mu_i,\tau_i) = (\mu,\tau) > (0,0) : i \in [m_{\texttt{CDP}}]\}$ for CDP mechanisms.  We allow the analyst to adaptively select the class of mechanism at each round $i$ adaptively and to also select the privacy parameter from the corresponding class of mechanisms, without replacement.  We then apply Lemma~\ref{lem:genbound} and Lemma~\ref{lem:concentration} to replace the formula in \eqref{eq:setWiseCompEps} to get
\begin{align}
\diffp_g & =  m_{\texttt{DP}}\diffp \left( \frac{e^\diffp - 1}{e^\diffp + 1} \right) + m_{\texttt{BR}} \left(\frac{\alpha}{e^\alpha - 1} - 1 - \log\left( \frac{\alpha}{e^\alpha - 1} \right) \right) + m_{\texttt{CDP}} \mu \nonumber \\
& \qquad + \sqrt{ 2 \left(  m_{\texttt{DP}} \diffp^2 + \frac{m_{\texttt{BR}}}{4}\alpha^2 + m_{\texttt{CDP}} \tau^2 \right) \log(1/\delta)}.
\label{eq:setWiseCompEps2}
\end{align}

Note that when $m_{\texttt{DP}} = k$ , we get the traditional \emph{advanced} composition bound from \citet{DworkRoVa10} with refinement from \citet{KairouzOhVi17} and when $m_{\texttt{BR}} = k$, we get the bound for composing $\diffp$-BR mechanisms from Corollary 3.1 in \citet{DongDuRo19}.  In Figure~\ref{fig:AdvancedCompBRDP}, we present the family of curves from \eqref{eq:setWiseCompEps2} for different values of $m=m_{\texttt{DP}}$, i.e. the number of $\diffp$-DP mechanisms, $k-m = m_{\texttt{BR}}$ while fixing $m_{\texttt{CDP}} = 0$, and compare it with the optimal DP composition bound from Theorem~\ref{thm:opt_DPCOMP}.  Note that applying the optimal DP bound is almost the same as using the given formula when half are BR mechanisms and the remaining half are DP.

\begin{figure}[h]
\centering
\includegraphics[width=0.48\textwidth]{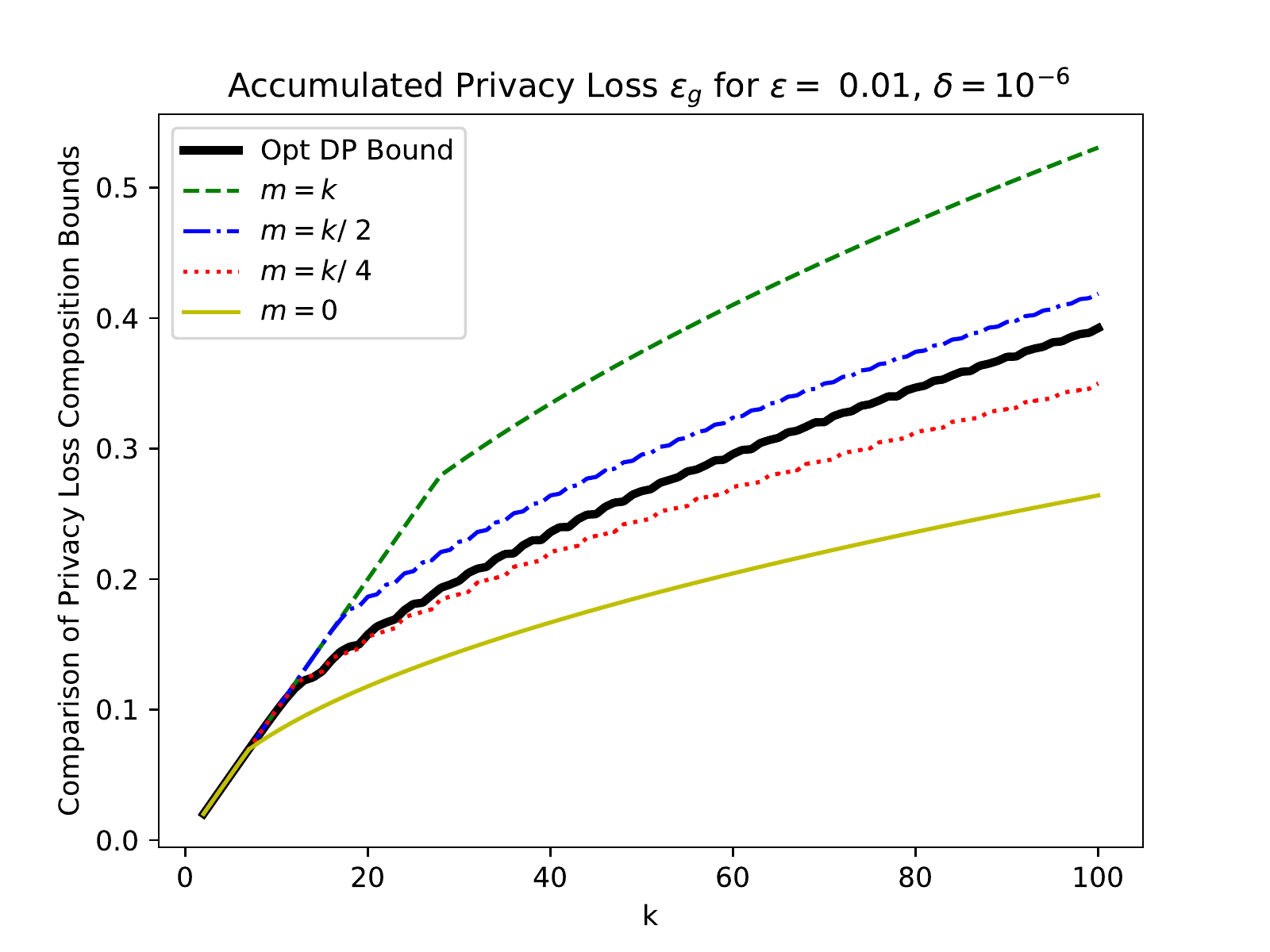}
\includegraphics[width=0.48\textwidth]{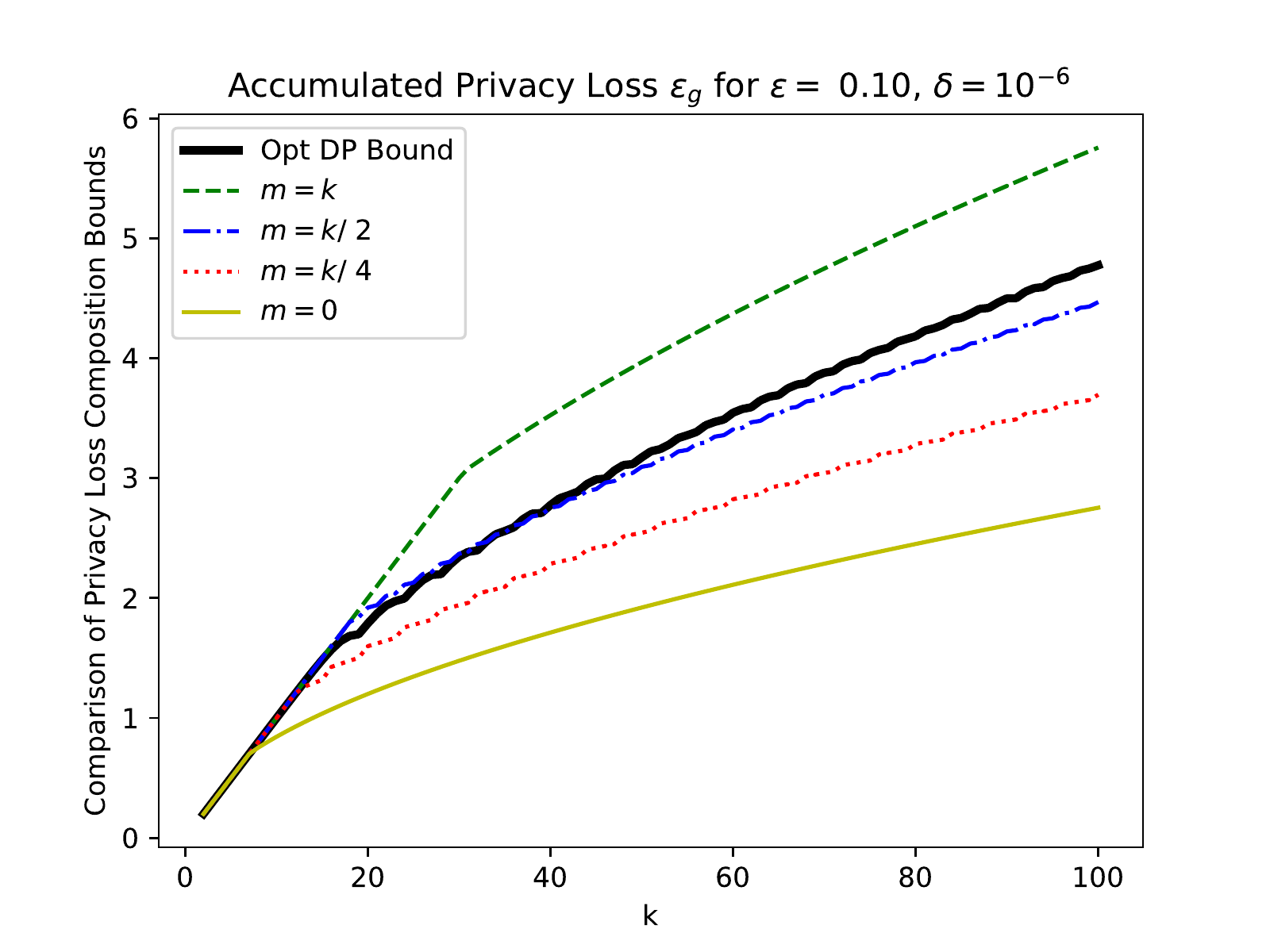}
\caption{We plot the bounds on $\diffp_g$ from \eqref{eq:setWiseCompEps2} for $m$ many $\diffp$-DP mechanisms, of which $k-m$ are $\diffp$-BR.  We also give the comparison to the optimal DP bound from \citet{KairouzOhVi17}.  Note that these bounds apply even when mechanisms can be adaptively selected at each round.  \label{fig:AdvancedCompBRDP}}
\end{figure}

We can also do a similar analysis when all the mechanisms are zCDP (Definition~\ref{defn:zCDP}), rather than CDP, which will be useful when we use the approximate ($\delta>0$) version of zCDP in Section~\ref{sec:GM}.  We modify the above experiment by using the set $\cE_{zCDP} = \{(\delta_i,\xi_i,\rho_i): i \in [k] \}$, to denote the approximate zCDP parameters that the analyst can select from at each round and then, with the selected parameters, can choose a zCDP mechanism (as opposed to CDP as above). We have the following result which again uses Lemma~\ref{lem:concentration}.
\begin{lemma}\label{lem:zCDPgenbound}
Let $\cE_{zCDP} = \{ (\delta_i,\xi_i,\rho_i ) : i \in [k] \} \}$.  The set $\cE_{zCDP}$ is $(\diffp_g, \delta + \sum_{i=1}^k\delta_i)$-differentially private under set-wise adaptive composition, where zCDP mechanisms are selected at each round, for any $\delta >0$, where
\begin{align*}
\diffp_g & =  \sum_{i \in [k] } (\xi_i + \rho_i) + 2\sqrt{  \sum_{i \in [k]} \rho_i  \log(1/\delta)}.
\end{align*}
\end{lemma}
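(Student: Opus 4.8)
The plan is to follow the same route as the proof of Lemma~\ref{lem:genbound}, decomposing the total privacy loss $\sum_{i=1}^k L_i$ into a predictable part plus a martingale-difference part and controlling the latter with Lemma~\ref{lem:concentration}, but with two modifications dictated by the approximate zCDP definition. First I would dispose of the additive $\sum_i\delta_i$ by a ``good event'' argument: at round $i$ the analyst's choice of neighbors $x_i^{(0)},x_i^{(1)}$, of the zCDP parameters $(\delta_i,\xi_i,\rho_i)\in\cE_{zCDP}$, and of the mechanism $M_i$ are all $\cF_{i-1}$-measurable, and Definition~\ref{defn:zCDP} supplies events $E_i,E_i'$ with $\Pr[E_i\mid\cF_{i-1}],\Pr[E_i'\mid\cF_{i-1}]\ge 1-\delta_i$ on which the R\'enyi bounds $D_\alpha(M_i(x)|_{E_i}\,\|\,M_i(x')|_{E_i'})\le\alpha\rho_i+\xi_i$ (and the reverse) hold for all $\alpha>1$. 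Conditioning the whole experiment on $\bigcap_i(E_i\cap E_i')$ --- exactly as approximate DP composition is handled in e.g. \citet{DworkRoVa10,BunSt16} --- costs at most $\sum_i\delta_i$ in the final $\delta$ by a union bound, and on this event we may work with the conditioned mechanisms, which behave like exact $(\xi_i,\rho_i)$-zCDP mechanisms.

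Second, I replace the CDP moment bounds of Lemma~\ref{lem:subgaussBounds} by their zCDP analogues. Writing $L_i$ for the privacy loss of the (conditioned) round-$i$ mechanism and using that $D_\alpha$ is nondecreasing in $\alpha$ with $\lim_{\alpha\to1^+}D_\alpha=\mathrm{KL}$, taking $\alpha\downarrow 1$ in the R\'enyi bound gives $\E[L_i\mid\cF_{i-1}]=\mathrm{KL}\bigl(M_i(x)|_{E_i}\,\|\,M_i(x')|_{E_i'}\bigr)\le\xi_i+\rho_i$. For the MGF, for any $\lambda>0$ set $\alpha=\lambda+1$; then by the identity relating the MGF of the privacy loss to the R\'enyi divergence (which follows directly from the definition of $D_\alpha$ in the preliminaries),
\[
\E\!\left[e^{\lambda L_i}\mid\cF_{i-1}\right]=e^{\lambda D_{\lambda+1}( M_i(x)|_{E_i}\,\|\,M_i(x')|_{E_i'})}\le e^{\lambda\xi_i+\lambda\rho_i+\lambda^2\rho_i}.
\]
Now define $X_i\defeq L_i-(\xi_i+\rho_i)$ (subtracting the mean \emph{upper bound}, not the mean itself, which is what keeps the bound purely subgaussian). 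Then for $\lambda>0$, $\E[e^{\lambda X_i}\mid\cF_{i-1}]\le e^{-\lambda(\xi_i+\rho_i)}e^{\lambda\xi_i+\lambda\rho_i+\lambda^2\rho_i}=e^{\lambda^2\rho_i}$, i.e. $X_i$ is conditionally subgaussian with parameter $B_i^2=2\rho_i$. Since the analyst exhausts $\cE_{zCDP}$, the $\rho_i$ appearing over the $k$ rounds form a permutation of the preregistered values, so $\sum_{i=1}^kB_i^2=2\sum_{i=1}^k\rho_i=:b^2$ holds a.s.\ even though each $B_i$ is only $\cF_{i-1}$-measurable --- precisely the generality of Lemma~\ref{lem:concentration}.

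Applying Lemma~\ref{lem:concentration} then yields $\Pr\bigl[\sum_i X_i\ge\beta\bigr]\le e^{-\beta^2/(2b^2)}=e^{-\beta^2/(4\sum_i\rho_i)}$; choosing $\beta=2\sqrt{\sum_i\rho_i\log(1/\delta)}$ makes the right-hand side equal to $\delta$, and since $\sum_i L_i=\sum_i X_i+\sum_i(\xi_i+\rho_i)$ this gives $\Pr[\sum_i L_i\ge\diffp_g]\le\delta$ with $\diffp_g$ as in the statement. Combining the $\sum_i\delta_i$ from the good-event step with this $\delta$, and invoking the standard fact that a privacy-loss tail bound $\Pr[\sum_i L_i\ge\diffp_g]\le\delta'$ (together with the symmetric bound, obtained identically from the reverse R\'enyi inequality in Definition~\ref{defn:zCDP}) implies that the $\cE_{zCDP}$-set-wise adaptive composition experiment is $(\diffp_g,\delta')$-DP, finishes the proof. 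I expect the only genuinely delicate step to be the first one --- making the conditioning on the adaptively chosen events $E_i,E_i'$ rigorous while preserving the martingale/filtration structure --- since the remaining steps are routine manipulations of the R\'enyi divergence; one could alternatively avoid the explicit events by showing directly that the conditioned-loss process dominates the unconditioned one up to $\delta_i$-mass, but the bookkeeping is the same.
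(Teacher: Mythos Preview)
Your proposal is correct and follows essentially the same route as the paper's proof: both condition on the good events $E_i,E_i'$ from the approximate zCDP definition (absorbing a total $\sum_i\delta_i$ into the final $\delta$), set $X_i=L_i-(\xi_i+\rho_i)$ with $B_i^2=2\rho_i$, derive the conditional MGF bound $\E[e^{\lambda X_i}\mid\cF_{i-1}]\le e^{\lambda^2\rho_i}$ directly from the R\'enyi inequality in Definition~\ref{defn:zCDP}, and invoke Lemma~\ref{lem:concentration} using that $\sum_iB_i^2=2\sum_i\rho_i$ is deterministic since the analyst exhausts the preregistered set. The paper spells out the final conditioning step by following Lemma~8.8 of \citet{BunSt16}, which is exactly the ``delicate step'' you flag; otherwise the arguments coincide.
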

\begin{proof}
Note that the result here is analogous to Lemma~\ref{lem:genbound} except we do not subtract off the mean of the privacy loss to form $X_i$, yet we use the same filtration $\{\cF_i\}$.  Consider the privacy loss $L_i^{(b)}$ at each round $i$ conditioned on the events $E_i^{(0)}, E_i^{(1)}$ where $\Pr[E_i^{(b)}] \geq 1-\delta_i$ for $b\in \{0,1\}$ and each round $i \in [k]$.  
\[
L_i^{(b)}(y_i) \defeq \log\left( \frac{\Pr[M_i(x_i^{(b)}) = y_i \mid E_i^{(b)}, \cF_{i-1}]}{\Pr[M_i(x_i^{(1-b)} = y_i \mid E_i^{(1-b)},\cF_{i-1}]} \right), \  L_i^{(b)} \defeq L_i^{(b)}(Y_i) \text{ where } Y_i \sim M_i(x_i^{(b)}) \mid E_i^{(b)},\cF_{i-1}.
\]
We use Lemma~\ref{lem:concentration} with $X_i = L_i^{(b)} - \xi_i - \rho_i$ and $B_i^2 = 2 \rho_i$.  From the definition of zCDP we have the following for any $\lambda = \alpha - 1 > 0$
\begin{align*}
& \E[\exp\left(\lambda (L_i^{(b)} - \xi_i - \rho_i)\right) \mid \cF_{i-1}] \leq \exp\left(\lambda^2\rho_i \right) \\
\implies & \E\left[\exp\left(\lambda \sum_{i=1}^k(L_i^{(b)} - \xi_i - \rho_i) - \lambda^2 \sum_{i=1}^k \rho_i \right) \right] \leq 1.
\end{align*}
Recall that at each round $i \in [k]$ the index of the parameters  is a random variable, based on the outcomes of previous results, but the resulting sum of parameters is predetermined.  Hence, we have 
\[
\E\left[\exp\left(\lambda \sum_{i=1}^k(L_i^{(b)} - \xi_i - \rho_i) \right) \right] \leq \exp\left( \lambda^2 \sum_{i=1}^k \rho_i\right).
\]
The rest of the analysis is identical to the proof of Lemma 3.5 from \citet{BunSt16}, which gives the conversion of zCDP to approximate DP.  Hence, we have
\[
\Pr[V^{(b)} \in S \mid \cup_{i=1}^k E_i^{(b)} ] \leq e^{\diffp_g} \Pr[V^{(1-b)} \in S\mid \cup_{i=1}^k E_i^{(1-b)}] + \delta
\]
We then follow Lemma~8.8 from \cite{BunSt16}.  Without loss of generality we set $\Pr[\cup_i E_i^{(b)}] = 1-\delta'$ and $\Pr[\cup_i E_i^{(1-b)}] \geq 1 - \delta'$ where $\delta' \leq \sum_i \delta_i$.  We then have
\[
\Pr[V^{(b)} \in S] \leq \Pr[V^{(b)} \in S \mid \cup_{i=1}^k E_i^{(b)} ] (1-\delta')+\delta'  
\]
and
\[
\Pr[V^{(1-b)} \in S] \geq \Pr[V^{(1-b)} \in S \mid \cup_{i=1}^k E_i^{(1-b)} ] (1-\delta')  
\]
Putting everything together, we have
\begin{align*}
 \Pr[V^{(b)} \in S] & \leq  \Pr[V^{(b)} \in S \mid \cup_{i=1}^k E_i^{(b)} ] (1-\delta')+\delta'   \\
& \leq \left(e^{\diffp_g}\Pr[V^{(1-b)} \in S \mid \cup_{i=1}^k E_i^{(1-b)} ]+\delta\right) (1-\delta')+\delta' \\
 & \leq e^{\diffp_g} \Pr[V^{(1-b)} \in S] + \delta(1-\delta') + \delta' \\
 & \leq e^{\diffp_g} \Pr[V^{(1-b)} \in S] + \delta +  \sum_{i=1}^k \delta_i.
\end{align*}
\end{proof}

Recall that with Lemma~\ref{lem:CDPtozCDP}, we can convert CDP parameters into zCDP parameters.  Hence, \eqref{eq:setWiseCompEps2} can also be derived from Lemma~\ref{lem:zCDPgenbound} and Lemma~\ref{lem:subgaussBounds}.  From Lemma~\ref{lem:subgaussBounds}, we see that $\diffp$-DP implies $( 0, \diffp^2/2)$-zCDP and $\alpha$-BR implies $(\xi, \alpha^2/8)$-zCDP, where $\xi = \frac{\alpha}{e^\alpha - 1} - 1 - \log\left( \frac{\alpha}{e^\alpha - 1} \right)  - \alpha^2/8\leq 0$.


\section{Optimal Non-adaptive Composition Bounds \label{sec:optimal_nonadaptive_omitted_proofs}} 
In this section we will present optimal bounds on the privacy loss when we combine both $\diffp$-BR and $\diffp$-DP mechanisms and the BR mechanisms are all preselected, prior to any interaction.  Note that our analysis does allow for an analyst to select any order of $\diffp$-BR and $\diffp$-DP mechanisms in advance, and the $\diffp$-DP mechanisms can be adaptively selected, but the $\diffp$-BR mechanisms cannot.  

We now want to generalize Theorem~\ref{thm:non-adaptive} when $m$ of the $k$ mechanisms can be $\diffp$-DP, rather than all being $\diffp$-BR.  
We will proceed in a similar way to the analysis in Section 5 of \citet{DongDuRo19}.  We start with the following result that allows for heterogeneous $\diffp_i$ at each $i \in [k]$.  

\begin{lemma}[Lemma 3.1 in \citet{DongDuRo19}]\label{lem:non_interactive_GRR}
Recall from Definition~\ref{defn:gen_rr}
we have $p_{\diffp_i, t_i}, q_{\diffp_i,t_i}$.  We then have
\begin{align*}
\delta_{\opt}( \mbr(\diffp_1)\times \cdots \times \mbr(\diffp_k);\diffp_g) & \\
= \sup_{\bbt \in \prod_{i \in [k]} [0,\diffp_i]} \sum_{S \subseteq \{1,...,k\}} & \left[\prod_{i \notin S} q_{\diffp_i,t_i} \prod_{i \in S} (1 - q_{\diffp_i,t_i}) - e^{\diffp_g}\prod_{i \notin S}p_{\diffp_i,t_i} \prod_{i \in S} (1 - p_{\diffp_i,t_i}) \right]_+.
\end{align*}
\end{lemma}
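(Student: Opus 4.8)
The plan is to reduce the worst-case non-adaptive composition of $\diffp_i$-BR mechanisms to a product of generalized random responses, and then read off the hockey-stick divergence of that product explicitly. For two distributions $P,Q$ on a common space write $\mathsf{H}_{\diffp_g}(P,Q) \defeq \sup_{S}\big(P(S) - e^{\diffp_g}Q(S)\big)$, the optimizing $S$ being the set where the density ratio exceeds $e^{\diffp_g}$. Since a mechanism $M$ is $(\diffp_g,\delta)$-DP on a (ordered) neighbor pair $x\sim x'$ exactly when $\delta \ge \mathsf{H}_{\diffp_g}(M(x),M(x'))$, Definition~\ref{def:optimal_delta} gives
\[
\delta_{\opt}\big(\mbr(\diffp_1)\times\cdots\times\mbr(\diffp_k);\diffp_g\big) \;=\; \sup_{\substack{M_i\in\mbr(\diffp_i)\\ x\sim x'}} \mathsf{H}_{\diffp_g}\big((M_1\times\cdots\times M_k)(x),\,(M_1\times\cdots\times M_k)(x')\big).
\]

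For the lower bound, fix any $\bbt\in\prod_{i\in[k]}[0,\diffp_i]$. The likelihood ratios of $\grr{\diffp_i,t_i}$ between inputs $0$ and $1$ are $q_{\diffp_i,t_i}/p_{\diffp_i,t_i}=e^{t_i}$ at output $0$ and $(1-q_{\diffp_i,t_i})/(1-p_{\diffp_i,t_i})=e^{t_i-\diffp_i}$ at output $1$, so each $\grr{\diffp_i,t_i}$ is $\diffp_i$-BR and $\grr{\diffp_1,t_1}\times\cdots\times\grr{\diffp_k,t_k}$ is a legitimate member of $\mbr(\diffp_1)\times\cdots\times\mbr(\diffp_k)$ on the two datasets feeding bit $0$, resp.\ bit $1$, to every coordinate. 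Indexing an output string $y\in\{0,1\}^k$ by $S=\{i: y_i=1\}$, one computes $\Pr[Y=y\mid 0]=\prod_{i\notin S}q_{\diffp_i,t_i}\prod_{i\in S}(1-q_{\diffp_i,t_i})$ and $\Pr[Y=y\mid 1]=\prod_{i\notin S}p_{\diffp_i,t_i}\prod_{i\in S}(1-p_{\diffp_i,t_i})$, so $\mathsf{H}_{\diffp_g}$ of this product equals exactly the summand in the lemma; taking $\sup_{\bbt}$ yields ``$\ge$''.

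For the upper bound, fix any $M_i\in\mbr(\diffp_i)$ and an ordered neighbor pair $x\sim x'$. By the range characterization of \citet{DurfeeRo19} there are $t_i=t_i(x,x')\in[0,\diffp_i]$ with $r_i(y_i)\defeq \Pr[M_i(x)=y_i]/\Pr[M_i(x')=y_i]\in[e^{t_i-\diffp_i},e^{t_i}]$ for all $y_i$ (the BR hypothesis makes $M_i(x)$ and $M_i(x')$ mutually absolutely continuous, so $r_i$ is a well-defined bounded random variable, with mean $1$ under $Y_i\sim M_i(x')$). Since $r\mapsto[r-e^{\diffp_g}]_+$ is convex, the chord inequality on $[e^{t_i-\diffp_i},e^{t_i}]$ bounds $\mathsf{H}_{\diffp_g}(M_i(x),M_i(x'))=\E_{Y_i}\big[[r_i(Y_i)-e^{\diffp_g}]_+\big]$ by its value at the two-point law on the endpoints with mean $1$, which is precisely $\mathsf{H}_{\diffp_g}\big(\grr{\diffp_i,t_i}(0),\grr{\diffp_i,t_i}(1)\big)$; the symmetric inequality (swapping $x,x'$, i.e.\ replacing $t_i$ by $\diffp_i-t_i$) holds by the same argument. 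Hence the pair $\big(\grr{\diffp_i,t_i}(0),\grr{\diffp_i,t_i}(1)\big)$ dominates $(M_i(x),M_i(x'))$ — it has at least as large a hockey-stick divergence in both directions, i.e.\ a containing privacy region — and this domination is preserved under products by the composition lemma for privacy regions \cite{KairouzOhVi17,DongDuRo19}. Therefore $\mathsf{H}_{\diffp_g}$ of $\big((M_1\times\cdots\times M_k)(x),(M_1\times\cdots\times M_k)(x')\big)$ is at most $\mathsf{H}_{\diffp_g}$ of $\big((\grr{\diffp_1,t_1}\times\cdots\times\grr{\diffp_k,t_k})(0),(\grr{\diffp_1,t_1}\times\cdots\times\grr{\diffp_k,t_k})(1)\big)$, which equals the lemma's summand at these $t_i\in[0,\diffp_i]$; taking $\sup$ over $M_i$ and $x\sim x'$ gives ``$\le\sup_{\bbt}$'', matching the lower bound.

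The main obstacle is the upper bound's lifting step: converting the pointwise, single-mechanism extremality of the two-point likelihood ratio into a genuine domination of \emph{privacy regions}, so that the product/composition lemma of \cite{KairouzOhVi17} applies. This requires arguing domination in both directions simultaneously and observing that the reversed pair $\big(\grr{\diffp_i,t_i}(1),\grr{\diffp_i,t_i}(0)\big)$ is itself a generalized random response (after relabeling outputs and sending $t_i\mapsto\diffp_i-t_i$), which is exactly why the final supremum still ranges only over $\bbt\in\prod_{i\in[k]}[0,\diffp_i]$. Everything else — the hockey-stick reformulation of $\delta_{\opt}$ and the explicit evaluation of the product's divergence — is routine bookkeeping.
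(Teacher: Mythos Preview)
Your argument is correct. The paper does not supply its own proof of this lemma: it is quoted verbatim as Lemma~3.1 of \citet{DongDuRo19} and used as a black box, so there is nothing in the present paper to compare against beyond the citation. Your route---hockey-stick reformulation of $\delta_{\opt}$, explicit evaluation on the product of $\grr{\diffp_i,t_i}$'s for the lower bound, and the convexity/chord argument to show each $(M_i(x),M_i(x'))$ is Blackwell-dominated by $(\grr{\diffp_i,t_i}(0),\grr{\diffp_i,t_i}(1))$ for the upper bound, followed by tensorization of privacy-region containment---is exactly the standard reduction underlying that cited result, so your proposal matches the intended proof in spirit and detail.

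One small remark on presentation: in your ``main obstacle'' paragraph you phrase the two-direction domination as if it required passing to the reversed GRR with parameter $\diffp_i-t_i$. In fact the point is stronger and simpler: the \emph{same} pair $(\grr{\diffp_i,t_i}(0),\grr{\diffp_i,t_i}(1))$ dominates $(M_i(x),M_i(x'))$ in the Blackwell sense, because your chord bound holds for every $\diffp_g\in\R$ in both directions simultaneously (the reversed likelihood ratio of $M_i$ lies in $[e^{-t_i},e^{\diffp_i-t_i}]$, which is precisely the support of the reversed GRR ratio). That is what licenses the product step via a single post-processing kernel per coordinate, and it keeps the supremum over a single $\bbt$ rather than a pair.
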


We then define the elements that we take the $\sup$ over for general $(t_1,\cdots, t_k)$ and corresponding privacy parameters $(\diffp_1, \cdots, \diffp_k)$ 
\begin{equation}
\delta( (t_1,\diffp_1) \times \cdots \times (t_k,\diffp_k) ; \diffp_g) \defeq  \sum_{S \subseteq \{1,...,k\}}  \left[\prod_{i \notin S} q_{\diffp_i,t_i} \prod_{i \in S} (1 - q_{\diffp_i,t_i}) - e^{\diffp_g}\prod_{i \notin S}p_{\diffp_i,t_i} \prod_{i \in S} (1 - p_{\diffp_i,t_i}) \right]_+.
\label{eq:gen_delta_t}
\end{equation}

We then have the following result that shows that the ordering of mechanisms does not modify the value of $\delta_\opt$.
\begin{lemma}\label{lem:ordering_dont_mattter}
Let $\pi: [k] \to [k]$ be a permutation on the indices $[k]$, we then have
\[
\delta( (t_1,\diffp_1) \times \cdots \times (t_k,\diffp_k) ; \diffp_g) = \delta( (t_{\pi(1)},\diffp_{\pi(1)}), \cdots, (t_{\pi(k)},\diffp_{\pi(k)}) ; \diffp_g)   
\]
\end{lemma}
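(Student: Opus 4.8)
The plan is a direct relabeling argument. The key observation is that the quantity in \eqref{eq:gen_delta_t} is a sum over \emph{all} subsets $S \subseteq [k]$ of a summand that is assembled entirely from \emph{products} of the per-coordinate factors $q_{\diffp_i,t_i}$, $1-q_{\diffp_i,t_i}$, $p_{\diffp_i,t_i}$, $1-p_{\diffp_i,t_i}$. Since finite products and the indexing subsets are unordered objects, permuting the coordinates should do nothing more than permute the terms of the sum, so I would set up the corresponding bijection on $2^{[k]}$ and check term-by-term equality.

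In detail, I would fix the permutation $\pi$ and let $\Phi \colon 2^{[k]} \to 2^{[k]}$ be the bijection $\Phi(T) \defeq \{\pi(j) : j \in T\}$. First I would record its elementary property: for every $T \subseteq [k]$, the map $j \mapsto \pi(j)$ restricts to a bijection from $T$ onto $\Phi(T)$ and from $[k] \setminus T$ onto $[k] \setminus \Phi(T)$, because $j \in T \iff \pi(j) \in \Phi(T)$. Then I would interpret the right-hand side of the claimed identity as \eqref{eq:gen_delta_t} applied to the parameter list $(t_{\pi(1)},\diffp_{\pi(1)}), \dots, (t_{\pi(k)},\diffp_{\pi(k)})$; its $T$-th summand is
\[
\left[\prod_{j \notin T} q_{\diffp_{\pi(j)},t_{\pi(j)}} \prod_{j \in T} (1 - q_{\diffp_{\pi(j)},t_{\pi(j)}}) - e^{\diffp_g}\prod_{j \notin T} p_{\diffp_{\pi(j)},t_{\pi(j)}} \prod_{j \in T} (1 - p_{\diffp_{\pi(j)},t_{\pi(j)}}) \right]_+ .
\]
Substituting $i = \pi(j)$ in each of the four products and using the bijection property above, the products over $j \in T$ become the corresponding products over $i \in \Phi(T)$ and the products over $j \notin T$ become products over $i \notin \Phi(T)$; the only facts invoked are that a finite product is invariant under reindexing by a bijection (commutativity of multiplication) and that $q_{\diffp,t}$ and $p_{\diffp,t}$ depend only on the pair $(\diffp,t)$ attached to a coordinate, not on the coordinate's position. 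Hence the $T$-th summand of the right-hand side equals the $\Phi(T)$-th summand of the left-hand side.

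Finally, since $\Phi$ is a bijection of $2^{[k]}$ onto itself, reindexing the sum by $S = \Phi(T)$ gives $\sum_{T \subseteq [k]}(\cdot) = \sum_{S \subseteq [k]}(\cdot)$, which is exactly the claimed equality. I do not expect any substantive obstacle here: this is purely a bookkeeping argument, and the only point that needs a little care is matching up the complement factors — verifying that $[k]\setminus T$ maps onto $[k]\setminus\Phi(T)$ — so that the ``$\prod_{i\notin S}$'' pieces line up correctly alongside the ``$\prod_{i\in S}$'' pieces.
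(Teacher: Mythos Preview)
Your proposal is correct and is precisely the argument the paper has in mind: the paper's own proof is a two-sentence remark that the sum in \eqref{eq:gen_delta_t} ranges over all subsets of $[k]$, so permuting the indices merely permutes the summands. Your write-up just makes the bijection $T \mapsto \Phi(T)$ and the reindexing of the four products explicit, which is a faithful (and more careful) expansion of the same idea.
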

\begin{proof}
Note that the expression in \eqref{eq:gen_delta_t} takes a summation over all possible subsets of indices.  Hence, if we permute the indices, the full summation has the same terms.  
\end{proof}

We will focus only on the homogeneous case, where all the privacy parameters are the same and leave the heterogeneous case to future work.  We have the immediate result from Lemma~\ref{lem:ordering_dont_mattter}.
\begin{lemma}
Let $\vec{\cM}, \vec{\cM'}$ be two sequences of non-adaptively selected mechanisms where $m$ are $\mdp(\diffp)$ and $k-m$ are from $\mbr(\diffp)$.  We then have for any $\diffp_g> 0$ and $\diffp>0$, 
$
\delta_{\opt}( \vec{\cM};\diffp_g)  = \delta_{\opt}(\vec{\cM'} ;\diffp_g) .
$
\end{lemma}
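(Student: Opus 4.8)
The plan is to reduce the computation of $\delta_{\opt}$ for each sequence to a supremum over configurations of generalized random response mechanisms, and then to read off the equality from the permutation invariance already established in Lemma~\ref{lem:ordering_dont_mattter}. First I would invoke the two worst-case reductions in force here: for non-adaptive composition a $\diffp$-DP mechanism is dominated, for the purpose of $\delta_{\opt}(\cdot;\diffp_g)$, by the standard randomized response $\grr{2\diffp,\diffp}$ --- i.e.\ by the pair $(t_i,\diffp_i) = (\diffp, 2\diffp)$ in the notation of \eqref{eq:gen_delta_t} --- by the Kairouz--Oh--Viswanath analysis underlying Theorem~\ref{thm:opt_DPCOMP}, while a $\diffp$-BR mechanism is dominated by $\grr{\diffp,t_i}$ for an adversarial $t_i \in [0,\diffp]$ by Lemma~\ref{lem:non_interactive_GRR}. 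Running the argument of Lemma~\ref{lem:non_interactive_GRR} with both mechanism types present, one obtains, for a sequence $\vec{\cM}$ whose pure-DP mechanisms occupy the positions $J \subseteq [k]$ (so $|J| = m$),
\[
\delta_{\opt}(\vec{\cM};\diffp_g) \;=\; \sup_{\bbt}\; \delta\big((t_1,\diffp_1)\times\cdots\times(t_k,\diffp_k);\diffp_g\big),
\]
where the supremum is over all $\bbt = (t_1,\dots,t_k)$ with $t_i = \diffp$ for $i \in J$ and $t_i \in [0,\diffp]$ for $i \notin J$, and where $\diffp_i = 2\diffp$ for $i \in J$ and $\diffp_i = \diffp$ for $i \notin J$.

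Next I would compare this expression for $\vec{\cM}$ and for $\vec{\cM'}$. Because $\vec{\cM'}$ also consists of exactly $m$ pure-DP and $k-m$ BR mechanisms, there is a permutation $\pi$ of $[k]$ taking the set $J$ of DP positions of $\vec{\cM}$ to the set $J'$ of DP positions of $\vec{\cM'}$. The induced map $\bbt \mapsto (t_{\pi^{-1}(1)},\dots,t_{\pi^{-1}(k)})$ is then a bijection between the feasible set of tuples for $\vec{\cM}$ and that for $\vec{\cM'}$ (feasibility of a coordinate depends only on whether that position is a DP or a BR position), and by Lemma~\ref{lem:ordering_dont_mattter} it preserves the value of $\delta(\cdot;\diffp_g)$, since permuting the list of pairs $(t_i,\diffp_i)$ leaves that quantity unchanged. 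Hence the two suprema range over the same set of values and are therefore equal, which is exactly $\delta_{\opt}(\vec{\cM};\diffp_g) = \delta_{\opt}(\vec{\cM'};\diffp_g)$.

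I would not grind through the summation manipulations hidden inside $\delta(\cdot;\diffp_g)$: permutation invariance there is pure bookkeeping, because \eqref{eq:gen_delta_t} already sums over \emph{all} subsets $S \subseteq [k]$, so relabeling indices does nothing. The one genuine obstacle is the first step --- justifying that, even when $\diffp$-DP and $\diffp$-BR mechanisms are interleaved, the worst $\diffp$-DP mechanism for non-adaptive composition is still the standard randomized response $\grr{2\diffp,\diffp}$, so that $\delta_{\opt}$ of the mixed class really equals the GRR supremum above. I expect this to go through by adapting the reduction in Section~5 of \citet{DongDuRo19} (together with the KOV reduction for the DP coordinates) to handle the two mechanism classes simultaneously; once that characterization is available, invariance under the ordering is immediate from Lemma~\ref{lem:ordering_dont_mattter}.
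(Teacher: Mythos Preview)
Your proposal is correct and follows essentially the same approach as the paper: reduce each coordinate to its worst-case generalized randomized response (standard $\grr{2\diffp,\diffp}$ for the $\mdp(\diffp)$ positions, $\grr{\diffp,t_i}$ with $t_i$ ranging over $[0,\diffp]$ for the $\mbr(\diffp)$ positions), then invoke the permutation invariance of Lemma~\ref{lem:ordering_dont_mattter} to conclude that any two orderings yield the same supremum. The paper states the result as ``immediate from Lemma~\ref{lem:ordering_dont_mattter}'' without spelling out the mixed-class reduction you flag as the one genuine obstacle; indeed, the text just after the lemma takes exactly this reduction for granted when it writes ``$\diffp$-BR mechanisms use the generalized randomized response $\grr{\diffp,t}$ for a worst case $t \in [0,\diffp]$ as opposed to $\diffp$-DP mechanisms which use $\grr{2\diffp,\diffp}$.''
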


Recall that the big difference between $\diffp$-BR and $\diffp$-DP mechanisms, is that $\diffp$-BR mechanisms use the generalized randomized response $\grr{\diffp,t}$ for a worst case $t \in [0,\diffp]$ as opposed to $\diffp$-DP mechanisms which use $\grr{2\diffp,\diffp}$, so that $t = \diffp$.  We then define the following function that fixes $m$ of the $t_i$ values to be $\diffp$ which have corresponding $\diffp_i = 2 \diffp$ and corresponding sets $\cS_1 \defeq [k-m]$ and $\cS_2 \defeq \{ k-m+1, \cdots, k \}$.
\begin{align*}
\delta((t_{1},\diffp), &\cdots, (t_{k-m} ,\diffp), (\diffp,2\diffp),\cdots, (\diffp,2\diffp) ; \diffp_g) \\
& = \sum_{S_1 \subseteq \cS_1} \sum_{S_2 \subseteq \cS_2} \left[\prod_{i_1 \in \cS_1 \setminus S_1} q_{\diffp,t_{i_1}} \prod_{i_2 \in \cS_2 \setminus S_2} q_{2\diffp,\diffp} \prod_{j_1 \in S_1} (1 - q_{\diffp,t_{j_1}}) \prod_{j_2 \in S_2} (1 - q_{2\diffp,\diffp}) \right.\\
& \qquad\qquad\qquad\qquad\qquad\qquad\left. - e^{\diffp_g}\prod_{i_1 \in \cS_1 \setminus S_1} p_{\diffp,t_{i_1}} \prod_{i_2 \in \cS_2 \setminus S_2} p_{2\diffp,\diffp} \prod_{j_1 \in S_1} (1 - p_{\diffp,t_{j_1}}) \prod_{j_2 \in S_2} (1 - p_{2\diffp,\diffp}) \right]_+ \\
& = \sum_{S_1 \subseteq \cS_1} \sum_{S_2 \subseteq \cS_2} \left[q_{2\diffp,\diffp}^{m - |S_2|}  (1 - q_{2\diffp,\diffp})^{|S_2|} \prod_{i_1 \in \cS_1 \setminus S_1} q_{\diffp,t_{i_1}}  \prod_{j_1 \in S_1} (1 - q_{\diffp,t_{j_1}})\right.\\
& \qquad\qquad\qquad\qquad\qquad\qquad\left. - e^{\diffp_g}p_{2\diffp,\diffp}^{m - |S_2|}(1 - p_{2\diffp,\diffp})^{|S_2|}\prod_{i_1 \in \cS_1 \setminus S_1} p_{\diffp,t_{i_1}} \prod_{j_1 \in S_1} (1 - p_{\diffp,t_{j_1}})\right]_+ \\
\end{align*}

Note that we have the simple connection with this function and the optimal $\delta$,
\begin{align*}
\delta_{\opt}& \left( \underbrace{ \mbr(\diffp)\times \cdots \times \mbr(\diffp)}_{k-m} \times  \underbrace{\mdp(\diffp)\times \cdots\times \mdp(\diffp)}_{m};\diffp_g\right) \\
& \qquad =  \sup_{\bbt \in \prod_{i \in \{1, \cdots, k-m \}} [0,\diffp]}\delta((t_{1},\diffp), \cdots, (t_{k-m},\diffp) , (\diffp,2\diffp),\cdots, (\diffp,2\diffp)) 
\end{align*}

We then use a key result from \citet{DongDuRo19} that shows that despite BR mechanisms having separate $t_i \in [0,\diffp]$ for each BR mechanism, the worst way to set the $t_i$'s is to set them all equal. The following is a slight generalization of Lemma~5.4 from \citet{DongDuRo19}.
\begin{lemma}\label{lem:gen_convexity} 
For any $\diffp,\diffp' > 0$, $\diffp_g \in \R$, and $\bbt \in [0,\diffp]^{k-m} \times [0,\diffp']^m$,
\begin{align*}
& \delta((t_1,\diffp), (t_2,\diffp), \cdots (t_{k-m}, \diffp), (t_{k-m+1} , \diffp'),\cdots, (t_k,\diffp');\diffp_g) \\
& \quad \leq \delta\left( \left(\frac{t_1 + t_2}{2},\diffp \right), \left(\frac{t_1 + t_2}{2},\diffp \right) , (t_3,\diffp), \cdots,  (t_{k-m}, \diffp), (t_{k-m+1} , \diffp'),\cdots, (t_k,\diffp');\diffp_g\right)
\end{align*}
\end{lemma}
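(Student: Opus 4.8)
The plan is to isolate the two coordinates being averaged, reduce to a two-variable inequality with arbitrary nonnegative ``context'' weights, and then win via a reparametrization that turns the relevant quantity into an affine function of a single convex scalar. First I would split each $S \subseteq [k]$ in \eqref{eq:gen_delta_t} as $S = S' \sqcup S''$ with $S' \subseteq \{1,2\}$ and $S'' \subseteq \{3,\dots,k\}$, and factor $\prod_{i\notin S}q_{\diffp_i,t_i}\prod_{i\in S}(1-q_{\diffp_i,t_i})$ and $e^{\diffp_g}\prod_{i\notin S}p_{\diffp_i,t_i}\prod_{i\in S}(1-p_{\diffp_i,t_i})$ over the two blocks. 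For a fixed context $S''$ this produces nonnegative constants $\alpha = \alpha_{S''}$ and $\beta = \beta_{S''}$ (depending on $t_3,\dots,t_k$, $\diffp$, $\diffp'$, $\diffp_g$ but not on $t_1,t_2$), so the claim reduces to showing, for every $\alpha,\beta\ge 0$,
\[
F(t_1,t_2) \defeq \sum_{S'\subseteq\{1,2\}}\Big[\alpha\,a_{S'}(t_1,t_2) - \beta\,b_{S'}(t_1,t_2)\Big]_+ \ \le\ F(\bar t,\bar t),\qquad \bar t \defeq \tfrac{t_1+t_2}{2},
\]
where $a_{S'},b_{S'}$ are the $q_{\diffp,\cdot}$- and $p_{\diffp,\cdot}$-products over $\{1,2\}$ for the configuration $S'$; summing over all contexts $S''$ then gives the lemma. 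Note $F$ is symmetric under $t_1\leftrightarrow t_2$.

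Next I would reparametrize by $r_i = e^{-t_i}$ and $c = e^{-\diffp}$, so that (by Definition~\ref{defn:gen_rr}) $q_{\diffp,t} = \tfrac{r-c}{r(1-c)}$, $1-q_{\diffp,t} = \tfrac{c(1-r)}{r(1-c)}$, $p_{\diffp,t} = \tfrac{r-c}{1-c}$, $1-p_{\diffp,t} = \tfrac{1-r}{1-c}$, and each of the four hinge terms factors with a nonnegative prefactor (using $r_i\in[c,1]$, i.e.\ $t_i\in[0,\diffp]$), giving
\[
(1-c)^2 F = (r_1-c)(r_2-c)\Big[\tfrac{\alpha}{r_1 r_2}-\beta\Big]_+ + \big((1-r_1)(r_2-c)+(r_1-c)(1-r_2)\big)\Big[\tfrac{\alpha c}{r_1 r_2}-\beta\Big]_+ + (1-r_1)(1-r_2)\Big[\tfrac{\alpha c^2}{r_1 r_2}-\beta\Big]_+ .
\]
Writing $\rho = r_1 r_2$ and $\sigma = r_1+r_2$, the three prefactors are $\rho - c\sigma + c^2$, $(1+c)\sigma - 2c - 2\rho$, and $1-\sigma+\rho$ (they sum to $(1-c)^2$), so with $\rho$ held fixed $(1-c)^2F$ is affine in $\sigma$ with slope $-cA + (1+c)B - C$, where $A = [\tfrac{\alpha}{\rho}-\beta]_+$, $B = [\tfrac{\alpha c}{\rho}-\beta]_+$, $C = [\tfrac{\alpha c^2}{\rho}-\beta]_+$.

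The decisive observation is that sliding along the segment $s\mapsto\big((1-s)t_1+st_2,\,st_1+(1-s)t_2\big)$ keeps $t_1+t_2$, hence $\rho = e^{-(t_1+t_2)}$, constant, while $\sigma(s) = e^{-((1-s)t_1+st_2)} + e^{-(st_1+(1-s)t_2)}$ is convex in $s$ and symmetric about $s=\tfrac12$, hence minimized at the midpoint, which is $(\bar t,\bar t)$. So it suffices to check that the slope in $\sigma$ is $\le 0$, i.e.\ $(1+c)B \le cA + C$; this holds because $x\mapsto[x]_+$ is convex and $\tfrac{\alpha c}{\rho} = \tfrac{c}{1+c}\cdot\tfrac{\alpha}{\rho} + \tfrac{1}{1+c}\cdot\tfrac{\alpha c^2}{\rho}$ (together with $\beta = \tfrac{c}{1+c}\beta + \tfrac{1}{1+c}\beta$), so $B \le \tfrac{c}{1+c}A + \tfrac{1}{1+c}C$. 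Therefore $F$ is nonincreasing in $\sigma$, so $F(\bar t,\bar t)\ge F(t_1,t_2)$, and summing over $S''$ finishes the proof.

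I expect the only fiddly part to be the bookkeeping in the factorization step --- verifying the exact polynomial prefactor each of the four configurations $S'\in\{\emptyset,\{1\},\{2\},\{1,2\}\}$ contributes and that they assemble into something affine in $\sigma$ once $\rho$ is fixed; everything afterward (convexity of the hinge and of $\sigma(s)$, minimality at the midpoint) is routine. The hypothesis that coordinates $1$ and $2$ share the parameter $\diffp$ is exactly what makes the three hinge arguments a geometric progression $\tfrac{\alpha}{\rho},\tfrac{\alpha c}{\rho},\tfrac{\alpha c^2}{\rho}$ with ratio $c$, which is what the convexity step needs; the remaining parameters $\diffp,\diffp'$ enter only through the nonnegative constants $\alpha_{S''},\beta_{S''}$ and require no special handling, which is why this is a painless extension of Lemma~5.4 of \citet{DongDuRo19}.
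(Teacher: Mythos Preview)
Your proposal is correct. The paper does not actually supply a proof of this lemma; it simply states it as ``a slight generalization of Lemma~5.4 from \citet{DongDuRo19}'' and moves on. Your argument is precisely the intended extension: absorb coordinates $3,\dots,k$ into nonnegative context weights $\alpha_{S''},\beta_{S''}$ (which is the only new step beyond \citet{DongDuRo19}, and the reason the lemma is called ``slight''), then rerun the two-variable inequality from their Lemma~5.4. Your reparametrization $r_i=e^{-t_i}$, reduction to an affine function of $\sigma=r_1+r_2$ at fixed $\rho=r_1r_2$, and the convexity check $(1+c)B\le cA+C$ via $[\cdot]_+$ are exactly the mechanics of that lemma, so there is nothing materially different to compare.
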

Hence, we can simplify our expression for $\delta_{\opt}$ substantially to the following where we use the fact that $p_{\diffp,t} = e^{-t} q_{\diffp,t}$ and $1 - p_{\diffp,t} = e^{\diffp-t} ( 1 - q_{\diffp,t} ) $
\begin{align}
& \delta_{\opt} (\left( \mbr(\diffp), \cdots, \mbr(\diffp), \mdp(\diffp), \cdots, \mdp(\diffp) \right);\diffp_g) \\
& = \sup_{t \in [0,\diffp]} \sum_{S_1 \subseteq \cS_1} \sum_{S_2 \subseteq \cS_2} \left[q_{2\diffp,\diffp}^{m - |S_2|}  (1 - q_{2\diffp,\diffp})^{|S_2|} q_{\diffp,t}^{k-m - |S_1|}  (1 - q_{\diffp,t})^{|S_1|}\right. \nonumber\\
& \qquad\qquad\qquad\qquad\qquad\qquad\left. - e^{\diffp_g}p_{2\diffp,\diffp}^{m - |S_2|}(1 - p_{2\diffp,\diffp})^{|S_2|}p_{\diffp,t}^{k - m - |S_1|}  (1 - p_{\diffp,t})^{|S_1|}\right]_+ \nonumber \\
& = \sup_{t \in [0,\diffp]} \sum_{i = 0}^{k-m} \sum_{j = 0}^m {k-m \choose i} {m \choose j} q_{2\diffp,\diffp}^{m - j}(1 - q_{2\diffp,\diffp})^{j}q_{\diffp,t}^{k - m - i}  (1 - q_{\diffp,t})^{i} \left[ 1- e^{\diffp_g - \diffp ( m - 2j - i) - t (k - m )}\right]_+. \label{eq:nonadapt1}
\end{align}

We now want to show that the $\sup_{t \in [0,\diffp]}$ can be decomposed into a max over a finite number of values for $t \in [0,\diffp]$.  We define the following function for $t \in [0,\diffp]$:
\begin{equation}\label{eq:delta_mk}
\delta^{m,k}(t; \diffp_g) \defeq \delta(\underbrace{(t,\diffp), \cdots, (t,\diffp)}_{k-m}, \underbrace{(\diffp,2\diffp), \cdots, (\diffp,2\diffp)}_{m} ; \diffp_g). \nonumber
\end{equation}

We define the following function in terms of $\alpha_{i,j} = {k-m \choose i} { m \choose j } q_{2\diffp,\diffp}^{m-j} (1 - q_{2\diffp, \diffp})^j$ when $i \in \{0,\cdots k - m \}$ and $j \in \{0,\cdots, m \}$ with $\alpha_{i,j} = 0$ otherwise.
\[ 
F_\ell(t) \defeq \sum_{n = 0}^\ell\left( 1 - e^{\diffp_g - \diffp(m - n) - t (k-m) } \right) \sum_{ \substack{i + 2j = n \\ i \in \{0,1 \cdots, k-m \} \\ j \in \{0, 1, \cdots, m \} } } \alpha_{i,j} q_{\diffp,t}^{k-m - i} ( 1- q_{\diffp,t})^j
\]

The following result provides a more general version of Lemma 5.7 in \cite{DongDuRo19}, which is central to showing that we need to only consider a few values of $t \in [0,\diffp]$.
\begin{lemma}\label{lem:deriv_zero}
\[
F'_\ell(t) = \frac{1}{1 - e^{-\diffp}} \left( e^{\diffp_g - \diffp(m -\ell) - t(k-m)} - e^{t-\diffp} \right) \sum_{ \substack{i + 2j = \ell \\ i \in \{0,1 \cdots, k-m \} \\ j \in \{0, 1, \cdots, m \} } }  \alpha_{i,j} (k-m - i)  q_{\diffp,t}^{k-m - i - 1} (1 - q_{\diffp,t})^i 
\]
\end{lemma}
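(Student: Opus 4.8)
The plan is to differentiate $F_\ell$ term by term --- it is a finite sum of smooth functions of $t$, each a polynomial in $q \defeq q_{\diffp,t}$ and $q$ is smooth on $[0,\diffp]$ --- and then collapse the result by a summation-by-parts argument in the index $n$, in the spirit of the proof of Lemma~5.7 in \citet{DongDuRo19}. Write $c \defeq k-m$; then $q = \frac{1-e^{t-\diffp}}{1-e^{-\diffp}}$ with $\tfrac{dq}{dt} = -\frac{e^{t-\diffp}}{1-e^{-\diffp}}$ and $\tfrac{d(1-q)}{dt} = -\tfrac{dq}{dt}$. Abbreviate $a_n(t) \defeq 1 - e^{\diffp_g - \diffp(m-n) - tc}$, so that $a_n'(t) = c\,e^{\diffp_g - \diffp(m-n) - tc}$, and $G_n(t) \defeq \sum_{i+2j=n} \alpha_{i,j}\, q^{c-i}(1-q)^i$, so that $F_\ell = \sum_{n=0}^\ell a_n G_n$. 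Finally introduce $A_n(t) \defeq \sum_{i+2j=n} \alpha_{i,j}(c-i)\, q^{c-i-1}(1-q)^i$, with the convention $A_{-1} \defeq 0$; the sum $A_\ell$ is precisely the one appearing on the right-hand side of the lemma.

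Everything then reduces to two bookkeeping identities, both obtained by re-indexing $i \mapsto i+1$ together with $\binom{c}{i+1}(i+1) = \binom{c}{i}(c-i)$, i.e. $\alpha_{i+1,j}(i+1) = (c-i)\alpha_{i,j}$. First, by the product rule $\tfrac{d}{dt}\big(q^{c-i}(1-q)^i\big) = \tfrac{dq}{dt}\big[(c-i)q^{c-i-1}(1-q)^i - i\,q^{c-i}(1-q)^{i-1}\big]$; summing over $i+2j=n$, the first piece is $\tfrac{dq}{dt}A_n$ and the second re-indexes to $-\tfrac{dq}{dt}A_{n-1}$, so $G_n'(t) = \tfrac{dq}{dt}\,(A_n - A_{n-1})$. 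Second, writing $c = (c-i)+i$ inside $G_n$ and re-indexing the second part gives $c\,G_n = q\,A_n + (1-q)\,A_{n-1}$.

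With these in hand, differentiate
\[
F_\ell'(t) = c\sum_{n=0}^\ell e^{\diffp_g-\diffp(m-n)-tc}\,G_n \;+\; \frac{dq}{dt}\sum_{n=0}^\ell a_n\,(A_n - A_{n-1}).
\]
Apply summation by parts to the second sum, using $a_n - a_{n+1} = (e^\diffp-1)e^{\diffp_g-\diffp(m-n)-tc}$ and $A_{-1}=0$, to write it as $a_\ell A_\ell + (e^\diffp-1)\sum_{n=0}^{\ell-1} e^{\diffp_g-\diffp(m-n)-tc}A_n$, then multiply by $\tfrac{dq}{dt}$ and use $\tfrac{dq}{dt}(e^\diffp-1) = -e^t$. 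For the first sum, substitute $c\,G_n = qA_n + (1-q)A_{n-1}$, shift the index in the $A_{n-1}$-part, and use $q + (1-q)e^\diffp = e^t$; this turns it into $q\,e^{\diffp_g-\diffp(m-\ell)-tc}A_\ell + e^t\sum_{n=0}^{\ell-1} e^{\diffp_g-\diffp(m-n)-tc}A_n$. The two interior sums $\sum_{n=0}^{\ell-1} e^{\diffp_g-\diffp(m-n)-tc}A_n$ cancel, leaving only the $n=\ell$ terms $F_\ell' = \big(q\,e^{\diffp_g-\diffp(m-\ell)-tc} + \tfrac{dq}{dt}\,a_\ell\big)A_\ell$. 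Expanding $a_\ell$ and using $q - \tfrac{dq}{dt} = \frac{1}{1-e^{-\diffp}}$ and $\tfrac{dq}{dt} = -\frac{e^{t-\diffp}}{1-e^{-\diffp}}$ then gives $F_\ell'(t) = \frac{1}{1-e^{-\diffp}}\big(e^{\diffp_g-\diffp(m-\ell)-t(k-m)} - e^{t-\diffp}\big)A_\ell(t)$, which is the claim.

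The real difficulty is organizational rather than computational: spotting that \emph{both} identities $G_n' = \tfrac{dq}{dt}(A_n-A_{n-1})$ and $cG_n = qA_n + (1-q)A_{n-1}$ are needed, that summation by parts (with the convention $A_{-1}=0$) is the move that makes the $n$-sum telescope, and that the short algebraic facts $q+(1-q)e^\diffp = e^t$, $\tfrac{dq}{dt}(e^\diffp-1) = -e^t$, and $q - \tfrac{dq}{dt} = (1-e^{-\diffp})^{-1}$ are exactly what make the interior terms annihilate. Once the right shape is guessed, each individual step is a one-line verification.
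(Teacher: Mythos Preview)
Your proof is correct, but organized quite differently from the paper's. The paper argues by induction on $\ell$: it verifies the base case $F_0'$ by hand, then writes $F_\ell = F_{\ell-1} + f_\ell$, computes $f_\ell'$ explicitly as a messy sum of exponentials, substitutes the inductive hypothesis for $F_{\ell-1}'$, and spends most of its effort manually combining and cancelling exponential terms until only the $n=\ell$ piece survives. Your argument is direct: you package the $n$-th summand as $a_n G_n$, prove the two structural identities $G_n' = q'(A_n - A_{n-1})$ and $cG_n = qA_n + (1-q)A_{n-1}$ once via the re-indexing $\alpha_{i+1,j}(i+1) = (c-i)\alpha_{i,j}$ (the same binomial identity the paper eventually invokes inside its inductive step), and then a single summation-by-parts together with the three scalar facts $q+(1-q)e^\diffp = e^t$, $q'(e^\diffp - 1) = -e^t$, and $q - q' = (1-e^{-\diffp})^{-1}$ makes the interior $n<\ell$ terms cancel in one stroke. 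What your route buys is transparency --- the telescoping is visible rather than buried in coefficient-matching --- and it avoids a separate base case; what the paper's induction buys is that one never needs to identify the summation-by-parts structure, only to check that adding $f_\ell'$ to the formula for $F_{\ell-1}'$ reproduces the formula for $F_\ell'$.
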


We relegate details of the proof to the appendix (\ref{lem:deriv_zero:proof}).

Given this result, we can limit our search for $\sup_{t \in [0,\diffp]}$ to a $\max$ over $k+m$ terms.  We then have the main result of this section, which presents the formula for computing the optimal privacy bound of $k$ non-adaptively selected $\diffp$-DP mechanisms, where $k-m$ of which are $\diffp$-BR.   
\begin{theorem}\label{thm:opt_nonadaptive}
Consider the non-adaptive sequence of $m$ many $\mdp(\diffp)$ mechanisms and $k-m$ many $\mbr(\diffp)$ mechanisms.  We define $t_{\ell} = \frac{\diffp_g + \diffp(\ell+1 -m )}{k-m+1}$ if $t_\ell \in [0,\diffp]$, otherwise we round it to the closest point in $\{0,\diffp \}$, for $\ell \in \{0,\cdots, k+m  \} $.  Then for $\diffp,\diffp_g >0$ we have the following formula for computing $\delta_{\opt} \left( \mbr(\diffp_1) \times \cdots \times \mbr(\diffp_{k-m}) \times  \mdp(\diffp_{k-m + 1}) \times \cdots \times \mdp(\diffp_k);\diffp_g\right) $ with $\diffp_i = \diffp$ for $i \in [k]$.
\begin{align*}
\max_{\ell \in \{0,\cdots, k+m \}} \sum_{i=0}^{k-m}\sum_{j = 0}^m {k-m \choose i} {m \choose j} q_{2\diffp,\diffp}^{m - j}(1 - q_{2\diffp,\diffp})^{j}q_{\diffp,t_\ell}^{k - m - i}  (1 - q_{\diffp,t_\ell})^{i} \left[ 1- e^{\diffp_g - \diffp ( m - 2j - i) - t_\ell (k - m )}\right]_+
\end{align*}
Furthermore, this bound applies when the ordering of mechanisms can be adversarially chosen, prior to any interaction.  
\end{theorem}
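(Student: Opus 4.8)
The plan is to reduce the theorem to a one-dimensional optimization, since the heavy lifting is already done: Lemma~\ref{lem:non_interactive_GRR} reduces $\delta_{\opt}$ of the mixed class to a supremum over generalized randomized response parameters (with the $m$ pure-DP slots fixed to the worst-case $\grr{2\diffp,\diffp}$), Lemma~\ref{lem:ordering_dont_mattter} shows this quantity is ordering-invariant (which is also the closing sentence of the theorem), and the convexity Lemma~\ref{lem:gen_convexity} collapses the $k-m$ free parameters $t_i$ of the BR mechanisms to a single common $t$. Together these give \eqref{eq:nonadapt1}, i.e.
\[
\delta_{\opt}\bigl(\mbr(\diffp)\times\cdots\times\mbr(\diffp)\times\mdp(\diffp)\times\cdots\times\mdp(\diffp);\diffp_g\bigr)=\sup_{t\in[0,\diffp]}\delta^{m,k}(t;\diffp_g),
\]
where $\delta^{m,k}(t;\diffp_g)$ is the double sum in \eqref{eq:nonadapt1}. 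So it remains only to evaluate this supremum over the interval.

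First I would reindex the double sum by $n=i+2j$ and observe that the exponent $\diffp_g-\diffp(m-2j-i)-t(k-m)=\diffp_g-\diffp(m-n)-t(k-m)$, so both it and the truncation $[\,1-e^{(\cdot)}\,]_+$ depend on $(i,j)$ only through $n$. A term survives the $[\cdot]_+$ exactly when $n\le\Theta(t):=\bigl(t(k-m)+\diffp m-\diffp_g\bigr)/\diffp$, and $\Theta$ is continuous and nondecreasing in $t$; hence $[0,\diffp]$ splits into finitely many consecutive closed subintervals on which the set of surviving indices is constant. Writing $I_\ell$ for the subinterval on which this set equals $\{0,1,\dots,\ell\}$, we have $\delta^{m,k}(t;\diffp_g)=F_\ell(t)$ on $I_\ell$, with $F_\ell$ the function defined just before Lemma~\ref{lem:deriv_zero} (and $\delta^{m,k}\equiv 0$ on any subinterval where $\Theta(t)<0$). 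Since $n=i+2j$ with $0\le i\le k-m$ and $0\le j\le m$ gives $0\le n\le k+m$, the only values of $\ell$ that occur are $\{0,\dots,k+m\}$.

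Next I would apply Lemma~\ref{lem:deriv_zero}. On $[0,\diffp]$ one has $q_{\diffp,t},q_{2\diffp,\diffp}\in[0,1]$, so every $\alpha_{i,j}\ge 0$ and the trailing factor $\sum_{i+2j=\ell}\alpha_{i,j}(k-m-i)q_{\diffp,t}^{k-m-i-1}(1-q_{\diffp,t})^i$ is nonnegative, while (for $k>m$; the case $k=m$ is Theorem~\ref{thm:opt_DPCOMP}) the factor $e^{\diffp_g-\diffp(m-\ell)-t(k-m)}-e^{t-\diffp}$ is strictly decreasing in $t$ and vanishes exactly at the solution of $\diffp_g-\diffp(m-\ell)-t(k-m)=t-\diffp$, namely $t=\frac{\diffp_g+\diffp(\ell+1-m)}{k-m+1}$, which is the (un-rounded) $t_\ell$ of the theorem. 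Therefore $F_\ell'\ge 0$ for $t\le t_\ell$ and $F_\ell'\le 0$ for $t\ge t_\ell$: each $F_\ell$ is unimodal with peak at $t_\ell$, so $\sup_{I_\ell}F_\ell$ is attained at $t_\ell$ when $t_\ell\in I_\ell$ and otherwise at the endpoint of $I_\ell$ nearer to $t_\ell$.

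Finally I would assemble the pieces. The function $\delta^{m,k}(\cdot;\diffp_g)$ is continuous on the compact interval $[0,\diffp]$: at the junction of $I_\ell$ and $I_{\ell+1}$ the $n=\ell+1$ term vanishes, so $F_\ell$ and $F_{\ell+1}$ agree there. Hence its maximum over $[0,\diffp]$ is attained either at an interior local maximum --- which by unimodality of the pieces must be one of the $t_\ell$ lying in $[0,\diffp]$ --- or at an endpoint $0$ or $\diffp$; and an endpoint can be the maximizer only when the $t_\ell$ of the relevant extreme piece falls outside $[0,\diffp]$, in which case rounding that $t_\ell$ to the nearer of $\{0,\diffp\}$ (exactly as in the theorem) recovers that endpoint value. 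Consequently $\sup_{t\in[0,\diffp]}\delta^{m,k}(t;\diffp_g)=\max_{\ell\in\{0,\dots,k+m\}}\delta^{m,k}(t_\ell;\diffp_g)$ with $t_\ell$ as defined (rounding included) in the theorem, and since the summand displayed in the theorem evaluated at $t_\ell$ is literally $\delta^{m,k}(t_\ell;\diffp_g)$, this is the claimed formula; the adversarial-ordering statement is Lemma~\ref{lem:ordering_dont_mattter}. I expect the fiddly part to be precisely this last bookkeeping --- verifying that the $k+m+1$ rounded values $t_\ell$ catch every candidate maximizer (all interior peaks and both endpoints) and that nothing is lost at the interval junctions --- whereas the unimodality argument and the solve for $t_\ell$ are routine given Lemma~\ref{lem:deriv_zero}.
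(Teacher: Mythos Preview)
Your proposal is correct and follows exactly the route the paper takes: reduce via Lemmas~\ref{lem:non_interactive_GRR}, \ref{lem:ordering_dont_mattter}, and \ref{lem:gen_convexity} to the one-parameter supremum \eqref{eq:nonadapt1}, then use the derivative formula of Lemma~\ref{lem:deriv_zero} to show each piece $F_\ell$ is unimodal with peak at $t_\ell=\frac{\diffp_g+\diffp(\ell+1-m)}{k-m+1}$, so the supremum is captured by the finite grid $\{t_\ell\}_{\ell=0}^{k+m}$ (rounded into $[0,\diffp]$). The paper in fact leaves the passage from Lemma~\ref{lem:deriv_zero} to the theorem as a one-line remark, so your write-up supplies more detail than the paper does; the only point to tighten is your acknowledged ``fiddly bookkeeping'' at junctions, which is cleanly handled by observing that $F_\ell(t)\le \delta^{m,k}(t;\diffp_g)$ for \emph{every} $t\in[0,\diffp]$ (dropped positive terms and retained negative ones), so $\max_\ell F_\ell(t_\ell)\le \max_\ell \delta^{m,k}(t_\ell;\diffp_g)\le \sup_t \delta^{m,k}(t;\diffp_g)=\max_\ell \sup_{I_\ell}F_\ell\le \max_\ell F_\ell(t_\ell)$.
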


Note that when $m = k$ we recover the optimal privacy loss bounds for DP mechanisms from \cite{KairouzOhVi17, MurtaghVa16}, and the expression in Theorem~\ref{thm:opt_nonadaptive} becomes independent of $t_{\ell}$. Hence, we do not need to do a max over $2k+1$ terms, and it can be computed in $O(k)$ time.  When $m = 0$, we recover the expression from \citet{DongDuRo19} and it can be computed in $O(k^2)$ time.  In the case when $m= \Theta(k)$ our formula can take $O(k^3)$ time to calculate.

We present the family of curves for various values of $m$ in Figure~\ref{fig:opt_nonadaptive}.  These bounds allow us to interpolate between the two previous optimal composition bounds for $\diffp$-DP mechanisms from \citet{KairouzOhVi17} (setting $m = k$) and $\diffp$-BR mechanisms from \citet{DongDuRo19} (setting $m= 0$). 
\begin{figure}[h]
\centering
\includegraphics[width=0.48\textwidth]{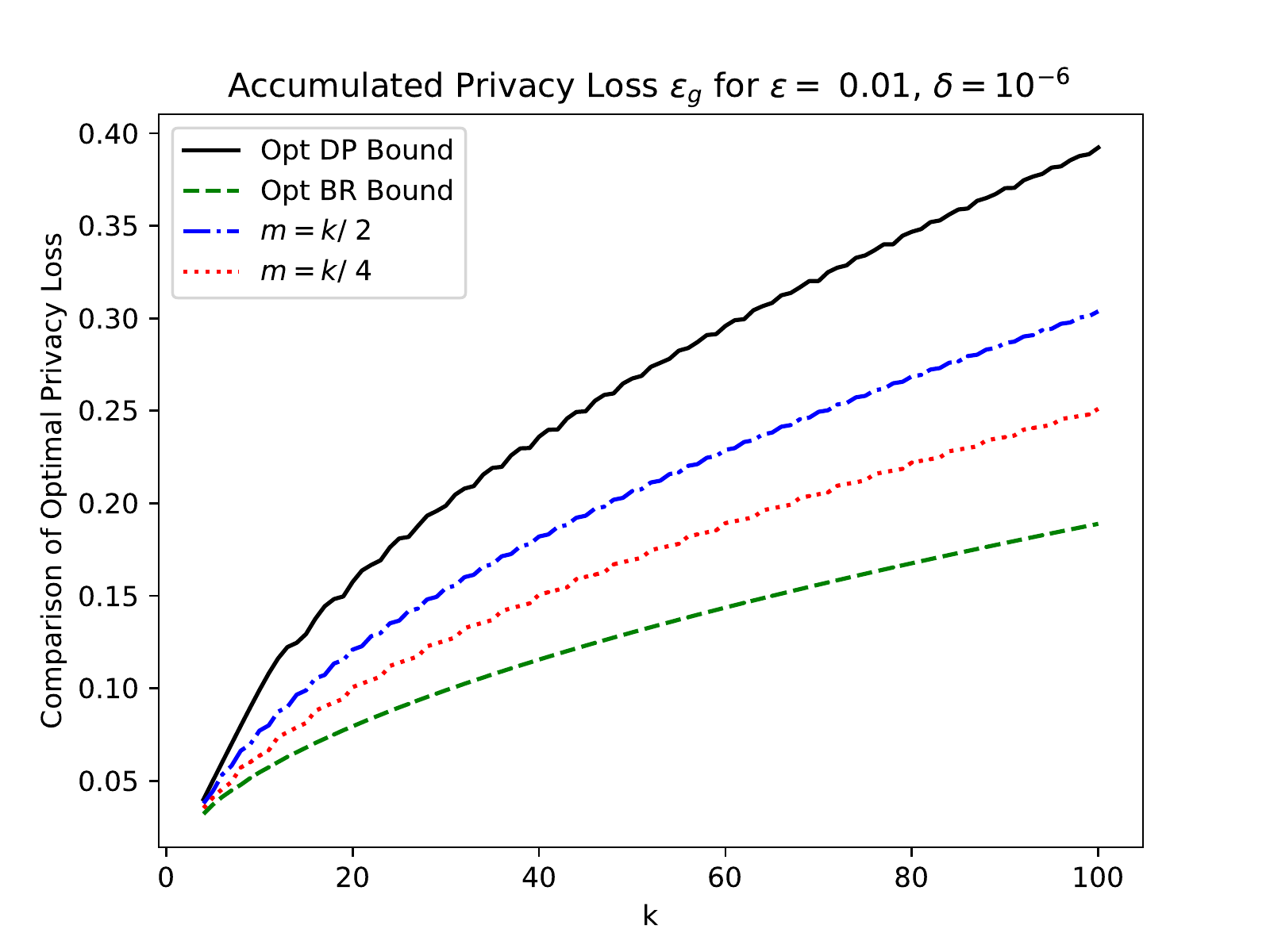}
\includegraphics[width=0.48\textwidth]{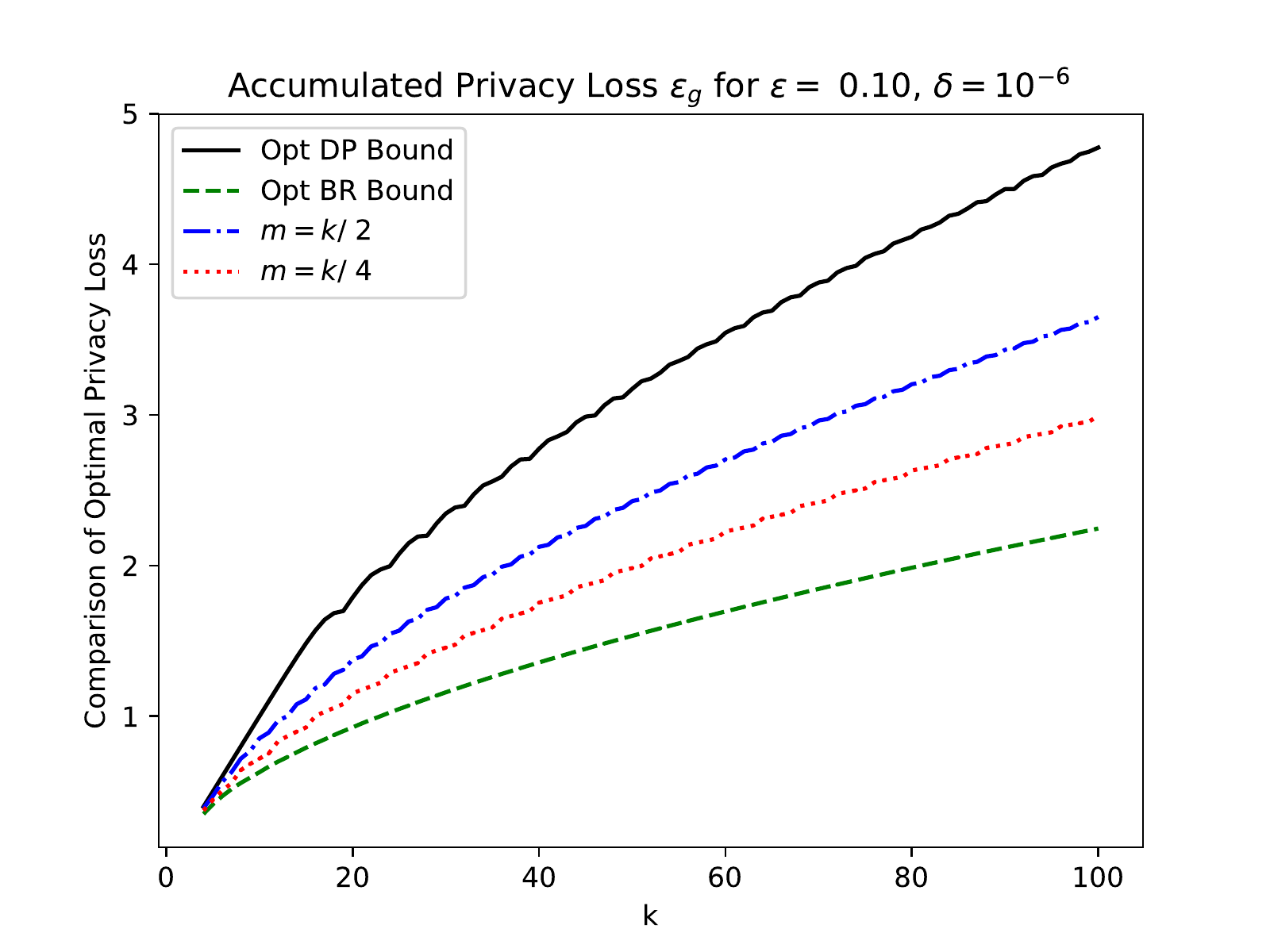}
\caption{We plot the value of $\delta_{\opt}$ from Theorem~\ref{thm:opt_nonadaptive} for $k$ many $\diffp$-DP mechanisms, of which $k-m$ are $\diffp$-BR.  \label{fig:opt_nonadaptive}}
\end{figure}

\section{Optimal Adaptive Composition Bounds  \label{sec:OptAdaptive}}

We next consider the adaptive setting, where each $\diffp$-BR mechanism can be selected as a function of previous outcomes.  From \citet{DongDuRo19}, we know that the privacy loss can strictly increase when compared to the non-adaptive setting. We can use the privacy loss bounds from Lemma~\ref{lem:genbound} here, even in the heterogenous privacy parameter case. However, we can better understand the adaptive setting by examining whether the ordering of DP and BR mechanisms can change the total privacy loss. We have the following formulation of the optimal privacy loss for $\diffp$-BR mechanisms from \citet{DongDuRo19}.
\begin{lemma} \label{lem:adap_recursion_hetero}
Let $\vec{\cM}$ be a sequence of adaptively selected $\mbr(\diffp_i)$ and $\mdp(\diffp_i)$ where $\diffp_i >0$ are fixed in advance for $i \in [k]$.  For any $\diffp_g >0$ and setting $\delta_{\opt}(\emptyset;\diffp_g) = \left[1 - e^{\diffp_g}\right]_+$ we have:
\begin{align*}
\delta_{\opt}(\mbr(\diffp_0), \vec{\cM};\diffp_g)  & = \sup_{t_0 \in [0,\diffp_0]}  \left\{ q_{\diffp_0,t_0} \delta_{\opt}(\vec{\cM}; \diffp_g - t_0)+ (1 - q_{\diffp_0,t_0})  \delta_{\opt}(\vec{\cM}; \diffp_g + \diffp_0 - t_0) \right\}
\\
\delta_{\opt}(\mdp(\diffp_0), \vec{\cM};\diffp_g)  & = q_{2\diffp_0,\diffp_0} \delta_{\opt}(\vec{\cM}; \diffp_g - \diffp_0)+ (1 - q_{2\diffp_0,\diffp_0})  \delta_{\opt}(\vec{\cM}; \diffp_g + \diffp_0 )
\end{align*}
\end{lemma}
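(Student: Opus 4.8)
The plan is to reduce the adaptive composition game to one in which the analyst only ever plays generalized randomized response mechanisms, and then to prove the stated recursion by peeling off the first round and conditioning on its (binary) outcome. The $\mbr(\diffp_0)$ line is due to \citet{DongDuRo19}, so the only genuinely new piece is to check that the $\mdp(\diffp_0)$ line is the corresponding special case.

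For the reduction I would appeal to the ``dominating mechanism'' results behind the optimal composition theorems. On a $\diffp_0$-DP round, \citet{KairouzOhVi17} show that $\grr{2\diffp_0,\diffp_0}$ is the worst case for composition; on a $\diffp_0$-BR round, \citet{DongDuRo19} show, using the structural result of \citet{DurfeeRo19} recalled above (any $\diffp_0$-BR mechanism on neighbors $x,x'$ has privacy loss confined to some interval $[t-\diffp_0,\,t]$ with $t = t(x,x')\in[0,\diffp_0]$), that $\grr{\diffp_0,t}$ is the worst case. Since the relation ``$M'$ is dominated by $M$'' --- both of $M'$'s output distributions are a common randomized post-processing of $M$'s --- is preserved under adaptive composition and can only increase $\delta_{\opt}(\,\cdot\,;\diffp_g)$ for every $\diffp_g$, it suffices to compute $\delta_{\opt}$ assuming that on round $i$ the analyst plays $\grr{2\diffp_i,\diffp_i}$ (DP round) or $\grr{\diffp_i,t_i}$ for an adaptively chosen $t_i\in[0,\diffp_i]$ (BR round); by Lemma~4.3 in \citet{DongDuRo19} it further suffices to let $t_i$ be chosen deterministically from the view so far.

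For the recursion I would use the standard hockey-stick representation $\delta_{\opt}(M;\diffp_g) = \E_{y\sim M(x)}\big[[\,1 - e^{\diffp_g - L_{M(x)||M(x')}(y)}\,]_+\big]$, which in particular reproduces the base case $\delta_{\opt}(\emptyset;\diffp_g) = [1-e^{\diffp_g}]_+$ (the empty composition has privacy loss identically $0$). Consider the composed mechanism $(\grr{\diffp_0,t_0},\vec{\cM})$ on worst-case neighbors $x,x'$, and write its total privacy loss as $L = L_1 + L_{\geq 2}$. Using $p_{\diffp_0,t_0} = e^{-t_0}q_{\diffp_0,t_0}$ and $1 - p_{\diffp_0,t_0} = e^{\diffp_0 - t_0}(1-q_{\diffp_0,t_0})$, the first-round loss equals $t_0$ on $\{Y_1 = 0\}$ (an event of probability $q_{\diffp_0,t_0}$ under $x$) and $t_0 - \diffp_0$ on $\{Y_1 = 1\}$ (probability $1 - q_{\diffp_0,t_0}$ under $x$). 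Conditioning on $Y_1$ --- and noting that, given $Y_1$, the tail is itself an adaptive composition of the mechanisms in $\vec{\cM}$ whose continuation may depend on $Y_1$ --- gives
\[
\E_{y\sim M(x)}\big[[\,1 - e^{\diffp_g - L}\,]_+\big]
= q_{\diffp_0,t_0}\,\E\big[[\,1 - e^{(\diffp_g - t_0) - L_{\geq 2}}\,]_+ \,\big|\, Y_1 = 0\big]
+ (1 - q_{\diffp_0,t_0})\,\E\big[[\,1 - e^{(\diffp_g + \diffp_0 - t_0) - L_{\geq 2}}\,]_+ \,\big|\, Y_1 = 1\big],
\]
and each conditional expectation is at most $\delta_{\opt}(\vec{\cM};\,\cdot\,)$ at the indicated shifted budget, with equality attained by playing, on each branch independently, the analyst's optimal continuation. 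Taking $\sup_{t_0\in[0,\diffp_0]}$ yields the $\mbr(\diffp_0)$ recursion. The $\mdp(\diffp_0)$ recursion is the special case in which the first mechanism is forced to be $\grr{2\diffp_0,\diffp_0}$, i.e.\ $t_0 = \diffp_0$ with ``$\diffp_0$'' replaced by ``$2\diffp_0$'': the $\{Y_1 = 1\}$ shift becomes $\diffp_g + 2\diffp_0 - \diffp_0 = \diffp_g + \diffp_0$, the branch probabilities become $q_{2\diffp_0,\diffp_0}$ and $1 - q_{2\diffp_0,\diffp_0}$, and the supremum disappears.

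The main obstacle is the reduction step: rigorously establishing that generalized randomized response dominates every $\diffp_0$-BR (resp.\ $\diffp_0$-DP) mechanism and that this domination composes adaptively. This is precisely what is proved in \citet{DongDuRo19} (for $\mbr$) and \citet{KairouzOhVi17} (for $\mdp$), on which I would rely; the ensuing conditioning and backward induction are routine.
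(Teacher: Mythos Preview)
Your proposal is correct and matches the intended argument. The paper does not supply its own proof of this lemma: it states the recursion as taken from \citet{DongDuRo19} and uses it as a black box. Your reconstruction---reduce each round to generalized randomized response via the domination results of \citet{KairouzOhVi17} (for $\mdp$) and \citet{DongDuRo19} (for $\mbr$), then peel off the first round by conditioning on $Y_1$ and applying the hockey-stick representation---is precisely the argument underlying the cited result, with the $\mdp(\diffp_0)$ line obtained exactly as you note, by specializing to $\grr{2\diffp_0,\diffp_0}$ so that the supremum over $t_0$ collapses.
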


As in the previous section, we will only consider a homogenous privacy parameter, $\diffp > 0$, and leave the heterogenous case to future work.

We start with the following result that shows the worst case ordering is to select the BR mechanisms all at the end of the entire interaction. Hence, we might be able to decrease the overall privacy loss if it is known that the BR mechanisms will not all be selected at the end.
\begin{prop}\label{lem:bad_order}
Let $\vec{\cM}, \vec{\cM}'$ be two sequences of adaptively selected $\mbr$ and $\mdp$ mechanisms, where either may be empty.  We then have for any $\diffp_g \geq 0$
\[
\deltaopt(\vec{\cM}, \mbr, \mdp, \vec{\cM}';\diffp_g) \leq \deltaopt(\vec{\cM}, \mdp, \mbr, \vec{\cM}';\diffp_g)
\]
\end{prop}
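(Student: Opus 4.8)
The plan is to run everything through the recursion of Lemma~\ref{lem:adap_recursion_hetero}. First I would reduce to the case where the common prefix $\vec{\cM}$ is empty, using a monotonicity argument; then I would show that exchanging the adjacent $\mbr$ and $\mdp$ mechanisms converts a single supremum over the $\mbr$-parameter $t\in[0,\diffp]$ into two \emph{independent} suprema, which can only increase the value. Throughout I work in the homogeneous setting, so $\mbr=\mbr(\diffp)$ and $\mdp=\mdp(\diffp)$.

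\textbf{Reduction to the empty prefix.} For a real function $f$, set $T_{\mdp}[f](\diffp_g)\defeq q_{2\diffp,\diffp}\,f(\diffp_g-\diffp)+(1-q_{2\diffp,\diffp})\,f(\diffp_g+\diffp)$ and $T_{\mbr}[f](\diffp_g)\defeq\sup_{t\in[0,\diffp]}\big\{q_{\diffp,t}\,f(\diffp_g-t)+(1-q_{\diffp,t})\,f(\diffp_g+\diffp-t)\big\}$; by Lemma~\ref{lem:adap_recursion_hetero}, peeling the leading mechanism off a sequence is exactly applying $T_{\mdp}$ or $T_{\mbr}$ to the map $\diffp_g\mapsto\deltaopt(\text{remaining sequence};\diffp_g)$. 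Since $q_{\diffp,t},q_{2\diffp,\diffp}\in[0,1]$, both operators are monotone: $f_1\le f_2$ pointwise implies $T[f_1]\le T[f_2]$ pointwise. A composition of monotone operators is monotone, so if the two tails satisfy $\deltaopt(\mbr,\mdp,\vec{\cM}';\diffp_g)\le\deltaopt(\mdp,\mbr,\vec{\cM}';\diffp_g)$ for \emph{every} $\diffp_g$, then peeling the common prefix $\vec{\cM}$ one mechanism at a time preserves the inequality at the $\diffp_g$ of interest. Hence it suffices to treat $\vec{\cM}=\emptyset$.

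\textbf{The core exchange inequality.} Put $g\defeq\deltaopt(\vec{\cM}';\cdot)$, which may be treated as an arbitrary function. Applying Lemma~\ref{lem:adap_recursion_hetero} twice, $\deltaopt(\mbr,\mdp,\vec{\cM}';\diffp_g)=\sup_{t\in[0,\diffp]}A_t(\diffp_g)$, where (expanding the inner $\mdp$ step inside the outer $\mbr$ step) $A_t(\diffp_g)$ is the convex combination of $g(\diffp_g-t-\diffp),\,g(\diffp_g-t+\diffp),\,g(\diffp_g-t),\,g(\diffp_g-t+2\diffp)$ with weights $q_{\diffp,t}q_{2\diffp,\diffp}$, $q_{\diffp,t}(1-q_{2\diffp,\diffp})$, $(1-q_{\diffp,t})q_{2\diffp,\diffp}$, $(1-q_{\diffp,t})(1-q_{2\diffp,\diffp})$. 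On the other side, $\deltaopt(\mdp,\mbr,\vec{\cM}';\diffp_g)=q_{2\diffp,\diffp}\sup_{s_1\in[0,\diffp]}B_{s_1}(\diffp_g-\diffp)+(1-q_{2\diffp,\diffp})\sup_{s_2\in[0,\diffp]}B_{s_2}(\diffp_g+\diffp)$ with $B_s(\eta)\defeq q_{\diffp,s}g(\eta-s)+(1-q_{\diffp,s})g(\eta+\diffp-s)$. I would then verify the pointwise identity $A_t(\diffp_g)=q_{2\diffp,\diffp}B_t(\diffp_g-\diffp)+(1-q_{2\diffp,\diffp})B_t(\diffp_g+\diffp)$ by matching the four terms (using $p_{\diffp,t}=e^{-t}q_{\diffp,t}$ is not even needed here, just rearranging). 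Since the right-hand side is the choice $s_1=s_2=t$ in the expression for $\deltaopt(\mdp,\mbr,\vec{\cM}';\diffp_g)$, it is bounded above by that quantity for every $t$, and taking $\sup_t$ yields $\deltaopt(\mbr,\mdp,\vec{\cM}';\diffp_g)\le\deltaopt(\mdp,\mbr,\vec{\cM}';\diffp_g)$.

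The only non-mechanical points are the monotonicity reduction and the observation that setting $s_1=s_2=t$ recovers exactly $A_t$; the term-matching identity is routine algebra. The main subtlety I expect is organizational rather than technical: the reduction requires the tail inequality to hold at \emph{all} $\diffp_g$ simultaneously, which is fine because the $\vec{\cM}=\emptyset$ argument is uniform in $\diffp_g$. I would also remark that the same computation goes through verbatim in the heterogeneous case (an $\alpha$-BR mechanism before a $\diffp$-DP mechanism), replacing $q_{\diffp,t}$ by $q_{\alpha,t}$ and the $\pm\diffp$ shifts in the $\mbr$ terms by $\pm\alpha$, so the restriction to homogeneous parameters is only for consistency with the rest of the section.
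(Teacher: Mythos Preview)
Your proof is correct and shares the paper's core idea: after expanding two levels of the recursion, the $\mbr$-then-$\mdp$ side is a supremum over a \emph{single} $t\in[0,\diffp]$, while the $\mdp$-then-$\mbr$ side is a sum of two \emph{independent} suprema over $s_1,s_2\in[0,\diffp]$, and the choice $s_1=s_2=t$ recovers the former exactly. The term-matching identity $A_t=q_{2\diffp,\diffp}B_t(\cdot-\diffp)+(1-q_{2\diffp,\diffp})B_t(\cdot+\diffp)$ you verify is precisely the observation the paper makes when it writes out both four-term expansions and notes they ``consist of the same terms, except the former has a single $\sup$ and the latter takes a $\sup$ over two terms.'' Where you differ is in the treatment of the prefix $\vec{\cM}$: the paper unrolls the recursion through $\vec{\cM}$ explicitly, arriving at a (nested) supremum over a tree of variables $\cT=\{t_{j,\ell}\}$ and a sum of terms $\lambda(\bbt;\diffp)\cdot\deltaopt(\mbr,\mdp,\vec{\cM}';\alpha(\bbt;\diffp,\diffp_g))$, then compares term by term. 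Your monotonicity reduction---observing that $T_{\mdp}$ and $T_{\mbr}$ preserve pointwise inequalities because their coefficients lie in $[0,1]$---is cleaner and avoids introducing that tree of variables; it also makes transparent why the tail inequality must hold for all real $\diffp_g$ (a point the paper leaves implicit in the shifts $\alpha(\bbt;\diffp,\diffp_g)$). Both arguments yield the same conclusion with the same scope; yours is a tidy repackaging of the paper's.
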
  

\begin{proof}
Let $\vec{\cM}$ consist of $k'$ many $\diffp$-BR mechanisms, which are in positions $\ell_1, \cdots, \ell_{k'}$.  At each level $\ell$, there will be $2^{ \ell - 1}$ many variables to maximize an expression over, which we will write the variables as $t_{j,\ell} \in [0,\diffp]$ where $j \in [2^{\ell - 1}]$.  The full set of these variables can be written as $\cT = \{t_{i,\ell} : i \in [2^{\ell - 1}] ,  \ell \in \{\ell_1, \cdots, \ell_{k'} \} \}$.  Using the recursive formulation from Lemma~\ref{lem:adap_recursion_hetero}, we know that $\deltaopt(\vec{\cM}, \mbr, \mdp, \vec{\cM}';\diffp_g)$ will consist of a summation of terms with the following form, where we take a sup over all $\bbt \in \cT$ for some coefficient $\lambda(\bbt;\diffp)$ and term $\alpha(\bbt;\diffp,\diffp_g)$,
\[ 
\lambda(\bbt;\diffp) \cdot \deltaopt( \mbr, \mdp, \vec{\cM} ;\alpha(\bbt;\diffp,\diffp_g) ).
\]
We then expand this term using our recurrence formula,
\begin{align*}
& \deltaopt( \mbr, \mdp, \vec{\cM} ;\alpha(\bbt;\diffp,\diffp_g) ) \\
& = \sup_{s \in [0,\diffp]} \left\{ q_{\diffp,s}   \deltaopt( \mdp, \vec{\cM} ;\alpha(\bbt;\diffp,\diffp_g) -s) + (1 - q_{\diffp,s})  \deltaopt( \mdp, \vec{\cM} ;\alpha(\bbt;\diffp,\diffp_g) +\diffp - s) \right\} \\
& = \sup_{s \in [0,\diffp]} \left\{ q_{\diffp,s}  q_{2\diffp,\diffp}  \deltaopt(\vec{\cM} ;\alpha(\bbt;\diffp,\diffp_g) -s - \diffp)  + q_{\diffp,s}  (1 - q_{2\diffp,\diffp})  \deltaopt(\vec{\cM} ;\alpha(\bbt;\diffp,\diffp_g) -s + \diffp) \right. \\
& \qquad\qquad \left.+ (1 - q_{\diffp,s}) q_{2\diffp,\diffp} \deltaopt( \vec{\cM} ;\alpha(\bbt;\diffp,\diffp_g) - s) + (1 - q_{\diffp,s}) (1 - q_{2\diffp,\diffp}) \deltaopt( \vec{\cM} ;\alpha(\bbt;\diffp,\diffp_g) +2\diffp - s) \right\}.
\end{align*}

Similarly, $\deltaopt(\vec{\cM}, \mdp, \mbr, \vec{\cM}';\diffp_g)$ will consist of a summation of terms with the following form, with the same terms $\lambda(\bbt;\diffp)$ and $\alpha(\bbt;\diffp,\diffp_g)$
\[ 
\lambda(\bbt;\diffp) \cdot \deltaopt( \mdp, \mbr, \vec{\cM} ;\alpha(\bbt;\diffp,\diffp_g) ).
\]
Again, we use our recurrence formula to get the following
\begin{align*}
& \deltaopt( \mdp, \mbr, \vec{\cM} ;\alpha(\bbt;\diffp,\diffp_g) ) \\
& = q_{2\diffp,\diffp}   \deltaopt( \mbr, \vec{\cM} ;\alpha(\bbt;\diffp,\diffp_g) -\diffp) + (1 - q_{2\diffp,\diffp})  \deltaopt( \mbr, \vec{\cM} ;\alpha(\bbt;\diffp,\diffp_g) +\diffp) \\
& =  q_{2\diffp,\diffp} \sup_{s \in [0,\diffp]} \left\{  q_{\diffp,s}   \deltaopt(\vec{\cM} ;\alpha(\bbt;\diffp,\diffp_g) -s - \diffp)  +  (1 - q_{\diffp,s}) \deltaopt(\vec{\cM} ;\alpha(\bbt;\diffp,\diffp_g) -s ) \right\} \\
& \qquad + (1 - q_{2\diffp,\diffp}) \sup_{s' \in [0,\diffp]} \left\{  q_{\diffp,s'} \deltaopt( \vec{\cM} ;\alpha(\bbt;\diffp,\diffp_g) +\diffp - s') +  (1 - q_{\diffp,s'})  \deltaopt( \vec{\cM} ;\alpha(\bbt;\diffp,\diffp_g) +2\diffp - s') \right\}  \\
& = \sup_{s,s' \in [0,\diffp]} \left\{ q_{\diffp,s}  q_{2\diffp,\diffp}  \deltaopt(\vec{\cM} ;\alpha(\bbt;\diffp,\diffp_g) -s - \diffp)  + q_{\diffp,s'}  (1 - q_{2\diffp,\diffp})  \deltaopt(\vec{\cM} ;\alpha(\bbt;\diffp,\diffp_g) -s' + \diffp) \right. \\
& \qquad\qquad \left.+ (1 - q_{\diffp,s}) q_{2\diffp,\diffp} \deltaopt( \vec{\cM} ;\alpha(\bbt;\diffp,\diffp_g) - s) + (1 - q_{\diffp,s'}) (1 - q_{2\diffp,\diffp}) \deltaopt( \vec{\cM} ;\alpha(\bbt;\diffp,\diffp_g) +2\diffp - s') \right\}
\end{align*}
Hence, $\deltaopt( \mbr, \mdp, \vec{\cM} ;\alpha(\bbt;\diffp,\diffp_g) )$ and $\deltaopt( \mdp, \mbr, \vec{\cM} ;\alpha(\bbt;\diffp,\diffp_g) )$ consists of the same terms, except the former has a single $\sup$ and the latter takes a $\sup$ over two terms.  Hence, we must have 
\[ 
\lambda(\bbt) \cdot \deltaopt( \mbr, \mdp, \vec{\cM} ;\alpha(\bbt;\diffp,\diffp_g) ) \leq \lambda(\bbt;\diffp) \cdot \deltaopt( \mdp, \mbr, \vec{\cM} ;\alpha(\bbt;\diffp,\diffp_g) ).
\]
Because this applies for each term in both $\deltaopt(\vec{\cM}, \mbr, \mdp, \vec{\cM}';\diffp_g)$ and when we switch the BR and DP order $\deltaopt(\vec{\cM}, \mdp, \mbr, \vec{\cM}';\diffp_g)$, we have our result.
\end{proof}
\
Although we have a worst case ordering of $\mbr$ and $\mdp$ mechanisms, we still want to know if the ordering of these mechanisms leads to \emph{strictly} larger privacy losses.  


\subsection{Single BR Mechanism}
We first consider a single BR mechanism $\mbr$ and $k-1$ $\mdp$ mechanisms where each mechanism can be selected adaptively.  It turns out that a single $\diffp$-BR Mechanism is insufficient to induce a difference in $\delta_{\opt}$ for different orderings of the mechanisms.
\begin{prop}\label{prop:sbr-ord}
Let $\diffp>0, N\in\naturals $, and let $\vec{\cA}_{N}, \vec{\cB}_{N}$ be sequences of mechanisms ($\cA_{1} \ldots \cA_{N}$), ($\cB_{1} \ldots \cB_{N})$ for which $\exists k_{\cA}, k_{\cB}\in[N]$ s.t. $\cA_{k_{\cA}}, \cB_{k_{\cB}} = \mbr$ and for $i\neq k_{\cA}, j\neq k_{\cB}$ $\cA_{i}, \cB_{j}= \mdp$.
Then for all $\diffp_{g} \in \reals$ and all $\vec{\cA}_{N}, \vec{\cB}_{N}$:
\begin{align*}
& \delta_{\opt}\left(\vec{\cA}_{N}; \diffp_{g}\right) = \delta_{\opt}\left(\vec{\cB}_{N};\diffp_{g}\right).
\end{align*}
\end{prop}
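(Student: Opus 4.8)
The plan is to reduce the claim to the two extreme positions of the lone $\mbr(\diffp)$ mechanism and then settle those directly from the recursion in Lemma~\ref{lem:adap_recursion_hetero}. The $N=1$ case is trivial. For $N\ge 2$: applying Proposition~\ref{lem:bad_order} to one adjacent transposition at a time (with all remaining mechanisms being $\mdp(\diffp)$, and allowing the empty prefix/suffix) shows that $\deltaopt$ is non-decreasing in the position of the single BR mechanism, so $\deltaopt(\text{BR at position }1;\diffp_g)\le\deltaopt(\text{BR at position }2;\diffp_g)\le\cdots\le\deltaopt(\text{BR at position }N;\diffp_g)$. Hence every ordering is squeezed between the ``BR first'' value $\deltaopt(\mbr(\diffp),\underbrace{\mdp(\diffp),\dots,\mdp(\diffp)}_{N-1};\diffp_g)$ and the ``BR last'' value $\deltaopt(\underbrace{\mdp(\diffp),\dots,\mdp(\diffp)}_{N-1},\mbr(\diffp);\diffp_g)$, and it suffices to prove these two coincide.

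\textbf{Reduction to a supremum-exchange.} Iterating the DP branch of Lemma~\ref{lem:adap_recursion_hetero} from the base case $\deltaopt(\emptyset;\cdot)=[1-e^{(\cdot)}]_+$ gives the binomial expansion $\deltaopt(\mdp(\diffp)^{N-1},\mbr(\diffp);\diffp_g)=\sum_{b=0}^{N-1} c_b\,\deltaopt(\mbr(\diffp);\beta_b)$, where $c_b\defeq\binom{N-1}{b}q_{2\diffp,\diffp}^{N-1-b}(1-q_{2\diffp,\diffp})^b\ge 0$ and $\beta_b\defeq\diffp_g+(2b-N+1)\diffp$. For the ``BR first'' value, applying the BR branch of Lemma~\ref{lem:adap_recursion_hetero} once and then the same expansion of $\mdp(\diffp)^{N-1}$ gives $\deltaopt(\mbr(\diffp),\mdp(\diffp)^{N-1};\diffp_g)=\sup_{t\in[0,\diffp]}\sum_{b=0}^{N-1}c_b\,\phi_{\beta_b}(t)$, where, using $p_{\diffp,t}=e^{-t}q_{\diffp,t}$ and $1-p_{\diffp,t}=e^{\diffp-t}(1-q_{\diffp,t})$, I set $\phi_\beta(t)\defeq q_{\diffp,t}[1-e^{\beta-t}]_++(1-q_{\diffp,t})[1-e^{\beta+\diffp-t}]_+$. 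By the $k=1$ case of Lemma~\ref{lem:non_interactive_GRR} one has $\sup_{t\in[0,\diffp]}\phi_\beta(t)=\deltaopt(\mbr(\diffp);\beta)$, so the two values agree precisely when the supremum can be pulled through the sum: $\sup_t\sum_b c_b\phi_{\beta_b}(t)=\sum_b c_b\sup_t\phi_{\beta_b}(t)$.

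\textbf{The shape of $\phi_\beta$.} The key structural fact is that $\phi_\beta(\cdot)$, on $[0,\diffp]$, is constant when $\beta\notin(-\diffp,\diffp)$ and unimodal otherwise. On the sub-interval of $t$ where both clipped terms are strictly positive, the identity $q_{\diffp,t}+(1-q_{\diffp,t})e^{\diffp}=e^{t}$ (immediate from $q_{\diffp,t}=\tfrac{1-e^{t-\diffp}}{1-e^{-\diffp}}$) gives $\phi_\beta(t)=1-e^{\beta}$; on the sub-interval where only the first term is positive, a one-line computation gives that the sign of $\phi_\beta'(t)$ equals the sign of $\beta+\diffp-2t$; and where neither is positive, $\phi_\beta\equiv 0$. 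Tracking which of these sub-intervals meet $[0,\diffp]$ for each sign of $\beta$ yields: $\phi_\beta\equiv 0$ for $\beta\ge\diffp$; $\phi_\beta\equiv 1-e^{\beta}$ for $\beta\le-\diffp$; and for $\beta\in(-\diffp,\diffp)$, $\phi_\beta$ strictly increases then decreases, with its maximum over $[0,\diffp]$ at the interior point $t=(\beta+\diffp)/2$. Since the $\beta_b$ have pairwise differences that are multiples of $2\diffp$, at most one of them lies in the open interval $(-\diffp,\diffp)$, i.e.\ at most one $\phi_{\beta_b}$ is non-constant on $[0,\diffp]$; if one is, say $\phi_{\beta_{b^\star}}$, then $t^\star\defeq(\beta_{b^\star}+\diffp)/2$ maximizes it and (being constant) every other $\phi_{\beta_b}$ as well, and if none is, any $t$ works. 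A single $t^\star$ thus maximizes all $\phi_{\beta_b}$ simultaneously, so $\sup_t\sum_b c_b\phi_{\beta_b}(t)=\sum_b c_b\phi_{\beta_b}(t^\star)=\sum_b c_b\sup_t\phi_{\beta_b}(t)$, which is exactly the ``BR last'' expansion. Combined with the sandwiching, this proves the proposition.

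\textbf{Main obstacle.} The only substantive work is the shape analysis of $\phi_\beta$: extracting the closed form $1-e^{\beta}$ on the ``both-active'' region via the $q_{\diffp,t}+(1-q_{\diffp,t})e^{\diffp}=e^{t}$ identity, the sign of $\phi_\beta'$ on the ``only-first-active'' region, and the (slightly fiddly) bookkeeping of how these regions intersect $[0,\diffp]$ as $\beta$ ranges over $(-\infty,-\diffp]$, $(-\diffp,\diffp)$, and $[\diffp,\infty)$; everything else --- the binomial expansions, the $k=1$ specialization of Lemma~\ref{lem:non_interactive_GRR}, and the $2\diffp$-spacing observation --- is routine. (One could instead bypass Proposition~\ref{lem:bad_order} and move the BR mechanism past one $\mdp(\diffp)$ at a time, since the objective-splitting in the proof of Proposition~\ref{lem:bad_order} again reduces to a common-maximizer question for a $2\diffp$-spaced family of $\phi_\beta$'s.)
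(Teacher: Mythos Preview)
Your proof is correct and takes a genuinely different route from the paper's. The paper inducts on $N$: in the inductive step it compares the case where $\cA_1=\mbr$ against $\cB_1=\mdp$ by expanding two levels of the recursion, invoking Lemma~\ref{lem-sbr-ord-2} to write the trailing $\mdp^{N-2}$ block as a sum of clipped exponentials, and then applying the reduction identity (Lemma~\ref{lem-sbr-ord-1}) to observe that after pairing terms only a single index $i^\star$ still depends on $t$, forcing the two suprema in the $\vec{\cB}_N$ expansion to share a maximizer with the single supremum in $\vec{\cA}_N$. You instead bypass the induction entirely: Proposition~\ref{lem:bad_order} already gives the chain of inequalities over positions, so only the two extremes need to be compared, and you do that comparison in one shot by expanding the full $\mdp^{N-1}$ block as a binomial and reducing to $\sup_t\sum_b c_b\phi_{\beta_b}(t)=\sum_b c_b\sup_t\phi_{\beta_b}(t)$.

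The underlying mechanics are the same --- your shape analysis of $\phi_\beta$ is exactly the reduction identity of Lemma~\ref{lem-sbr-ord-1}, and your $2\diffp$-spacing observation is the paper's ``same parity'' remark --- but your packaging is tighter: one comparison instead of an induction with case analysis. The cost is a dependence on Proposition~\ref{lem:bad_order}, which the paper's proof of Proposition~\ref{prop:sbr-ord} does not use. Two cosmetic points: Proposition~\ref{lem:bad_order} is stated in the paper only for $\diffp_g\ge 0$, though its proof goes through verbatim for all $\diffp_g\in\reals$, so you should note this; and your description ``strictly increases then decreases'' is slightly off for $\beta\in(-\diffp,0)$ (there is a trailing constant plateau on $[\beta+\diffp,\diffp]$), but the conclusion you actually use --- that the maximum over $[0,\diffp]$ is attained at the interior point $(\beta+\diffp)/2$ --- is correct.
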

The proof follows from definitions and case analysis. Details are in appendix (\ref{lem-sbr-ord-1}).

\begin{lemma}\label{lem-sbr-ord-2}
For $\ell\in \naturals$, $x\in\reals$, $\diffp \geq 0$, and $\diffp_{i} = \diffp$ for $i\in\{0, \ldots, \ell\}$ we define:
\begin{align*}
\delta_{\ell}\left(x\right) \defeq \delta_{\opt}\left(\mdp(\diffp_{1}),\ldots,\mdp(\diffp_{\ell}); x\right).\
\end{align*}
Then we have the following identity for some constants $\lambda_{\ell, i} \in \reals$:
\begin{align*}
& \delta_{\ell}\left(x\right) = \sum_{i\in\left\{0,\ldots,\ell\right\}} \lambda_{\ell, i}\left[1-e^{\left(2i-\ell\right)\diffp + x}\right]_{+} .
\end{align*}

\end{lemma}

The proof is by applying Lemma~\ref{lem:adap_recursion_hetero} and induction on $\ell$. Details are in appendix (\ref{lem-sbr-ord-2})

We now provide a proof sketch for Proposition~\ref{prop:sbr-ord}, details are in appendix \ref{lem:sbr-ord:proof}.

\begin{proof}[Proof Sketch for Proposition~\ref{prop:sbr-ord}] 
To prove this we induct on the number, $N$, of $\diffp$-DP mechanisms. The base case $\left(N=2\right)$ follows from expanding both $\deltaopt(\vec{\cA}_{2}; \diffp_{g})$ and $\deltaopt(\vec{\cB}_{2}; \diffp_{g})$ using Lemma ~\ref{lem:adap_recursion_hetero}, eliminating zero terms, and comparing. (See supplementary file)

Now, consider $N>2$. Suppose that for $k<N$ we have $\deltaopt(\vec{\cA}_{k}; \diffp_{g})= \deltaopt(\vec{\cB}_{k}; \diffp_{g})$ for all $\vec{\cA}_{k}, \vec{\cB}_{k}$. Then to show ~$\deltaopt(\vec{\cA}_{N}; \diffp_{g}) = \deltaopt(\vec{\cB}_{N}; \diffp_{g})$ there are two nontrivial cases to consider:
\begin{itemize}
\item[1.] $\cA_{1}, \cB_{1} = \mdp$. Here we simply expand the first $\diffp$-DP mechanism on each side using definitions and apply the inductive hypothesis.
\item[2.] $\cA_{1} = \mbr$ and $\cB_{1} = \mdp$. The proof here is slightly more involved. First we expand the first two terms of each side:
\end{itemize}

\begin{align*}
&\deltaopt\left(\vec{\cA}_{N}; \diffp_{g}\right) = \\
& \sup_{t\in[0,\diffp]}\biggl\{ q_{t, \diffp}\left[ \qedp \deltaopt\left(\mdp,\ldots,\mdp; \diffp_{g} - \diffp - t\right) + \left(1-\qedp\right)  \deltaopt\left(\mdp,\ldots,\mdp; \diffp_{g} + \diffp - t\right) \right]\\
&\vphantom{\bigg\{} + \left(1-q_{t, \diffp}\right) \left[\qedp \deltaopt\left(\mdp,\ldots,\mdp; \diffp_{g} - t\right) + \left(1-\qedp\right)  \deltaopt\left(\mdp,\ldots,\mdp; \diffp_{g} + 2\diffp - t\right) \right] \biggr\} .\\
&\deltaopt\left(\vec{\cB}_{N}; \diffp_{g}\right) =\\
& \qedp\sup_{\ton\in[0,\diffp]}\biggl\{ q_{\ton, \diffp} \deltaopt\left(\mdp,\ldots,\mdp; \diffp_{g} - \diffp - \ton\right) + \left(1-q_{\ton, \diffp}\right)  \deltaopt\left(\mdp,\ldots,\mdp; \diffp_{g} - \ton\right) \biggr\} + \left(1- \right.  \\
&\left. \qedp\right) \sup_{\ttw\in[0,\diffp]}\biggl\{ q_{\ttw, \diffp} \deltaopt\left(\mdp,\ldots,\mdp; \diffp_{g} + \diffp - \ttw\right) + \left(1-q_{\ttw, \diffp}\right)  \deltaopt\left(\mdp,\ldots,\mdp; \diffp_{g} + 2\diffp - \ttw\right) \biggr\}.
\end{align*}

First we expand the first two terms of each side, applying the inductive hypothesis to $\vec{\cB}_{N}$ after expanding the first term to ensure that the second mechanism is $\diffp$-BR wlog.
Then, by applying the summation and reduction formulas from Lemmas ~\ref{lem-sbr-ord-2} and ~\ref{lem-sbr-ord-1}, for some $C_{1}(\diffp), C_{2}(\diffp) \in \reals$, we can write:

\begin{align}
&\deltaopt\left(\vec{\cA}_{N}; \diffp_{g}\right) =  \sup_{t\in[0,\diffp]}\biggl\{ \qedp \left(C_{1}(\diffp) + q_{t, \diffp} \lambda_{i^{*}} \left[1-e^{ \diffp_{g} + \left(-1 + 2i^{*} - \left(N-2\right)\right)\diffp- t}\right]_{+}\right) \nonumber \\
&  + \left(1-\qedp\right) \left(C_{2}(\diffp) + \quad q_{t, \diffp} \lambda_{i^{*}}\left[1-e^{\diffp_{g} + \left(1+ 2i^{*} - \left(N-2\right)\right)\diffp- t}\right]_{+}\right)\biggr\}. \label{sbr-1} \\
&\deltaopt\left(\vec{\cB}_{N}; \diffp_{g}\right) = \qedp\sup_{\ton\in[0,\diffp]}\biggl\{ C_{1}(\diffp) + q_{t, \diffp} \lambda_{i^{*}} \left[1-e^{ \diffp_{g} + \left(-1 + 2i^{*}- \left(N-2\right)\right)\diffp- \ton}\right]_{+}\biggr\} \nonumber \\
& + \left(1-\qedp\right) \sup_{\ttw\in[0,\diffp]}\biggl\{C_{2}(\diffp) + q_{\ttw, \diffp} \lambda_{i^{*}}\left[1-e^{\diffp_{g} + \left(1+ 2i^{*} - \left(N-2\right)\right)\diffp- \ttw}\right]_{+}\biggr\}. \label{sbr-2}
\end{align}

Then both remaining terms in $\deltaopt(\vec{\cB}_{N}; \diffp_{g})$ achieve the supremum for the same argument $  t^{*}=\ton=\ttw$.  Fixing $t=t^{*}$ in (\ref{sbr-1}) and $\ton=\ttw=t^{*}$ in (\ref{sbr-2}) and comparing it is clear that $\deltaopt(\vec{\cA}_{N}; \diffp_{g}) \geq  \deltaopt(\vec{\cB}_{N}; \diffp_{g})$. By the triangle inequality, we also have $\deltaopt(\vec{\cB}_{N}; \diffp_{g}) \geq \deltaopt(\vec{\cA}_{N}; \diffp_{g})$, which proves the result.
\end{proof}

The following corollary states that we can use the non-adaptive, optimal DP composition formula from Theorem~\ref{thm:opt_nonadaptive} with $m= 1$, even when the single BR mechanism can be adaptively selected at any round.
\begin{corollary}
Let $\vec{\cA}$ be any sequence of $k-1$ adaptively selected $\mdp$ mechanisms and a single $\mbr$ that can be adaptively selected, e.g. $\vec{\cA} = (\mdp, \cdots, \mdp, \mbr, \mdp, \cdots, \mdp)$.  Then $\deltaopt(\vec{\cA}; \diffp_g)$ can be computed with Theorem~\ref{thm:opt_nonadaptive}.
\end{corollary}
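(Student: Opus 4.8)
The plan is to reduce $\vec{\cA}$ to a canonical ordering that Theorem~\ref{thm:opt_nonadaptive} already handles. First I would invoke Proposition~\ref{prop:sbr-ord} with $N = k$: since $\vec{\cA}$ is a length-$k$ sequence of adaptively selected mechanisms containing exactly one $\mbr$ mechanism and $k-1$ copies of $\mdp$, its optimal privacy parameter is independent of the round in which the $\mbr$ mechanism appears, so
\[
\deltaopt(\vec{\cA};\diffp_g) = \deltaopt\big(\mbr(\diffp),\mdp(\diffp),\ldots,\mdp(\diffp);\diffp_g\big),
\]
where on the right the single $\mbr$ mechanism has been moved to the first round.

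Next I would observe that a first-round mechanism has no previous outcomes to condition on, so in the recursion of Lemma~\ref{lem:adap_recursion_hetero} the parameter $t_0 \in [0,\diffp]$ of the lone $\mbr$ mechanism is a fixed constant rather than a function of the analyst's view; equivalently, the $\mbr$ mechanism is \emph{non-adaptively} selected. The remaining $k-1$ $\mdp(\diffp)$ mechanisms may still be chosen adaptively, but this is exactly the regime permitted by Theorem~\ref{thm:opt_nonadaptive} (the $\diffp$-DP mechanisms are allowed to be adaptive, and only the $\diffp$-BR mechanisms must be fixed in advance). Hence the right-hand side above equals $\deltaopt\big(\mbr(\diffp_1)\times\mdp(\diffp_2)\times\cdots\times\mdp(\diffp_k);\diffp_g\big)$ with all $\diffp_i = \diffp$, which is precisely the quantity Theorem~\ref{thm:opt_nonadaptive} evaluates with $m = k-1$ (so $k-m = 1$ BR mechanism). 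Chaining the two equalities expresses $\deltaopt(\vec{\cA};\diffp_g)$ as the Theorem~\ref{thm:opt_nonadaptive} formula, and since that theorem already permits an adversarially chosen ordering, the conclusion holds for every placement of the $\mbr$ mechanism in $\vec{\cA}$.

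I do not expect a real obstacle here: there is no new computation, and the argument is essentially a bookkeeping composition of two earlier results. The one point that needs care is the first step --- justifying that relocating the single BR mechanism to round one is lossless --- but that is exactly the content of Proposition~\ref{prop:sbr-ord}, whose inductive proof already absorbed that difficulty. One could instead route through the worst-case ordering of Proposition~\ref{lem:bad_order} to push the BR mechanism to the end, but that is unnecessary once Proposition~\ref{prop:sbr-ord} is in hand, and placing the BR mechanism first is slightly more convenient since it makes the non-adaptivity of its $t$ parameter immediate.
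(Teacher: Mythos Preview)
Your proposal is correct and matches the paper's intended argument: the corollary is stated without proof immediately after Proposition~\ref{prop:sbr-ord}, and your two-step reduction (use Proposition~\ref{prop:sbr-ord} to move the lone $\mbr$ to round~1, then observe that a first-round $t$ cannot depend on any history so the setting collapses to the non-adaptive regime of Theorem~\ref{thm:opt_nonadaptive} with $k-m=1$) is exactly the reasoning the paper leaves implicit. Your remark that pure-DP rounds may remain adaptive without affecting the formula is also the right justification, and your side note about Proposition~\ref{lem:bad_order} being an unnecessary detour is accurate.
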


\subsection{Ordering of BR and DP Impacts the Privacy Loss}

To see that the privacy loss can strictly increase if we change the ordering of BR and DP mechanisms, we consider the following simple example where we compose $k = 3$ $\diffp$-DP mechanisms adaptively and $2$ of them are $\diffp$-BR.  In this case, we can directly compute $\deltaopt(\cdot;\diffp_g)$ for each of the three possible orderings of mechanisms.
\begin{lemma}\label{lem:xyz}
Let $\diffp>0$ and $\diffp_g \geq 0$.  If $\diffp_g \geq \diffp$, then we have
\begin{align*}
\deltaopt(\mbr, \mbr, \mdp ;\diffp_g) &= \deltaopt(\mbr, \mdp, \mbr ;\diffp_g)
\\
& = \deltaopt(\mdp, \mbr, \mbr ;\diffp_g).
\end{align*}
For $\diffp_g \leq \diffp$ we define the following functions:
\begin{align*}
x(t) & \defeq q_{2\diffp,\diffp}  \left( q_{\diffp,t} q_{\diffp,\tfrac{\diffp_g - t}{2}}^2 \left(1- e^{- \diffp}\right) + ( 1 - q_{\diffp,t } )  q_{\diffp,\tfrac{\diffp_g + \diffp - t}{2}}^2 \left( 1 - e^{ - \diffp} \right) \right) \\
y(t) & \defeq q_{2\diffp,\diffp} \left(q_{\diffp,t} (1 - e^{\diffp_g - \diffp - t}) + (1-  q_{\diffp,t}) q_{\diffp,\tfrac{\diffp_g +\diffp - t}{2}}^2 \left(1- e^{- \diffp}\right) \right) \\
z(t) & \defeq (1 - q_{2\diffp,\diffp})q_{\diffp,t}q_{\diffp,\diffp + \tfrac{\diffp_g - t}{2}}^2 \left( 1 - e^{ -\diffp} \right) .
\end{align*}
We then can write out the following expressions:
\begin{align*}
\deltaopt(\mdp, \mbr, \mbr ;\diffp_g)  & = \max \left\{ \sup_{t \in [0,\diffp_g)}x(t), \sup_{t \in [\diffp_g, \diffp] }y(t) \right\} + \sup_{t' \in [\diffp_g, \diffp]} z(t') \\
 \deltaopt(\mbr, \mdp, \mbr ;\diffp_g)  & = \max \left\{ \sup_{t \in [0,\diffp_g)}x(t), \sup_{t \in [\diffp_g, \diffp] }y(t) +z(t) \right\} \\
\deltaopt(\mbr, \mbr, \mdp ;\diffp_g)  & = \max \left\{ \sup_{t \in [0,\diffp_g) } x(t), \sup_{t \in [\diffp_g,\diffp] } y(t) + z(t)  \right\} .
\end{align*}
\end{lemma}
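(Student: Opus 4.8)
The proof will be a direct computation using the recursive formulation from Lemma~\ref{lem:adap_recursion_hetero}. For each of the three orderings, I will unwind the recursion one mechanism at a time, starting from the innermost base case $\deltaopt(\emptyset;\diffp_g) = [1-e^{\diffp_g}]_+$ and then working outward. The key simplification is that $\deltaopt(\mbr(\diffp);\diffp_g)$ is itself a $\sup$ over $t\in[0,\diffp]$ of a two-term expression, and when we expand $\deltaopt(\mbr,\mbr;\diffp_g)$ we can apply Lemma~\ref{lem:gen_convexity} (the convexity result) to force the two free parameters to be equal, so that $\deltaopt(\mbr(\diffp),\mbr(\diffp);\diffp_g') = \sup_{s\in[0,\diffp]} \{ q_{\diffp,s}\, q_{\diffp,\frac{\diffp_g'-s}{2}}^2(1-e^{-\diffp}) + (1-q_{\diffp,s})\, q_{\diffp,\frac{\diffp_g'+\diffp-s}{2}}^2(1-e^{-\diffp})\}$ after collapsing the innermost $[1-e^{\cdot}]_+$ terms using the identity $p_{\diffp,t} = e^{-t}q_{\diffp,t}$. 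This is precisely where the squared $q$-terms in the definitions of $x,y,z$ come from.

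\textbf{The two regimes.} First I would dispatch the easy case $\diffp_g \geq \diffp$. Here one checks that after peeling off the first mechanism (be it $\mbr$ or $\mdp$) the residual target privacy parameter $\diffp_g \pm (\cdot)$ stays large enough that all the relevant $[1-e^{(\cdot)}]_+$ clamps are either identically zero or identically active in a way that is insensitive to the order; concretely, when $\diffp_g \geq \diffp$ the contributions where the exponent would be positive vanish, and the surviving terms are symmetric functions of the peeled-off parameters, so Proposition~\ref{lem:bad_order} (worst ordering puts BR last) collapses to equality. This gives the first displayed chain of equalities. For $\diffp_g \leq \diffp$, I would carry out the three explicit expansions: for $\deltaopt(\mdp,\mbr,\mbr;\diffp_g)$, peel the $\mdp$ first via the (non-sup) $\mdp$-recursion to split into a $q_{2\diffp,\diffp}$-weighted branch at parameter $\diffp_g-\diffp$ and a $(1-q_{2\diffp,\diffp})$-weighted branch at $\diffp_g+\diffp$, then expand each $\deltaopt(\mbr,\mbr;\cdot)$ as above; because $\diffp_g \leq \diffp$, the first branch has a target that can be negative and the $[1-e^{\cdot}]_+$ clamps genuinely bite, producing the split of the $\sup$ over $t$ into $[0,\diffp_g)$ (giving $x(t)$) versus $[\diffp_g,\diffp]$ (giving $y(t)$), while the second branch always has the clamp active (giving $z(t)$). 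For the other two orderings, peel $\mbr$ first (a $\sup$ over $t$), then handle the remaining $(\mbr,\mdp)$ or $(\mdp,\mbr)$ pair — here Proposition~\ref{prop:sbr-ord} or the single-BR corollary lets me commute the remaining $\mbr$ and $\mdp$, so both reduce to the same inner structure, and the outer $\sup_t$ now wraps $\max\{\ldots\}$ rather than splitting across two independent $\sup$'s.

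\textbf{Main obstacle.} The delicate bookkeeping is tracking exactly where the $[\,\cdot\,]_+$ clamp activates as the running privacy budget shifts by $\pm\diffp$ and $\pm t$, and verifying that the breakpoint is exactly $t = \diffp_g$ — this requires checking the sign of exponents like $\diffp_g - \diffp - t$ and $\diffp_g + \diffp - t$ over $t\in[0,\diffp]$ under the standing assumption $\diffp_g \leq \diffp$, and confirming that in the $\mbr$-first orderings the two pieces $y(t)$ and $z(t)$ attach to the \emph{same} free parameter $t$ (hence $y(t)+z(t)$ inside one $\sup$) whereas in the $\mdp$-first ordering they decouple into $\sup_t x/y$ plus an independent $\sup_{t'} z$. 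The convexity reduction of Lemma~\ref{lem:gen_convexity} must be invoked carefully to justify collapsing each pair of adjacent $\mbr$ parameters to a common value before the clamps are resolved; getting the order of "apply convexity" versus "resolve the $[\,\cdot\,]_+$" right is the crux. The actual algebra, once the clamp regions are pinned down, is a routine substitution of $p_{\diffp,t}=e^{-t}q_{\diffp,t}$ and $1-p_{\diffp,t}=e^{\diffp-t}(1-q_{\diffp,t})$ and is omitted.
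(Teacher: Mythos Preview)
Your overall strategy---unwind the recursion of Lemma~\ref{lem:adap_recursion_hetero}, track where the $[\,\cdot\,]_+$ clamps activate, and split the outer $\sup$ at $t=\diffp_g$---is exactly what the paper does, and your shortcut of invoking Proposition~\ref{prop:sbr-ord} (the $N=2$ case) to identify $\deltaopt(\mbr,\mdp;\cdot)=\deltaopt(\mdp,\mbr;\cdot)$ and hence immediately equate the two $\mbr$-first orderings is a genuine simplification over the paper, which expands both from scratch and only observes the equality at the end.

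There is, however, a real gap in your justification for the $q^2$ terms. Lemma~\ref{lem:gen_convexity} is a statement about the \emph{non-adaptive} function $\delta((t_1,\diffp),(t_2,\diffp),\ldots)$ from~\eqref{eq:gen_delta_t}, where all $t_i$ are chosen simultaneously. In the adaptive $\deltaopt(\mbr,\mbr;\alpha)$ there are \emph{three} free parameters---one outer $t_1$ and two independent inner parameters $t_2,t_2'$ (one for each branch of the first outcome)---so the convexity lemma does not apply, and ``forcing the two free parameters equal'' is not what is happening. The paper never invokes convexity here. Instead it computes each inner one-dimensional $\sup$ directly: for instance it shows
\[
\sup_{t'\in[0,\diffp]}\Bigl\{ q_{\diffp,t'}\bigl[1-e^{\alpha-t'}\bigr]_+ + (1-q_{\diffp,t'})\bigl[1-e^{\alpha+\diffp-t'}\bigr]_+\Bigr\}
= \max\Bigl\{ q_{\diffp,\tfrac{\alpha+\diffp}{2}}^2(1-e^{-\diffp}),\ 1-e^{\alpha}\Bigr\}
\]
by locating the critical point $t'=(\alpha+\diffp)/2$ and comparing against the boundary, and then checks algebraically that the first branch always wins (this is where the identity $\bigl(1-e^{\alpha/2}\bigr)^2\ge 0$ enters). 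The squared $q$-terms are thus the \emph{value} of the inner optimum, not a consequence of symmetrizing parameters. If you replace your appeal to Lemma~\ref{lem:gen_convexity} with this direct inner-sup computation (and carry it out separately in the regimes $\alpha<0$ and $\alpha\in[0,\diffp]$, since the clamp pattern differs), the rest of your outline goes through.
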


The proof requires several technical details and is left to the appendix (\ref{lem:xyz:proof}).

From this, we can also obtain the following result, which demonstrates that the ordering of the mechanisms changes the overall privacy loss.

\begin{lemma}\label{lem:ordbrdp2}
Let $0\leq \diffp_g < \diffp$.  Then 
\[
\deltaopt(\mdp, \mbr, \mbr ;\diffp_g) > \deltaopt(\mbr, \mdp, \mbr ;\diffp_g).
\]
Furthermore, for $x(t), y(t), z(t)$ defined in Lemma~\ref{lem:xyz}, we have
\begin{align*}
\deltaopt(\mdp, \mbr, \mbr ;\diffp_g)  & = \left\{ \begin{array}{lr}
													x(\diffp/2) + z\left( \frac{2\diffp + \diffp_g}{3} \right) & \text{ if } \diffp_g \geq \diffp/2 \\
													y\left( \frac{\diffp + \diffp_g}{3} \right) +   z\left( \frac{2\diffp + \diffp_g}{3} \right)& \text { else}
													\end{array}
											\right.
											\\
 \deltaopt(\mbr, \mdp, \mbr ;\diffp_g)  & = \left\{ \begin{array}{lr}
													\max\{ x(\diffp/2), y(\diffp_g) + z(\diffp_g) \} & \text{ if } \diffp_g \geq \diffp/2 \\
													\max\{x(\diffp_g), y(\diffp/2) + z(\diffp/2) \} & \text { else}
													\end{array}
											\right.
\end{align*}
\end{lemma}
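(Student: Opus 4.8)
The plan is to take the functions $x,y,z$ together with their sup-representations from Lemma~\ref{lem:xyz} and carry out the one-dimensional maximizations explicitly, then compare the two resulting formulas. I would first substitute $\beta = e^{(\diffp_g-t)/2-\diffp}$ (strictly decreasing in $t$) and abbreviate $h=e^{\diffp/2}$, $c=e^{\diffp_g-\diffp}$; this rewrites $x(t),y(t),z(t),y(t)+z(t)$ as ratios of low-degree polynomials in $\beta$. Differentiating and clearing the (manifestly positive) common factors, I expect the $\beta$-heavy terms to cancel, leaving: $z'(t)$ with the sign of $(1-e^{(\diffp_g-t)/2})(e^{(\diffp_g-t)/2}-e^{t-\diffp})$, so on $[\diffp_g,\diffp]$ the function $z$ vanishes at both endpoints, is strictly positive in the interior, and is maximized at the unique point with $(\diffp_g-t)/2=t-\diffp$, i.e. $t=\tfrac{2\diffp+\diffp_g}{3}$; $x'(t)$ with the sign of $(h-1)(h^3\beta^2-c)$, so $x$ is strictly unimodal on $[0,\diffp]$ with peak at $t=\diffp/2$ (independently of $\diffp_g$); $y'(t)$ with the sign of $(h^3\beta-1)(h^5\beta^3-c)$, whose first factor is nonnegative on $[\diffp_g,\diffp]$ and whose second factor is strictly decreasing with zero at $t=\tfrac{\diffp+\diffp_g}{3}$, so $y$ is increasing-then-decreasing on $[\diffp_g,\diffp]$ with interior peak at $\tfrac{\diffp+\diffp_g}{3}$ precisely when $\diffp_g\le\diffp/2$ and is decreasing on all of $[\diffp_g,\diffp]$ otherwise; and $(y+z)'(t)$ with the sign of $h^3\beta^2-c$, so $y+z$ is strictly unimodal on $[0,\diffp]$ with peak again at $t=\diffp/2$.

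With these maximizers in hand, I would assemble the explicit formulas. Since $x$ and $y$ agree at $t=\diffp_g$ (both equal $q_{2\diffp,\diffp}\bigl(q_{\diffp,\diffp_g}(1-e^{-\diffp})+(1-q_{\diffp,\diffp_g})q_{\diffp,\diffp/2}^2(1-e^{-\diffp})\bigr)$, using $q_{\diffp,0}=1$), the glued maximum $\max\{\sup_{[0,\diffp_g)}x,\sup_{[\diffp_g,\diffp]}y\}$ is $x(\diffp/2)$ when $\diffp_g\ge\diffp/2$ — the peak of $x$ lies in $[0,\diffp_g)$, while $y$ is decreasing so $\sup_{[\diffp_g,\diffp]}y=y(\diffp_g)=x(\diffp_g)\le x(\diffp/2)$ — and is $y(\tfrac{\diffp+\diffp_g}{3})$ when $\diffp_g<\diffp/2$, since then $x$ is increasing on $[0,\diffp_g]$ so $\sup_{[0,\diffp_g)}x=x(\diffp_g)=y(\diffp_g)\le y(\tfrac{\diffp+\diffp_g}{3})$. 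Adding $\sup_{[\diffp_g,\diffp]}z=z(\tfrac{2\diffp+\diffp_g}{3})$ yields the claimed formula for $\deltaopt(\mdp,\mbr,\mbr;\diffp_g)$. For the other ordering, $\sup_{[\diffp_g,\diffp]}(y+z)$ equals $y(\diffp/2)+z(\diffp/2)$ when $\diffp_g<\diffp/2$ and $y(\diffp_g)+z(\diffp_g)$ when $\diffp_g\ge\diffp/2$ (then $\diffp/2\notin[\diffp_g,\diffp]$ and $y+z$ is decreasing there), and combining with $\sup_{[0,\diffp_g)}x$ gives the claimed formula for $\deltaopt(\mbr,\mdp,\mbr;\diffp_g)$.

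The strict inequality then follows by comparing the formulas case by case. If $\diffp_g\ge\diffp/2$ then $\deltaopt(\mdp,\mbr,\mbr;\diffp_g)=x(\diffp/2)+z(\tfrac{2\diffp+\diffp_g}{3})$ while $\deltaopt(\mbr,\mdp,\mbr;\diffp_g)=\max\{x(\diffp/2),y(\diffp_g)+z(\diffp_g)\}=x(\diffp/2)$ (using $z(\diffp_g)=0$ and $y(\diffp_g)=x(\diffp_g)\le x(\diffp/2)$), and $z(\tfrac{2\diffp+\diffp_g}{3})>0$ since $z$ is strictly positive on $(\diffp_g,\diffp)$, so the inequality is strict. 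If $\diffp_g<\diffp/2$ then $\deltaopt(\mdp,\mbr,\mbr;\diffp_g)=y(\tfrac{\diffp+\diffp_g}{3})+z(\tfrac{2\diffp+\diffp_g}{3})$, which strictly exceeds $x(\diffp_g)=y(\diffp_g)$ because $y(\tfrac{\diffp+\diffp_g}{3})\ge y(\diffp_g)$ and $z(\tfrac{2\diffp+\diffp_g}{3})>0$, and strictly exceeds $y(\diffp/2)+z(\diffp/2)$ because $y(\tfrac{\diffp+\diffp_g}{3})>y(\diffp/2)$ (here $\diffp/2\ne\tfrac{\diffp+\diffp_g}{3}$ and $\tfrac{\diffp+\diffp_g}{3}$ is the unique maximizer of $y$) while $z(\tfrac{2\diffp+\diffp_g}{3})\ge z(\diffp/2)$; hence $\deltaopt(\mdp,\mbr,\mbr;\diffp_g)>\deltaopt(\mbr,\mdp,\mbr;\diffp_g)$.

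The main obstacle is entirely in the four derivative computations: although each collapses to the clean factored form above, verifying these cancellations — particularly for $y$, and for $y+z$, whose two summands peak at the distinct points $\tfrac{\diffp+\diffp_g}{3}$ and $\tfrac{2\diffp+\diffp_g}{3}$ yet whose sum peaks exactly at $\diffp/2$ — requires careful algebraic bookkeeping. I would also need to treat the degenerate cases $\diffp_g=0$ (where $y'(\diffp)=0$, which does not affect the monotonicity conclusion) and $\diffp_g=\diffp/2$ (where the two cases of the formulas coincide), and be careful that the relevant domain of $x$ is the half-open interval $[0,\diffp_g)$, so that its supremum is attained inside only when $\diffp/2<\diffp_g$ and otherwise equals the boundary limit $x(\diffp_g)$.
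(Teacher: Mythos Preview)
Your proposal is correct and follows essentially the same approach as the paper: compute and factor the derivatives of $x$, $y$, $z$, and $y+z$, read off the maximizers, then compare the resulting expressions case by case. The paper works directly in the variable $t$ (obtaining, e.g., $(y+z)'(t)=(e^{\diffp/2}-1)^2(e^{\diffp/2}+1)(e^{\diffp/2}-e^t)e^{(\diffp_g-6\diffp-t)/2}$) and splits $\diffp_g=\diffp/2$ off as its own case, while you use the substitution $\beta=e^{(\diffp_g-t)/2-\diffp}$ and treat $\diffp_g=\diffp/2$ as a boundary where the two formulas coincide; these are cosmetic differences, and your factored sign expressions agree with the paper's derivatives.
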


Again the proof is relegated to the appendix (\ref{lem:ordbrdp2:proof}).

Knowing that there is a difference in the overall privacy loss when an analyst changes the order of BR and DP mechanisms, we then plot the ratio between $\deltaopt(\mbr, \mdp, \mbr;\diffp_g)$ and $\deltaopt(\mdp, \mbr, \mbr; \diffp_g)$ in Figure~\ref{fig:gap}.

\begin{figure}[h]
\centering
\includegraphics[width=0.48\textwidth]{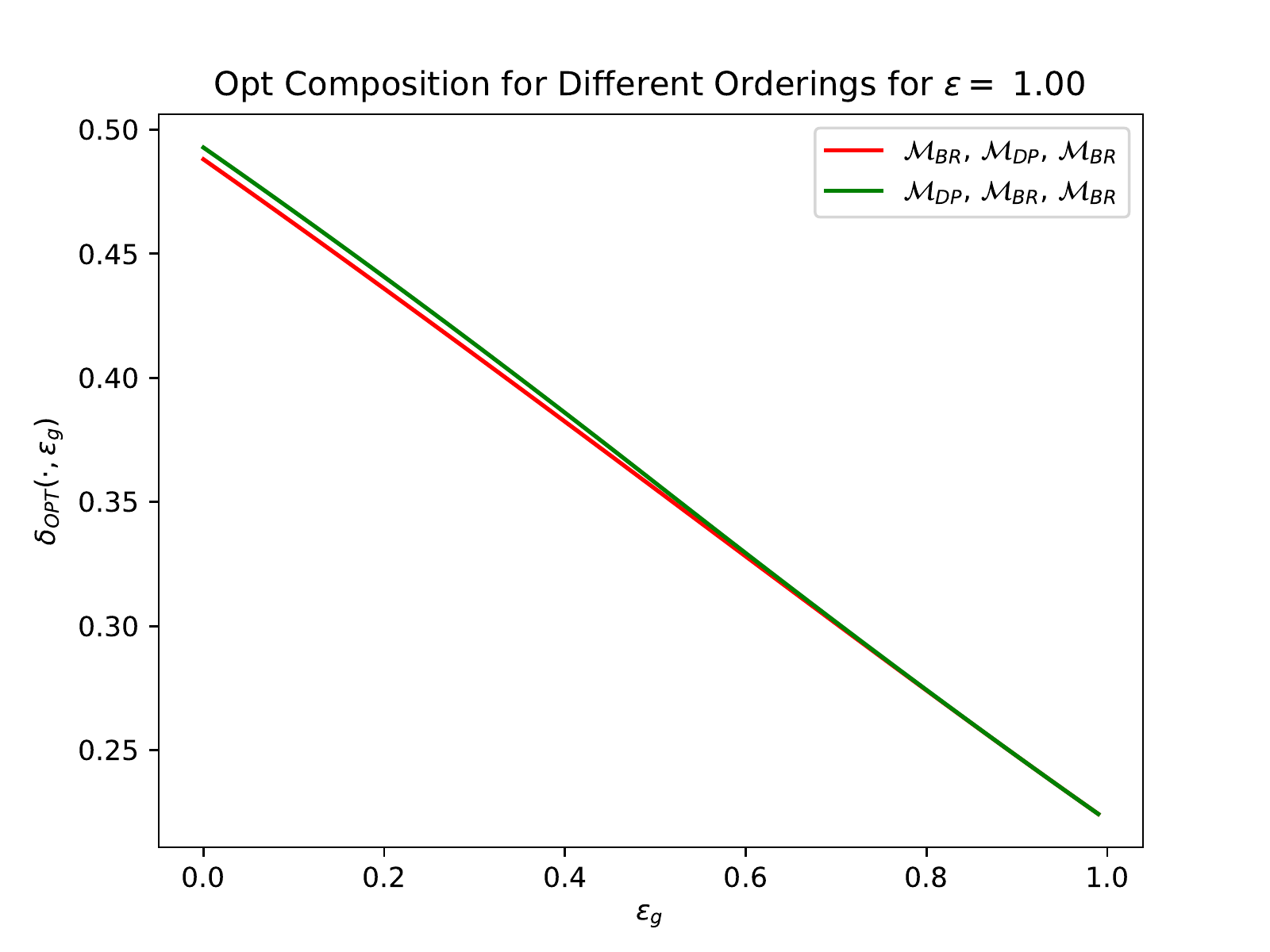}
\includegraphics[width=0.47\textwidth]{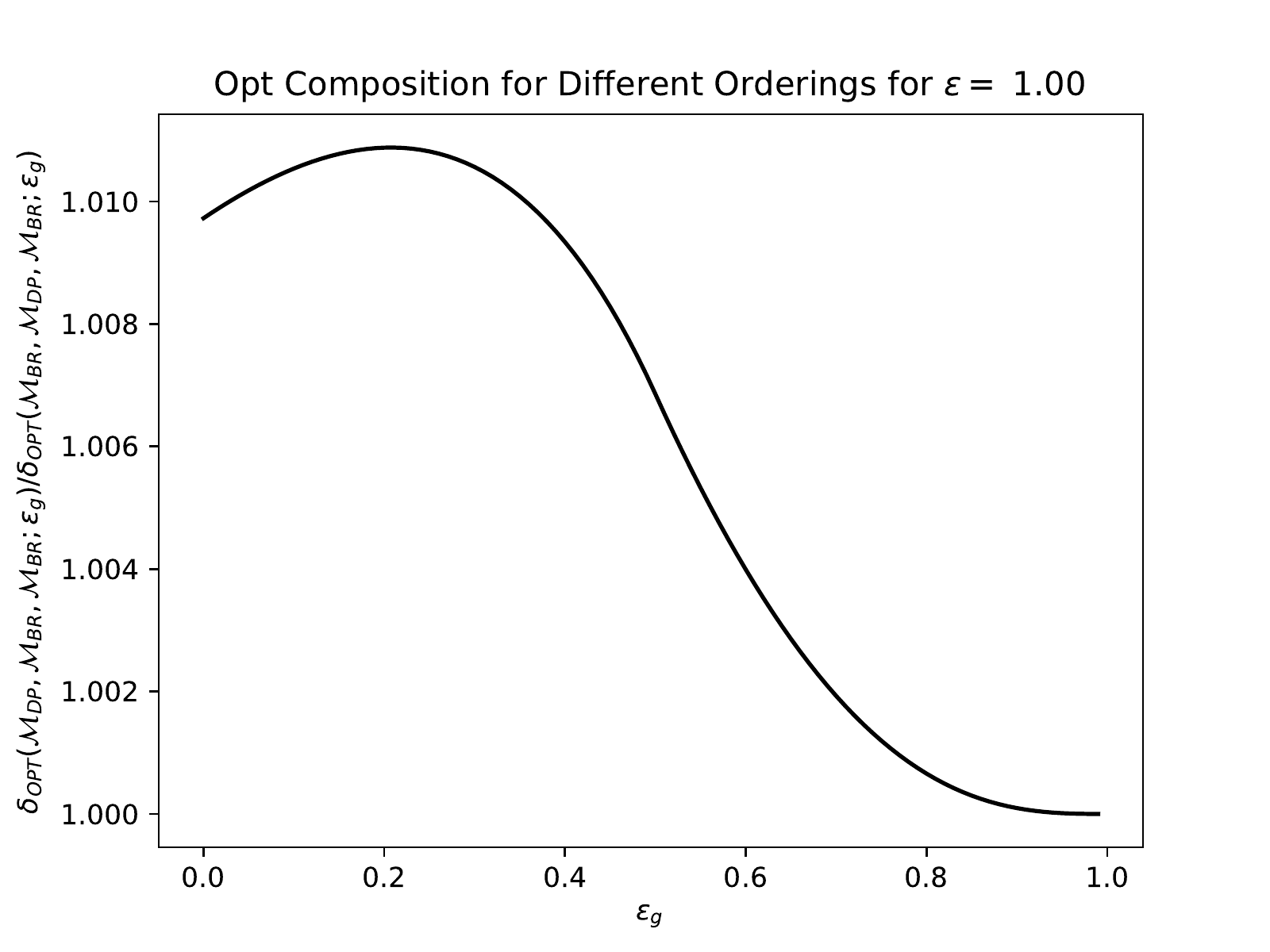}
\caption{The absolute difference (left) and the ratio (right) between $\deltaopt(\mbr, \mdp, \mbr;\diffp_g)$ and $\deltaopt(\mdp, \mbr, \mbr; \diffp_g)$ with $\diffp = 1.0$ and varying $\diffp_g < \diffp$.  \label{fig:gap}}
\end{figure}

\section{Comparing Gaussian and Laplace Mechanisms for Private Histograms \label{sec:GM}} 
In this section, we compare the Laplace mechanism and the Gaussian mechanism over multiple rounds of composition.  Before we consider composition, the main distinguishing feature of which noise to use, Laplace or Gaussian, mainly relies on which sensitivity bound we have on the quantity we want to add noise to.  We now define $\ell_p$-sensitivity of a given function $f: \cX \to \R^d$,
\[
\Delta_p(f) \defeq \max_{x\sim x'}|| f(x) - f(x') ||_p.
\]
Typically, if we have a bound on $\ell_1$-sensitivity we would use the Laplace mechanism, whereas if we have a bound on $\ell_2$-sensitivity we would use the Gaussian mechanism, since each mechanism adds noise with standard deviation proportional to the $\ell_1$ or $\ell_2$ sensitivity, respectively.  In machine learning applications, one typically normalizes high dimensional vectors by some $\ell_2$ bound, so it is then natural to use Gaussian noise, not to mention the \emph{moment accounting} composition that also takes advantage of subsampling at each round of stochastic gradient descent \cite{Abadietal16}.  

However, when one wants to privatize counts from a histogram, one typically does not have only an $\ell_1$ or $\ell_2$ sensitivity bound.  Rather, one has a bound on the number of distinct counts a user can contribute to in the histogram, i.e. an $\ell_0$-sensitivity bound, and a bound on the amount a user can impact a single element's count, i.e, an $\ell_\infty$-sensitivity bound.  Obviously, we can convert these bounds to $\ell_1$ or $\ell_2$ sensitivity bounds.  Let $\tau$ be the $\ell_\infty$-sensitivity and $\Delta$ be the $\ell_0$-sensitivity of the histogram $h \in \N^d$, computed on the input data. We then have 
\[ 
\Delta_1(h) = \tau \Delta, \qquad \Delta_2(h) = \tau \sqrt{\Delta}.
\]

Hence, it would seem better to use the Gaussian mechanism to release counts, since the $\ell_2$-sensitivity can be much lower than the $\ell_1$-sensitivity, thus leading to more accurate counts.  We have the following result from \citet{BalleWa18} that gives the smallest scale of noise for the Gaussian mechanism with a given $\ell_2$-sensitivity and overall privacy guarantee.
\begin{lemma}[Analytic Gauss \cite{BalleWa18}]\label{lem:AGM}
Let $f: \cX \to \R^d$ have $\ell_2$-sensitivity $\Delta_2$, then for any $\diffp>0$ and $\delta \in (0,1]$ we have $M(x) = \Normal{f(x)}{ \Delta_2^2 \sigma^2I_d}$  is $(\diffp,\delta)$-DP if and only if
\[
\Phi\left(\frac{1}{2\sigma} - \diffp\sigma \right) - e^\diffp \Phi\left(- \frac{1}{2 \sigma} - \diffp \sigma \right)\leq \delta
\]
\end{lemma}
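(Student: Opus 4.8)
The plan is to reduce the $d$-dimensional statement to a one-dimensional hockey-stick divergence between two Gaussians of equal variance, compute that divergence exactly via the optimal likelihood-ratio rejection set, and then read off both directions of the ``if and only if''.

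First I would fix neighbors $x \sim x'$ and write $v = f(x) - f(x')$, so that $M(x) = \Normal{f(x)}{\Delta_2^2 \sigma^2 I_d}$ and $M(x') = \Normal{f(x')}{\Delta_2^2 \sigma^2 I_d}$ differ only by the mean shift $v$ with $\|v\|_2 \le \Delta_2$. Since isotropic Gaussians are rotationally invariant, the quantity $\sup_S\big(\Pr[M(x)\in S] - e^{\diffp}\Pr[M(x')\in S]\big)$ depends on $v$ only through $\|v\|_2$ and, projecting onto the direction of $v$ (the $d-1$ orthogonal coordinates contribute identical, cancelling factors), equals the same supremum for the one-dimensional pair $\Normal{0}{\Delta_2^2\sigma^2}$ versus $\Normal{\|v\|_2}{\Delta_2^2\sigma^2}$. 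I would then show this supremum is nondecreasing in $\|v\|_2$, so the worst case is $\|v\|_2 = \Delta_2$; this monotonicity is the one genuinely non-routine step. I would get it either by noting that under $M(x)$ the privacy-loss variable is distributed as $\Normal{\eta/2}{\eta}$ with $\eta = \|v\|_2^2/(\Delta_2^2\sigma^2)$, a one-parameter family for which $\eta \mapsto \E\big[(1 - e^{\diffp - L})_+\big]$ is increasing, or by differentiating the closed form obtained below with respect to $\|v\|_2$. After rescaling by $\Delta_2$ it then suffices to analyze $P = \Normal{0}{\sigma^2}$ and $Q = \Normal{1}{\sigma^2}$.

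For this pair, the supremum defining the hockey-stick divergence is attained at the likelihood-ratio superlevel set $S^\ast = \{y : p(y) > e^{\diffp} q(y)\}$, because the integrand $p - e^{\diffp} q$ is positive exactly on $S^\ast$. Since $\log(p(y)/q(y)) = \tfrac{1}{2\sigma^2} - \tfrac{y}{\sigma^2}$, we get $S^\ast = (-\infty,\ \tfrac12 - \diffp\sigma^2)$, hence $P(S^\ast) = \Phi(\tfrac{1}{2\sigma} - \diffp\sigma)$ and $Q(S^\ast) = \Phi(-\tfrac{1}{2\sigma} - \diffp\sigma)$, so $\sup_S\big(P(S) - e^{\diffp}Q(S)\big) = \Phi(\tfrac{1}{2\sigma} - \diffp\sigma) - e^{\diffp}\Phi(-\tfrac{1}{2\sigma} - \diffp\sigma)$. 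Because $P$ and $Q$ are reflections of each other about $\tfrac12$, the reversed divergence $\sup_S\big(Q(S) - e^{\diffp}P(S)\big)$ takes the identical value, so swapping the roles of $x$ and $x'$ imposes no additional constraint.

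Finally I would assemble the equivalence. For sufficiency: if $\delta$ is at least the displayed expression, then for every neighboring pair and every $S$ the reduction gives $\Pr[M(x)\in S] - e^{\diffp}\Pr[M(x')\in S] \le \Phi(\tfrac{1}{2\sigma} - \diffp\sigma) - e^{\diffp}\Phi(-\tfrac{1}{2\sigma} - \diffp\sigma) \le \delta$, so $M$ is $(\diffp,\delta)$-DP. For necessity: taking a neighboring pair with $\|f(x)-f(x')\|_2 = \Delta_2$ (or a limiting sequence approaching $\Delta_2$, using continuity of the closed form in $\|v\|_2$, if the $\ell_2$-sensitivity is only a supremum) together with the halfspace in $\reals^d$ obtained by lifting $S^\ast$ forces $\delta \ge \Phi(\tfrac{1}{2\sigma} - \diffp\sigma) - e^{\diffp}\Phi(-\tfrac{1}{2\sigma} - \diffp\sigma)$. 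The main obstacle is the monotonicity-in-$\|v\|_2$ claim; the rest is the routine evaluation of a Gaussian tail probability.
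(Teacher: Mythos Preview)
The paper does not prove this lemma at all; it simply quotes it as a known result from \citet{BalleWa18}. Your argument is correct and is essentially the proof given in that reference: reduce to one dimension by rotational invariance of the isotropic Gaussian, observe that the worst neighboring pair has $\|v\|_2=\Delta_2$, and evaluate the hockey-stick divergence on the likelihood-ratio halfspace to obtain the closed form. The one point you flag as non-routine, monotonicity in $\|v\|_2$, is indeed the only substantive step and your privacy-loss-distribution argument (the loss is $\Normal{\eta/2}{\eta}$ with $\eta=\|v\|_2^2/(\Delta_2^2\sigma^2)$, stochastically increasing in $\eta$) handles it cleanly; alternatively one can just differentiate the closed form, as you note.
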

We now discuss how we can actually bound the privacy loss of the Laplace mechanism by considering composition of $\Delta$ ($\ell_0$-sensitivity) many DP mechanisms.
\begin{lemma}\label{lem:peeling_lap}
Consider a function $f: \cX \to \R^d$ with $\ell_0$-sensitivity $\Delta$ and $\ell_\infty$-sensitivity $\tau$.  Then the Laplace mechanism with parameter $\diffp>0$
\[
f(x) + (Z_1, \cdots, Z_d), \qquad \{Z_i : i \in [d] \} \stackrel{i.i.d.}{\sim} \lap(\tau/\diffp).
\]
is $(\diffp_g, \deltaopt( \mdp(\diffp_1), \cdots, \mdp(\diffp_{\Delta}) ;\diffp_g))$-DP for $\diffp_g \geq 0$ and $\diffp_i = \diffp$.
\end{lemma}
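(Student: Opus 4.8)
The plan is to reduce the $d$-dimensional Laplace mechanism to a composition of at most $\Delta$ one-dimensional $\diffp$-DP mechanisms by working coordinate-by-coordinate. Fix neighboring datasets $x\sim x'$ and let $T \defeq \{ i \in [d] : f(x)_i \neq f(x')_i\}$. By the $\ell_0$-sensitivity bound $|T| \le \Delta$, and by the $\ell_\infty$-sensitivity bound $|f(x)_i - f(x')_i| \le \tau$ for every $i$. Writing $M$ for the Laplace mechanism, the likelihood ratio of $M(x)$ to $M(x')$ factorizes over coordinates, and every factor with $i \notin T$ equals $1$, so
\[
\frac{p_{M(x)}(y)}{p_{M(x')}(y)} = \prod_{i\in T} \frac{\exp(-\diffp|y_i - f(x)_i|/\tau)}{\exp(-\diffp|y_i - f(x')_i|/\tau)}.
\]
Consequently the privacy loss random variable $L_{M(x)\|M(x')}$ is distributed exactly as the sum of the privacy loss random variables of the $|T|$ independent scalar mechanisms $N_i: \{x,x'\} \to \R$ given by $N_i(x) = f(x)_i + Z_i$, $Z_i \sim \lap(\tau/\diffp)$, for $i \in T$. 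The first routine step I would carry out is verifying that each such $N_i$ is $\diffp$-DP: by the triangle inequality $\big||y-f(x')_i| - |y-f(x)_i|\big| \le |f(x)_i - f(x')_i| \le \tau$, so the ratio lies in $[e^{-\diffp}, e^{\diffp}]$, i.e. $N_i \in \mdp(\diffp)$ (in fact it is $\diffp_i'$-DP with $\diffp_i' \le \diffp$).

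Next I would assemble the reduction. Padding the tuple $(N_i)_{i\in T}$ with $\Delta - |T|$ constant-output mechanisms (which are $0$-DP, hence $\diffp$-DP, and contribute a factor of $1$ to every likelihood ratio) produces a tuple $N' = (N_1',\ldots,N_\Delta')$ in the non-adaptive class $\mdp(\diffp_1)\times\cdots\times\mdp(\diffp_\Delta)$ with $\diffp_i = \diffp$, such that the pair $(N'(x), N'(x'))$ has exactly the same likelihood ratio — and hence the same value of $\Pr[\,\cdot \in S] - e^{\diffp_g}\Pr[\,\cdot \in S]$ for every $S$ — as $(M(x),M(x'))$. Since this quantity, maximized over $S$, is by definition at most $\deltaopt(\mdp(\diffp_1)\times\cdots\times\mdp(\diffp_\Delta);\diffp_g)$, and the non-adaptive class is contained in the adaptive class so that $\deltaopt(\mdp(\diffp_1)\times\cdots\times\mdp(\diffp_\Delta);\diffp_g) \le \deltaopt(\mdp(\diffp_1),\ldots,\mdp(\diffp_\Delta);\diffp_g)$, we conclude that for every neighboring pair $x\sim x'$ and every $S$,
\[
\Pr[M(x)\in S] \le e^{\diffp_g}\Pr[M(x')\in S] + \deltaopt(\mdp(\diffp_1),\ldots,\mdp(\diffp_\Delta);\diffp_g).
\]
Taking the supremum over neighbors gives the claimed $(\diffp_g, \deltaopt(\mdp(\diffp_1),\ldots,\mdp(\diffp_{\Delta});\diffp_g))$-DP guarantee; if desired, the right-hand side can then be evaluated with the closed form of Theorem~\ref{thm:opt_DPCOMP}.

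The main obstacle is making the reduction in the second paragraph airtight: one must argue that a product of $|T| \le \Delta$ pure-DP coordinate mechanisms whose "neighbor pair" is dictated by the actual shifts $f(x)_i - f(x')_i$ (rather than chosen adversarially), and which may number fewer than $\Delta$, is still dominated by the worst case over $\Delta$ freely chosen $\diffp$-DP mechanisms. This is handled by two observations that I would spell out carefully: each coordinate mechanism individually lies in $\mdp(\diffp)$ regardless of the particular shift, so it is a legitimate member of the composed class; and appending constant-output mechanisms changes neither the likelihood ratio nor any divergence, which disposes of the case $|T| < \Delta$ and lets us always pad up to exactly $\Delta$ factors. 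Everything else — the density factorization, the triangle-inequality bound per coordinate, and the appeal to the definition of $\deltaopt$ for a class of mechanisms — is routine.
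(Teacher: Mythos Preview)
Your proposal is correct and follows essentially the same approach as the paper: fix neighbors, observe that the likelihood ratio factorizes over coordinates with only the at-most-$\Delta$ changed coordinates contributing, each of which is an $\diffp$-DP one-dimensional Laplace mechanism, and then invoke the optimal DP composition bound. You are simply more explicit than the paper's terse proof about the padding to exactly $\Delta$ factors and about the non-adaptive versus adaptive $\deltaopt$ (which in any case coincide for pure DP by \cite{KairouzOhVi17}, as in Theorem~\ref{thm:opt_DPCOMP}).
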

\begin{proof}
Note that we fix neighboring datasets, $x$, $x'$ which induces two function values $f(x)$, $f(x')$ that differ in at most $\Delta$ positions, and in each position they differ by at most $\tau$.  Since we fix the neighboring datasets, we also know the $\Delta$ positions that have changed.  Hence, we need to only consider the contributions to the overall privacy loss in these $\Delta$ positions, while the other positions contribute zero to the overall privacy loss and can be dropped.  
\end{proof}

Note that a similar argument can be made with the Gaussian mechanisms, given the $\ell_0$-sensitivity.  For this, we will analyze the mechanism using zCDP from Definition~\ref{defn:zCDP}.  

Using a similar argument to Lemma~\ref{lem:peeling_lap} and using the composition property of zCDP from \citet{BunSt16}, we have the following result, which can be optimized over $\alpha > 1$.
\begin{lemma}\label{lem:GM}
Consider a function $f: \cX \to \R^d$ with $\ell_0$-sensitivity $\Delta$ and $\ell_\infty$-sensitivity $\tau$.  Then the Gaussian mechanism 
\[
f(x) + (Z_1, \cdots, Z_d), \qquad \{Z_i : i \in [d] \} \stackrel{i.i.d.}{\sim} \Normal{0}{ \tau^2 \sigma^2}.
\]
is $(\frac{\Delta}{2\sigma^2} + \frac{1}{\sigma} \sqrt{2\Delta \log(1/\delta)}, \delta)$-DP for any $\delta>0$.
\end{lemma}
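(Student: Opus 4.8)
The plan is to mirror the proof of Lemma~\ref{lem:peeling_lap}, but replace the optimal pure-DP composition theorem with the zCDP composition rule of \citet{BunSt16}. First I would fix a pair of neighboring datasets $x \sim x'$. Since $f$ has $\ell_0$-sensitivity $\Delta$ and $\ell_\infty$-sensitivity $\tau$, the vectors $f(x)$ and $f(x')$ agree outside a set $T \subseteq [d]$ with $|T| \leq \Delta$, and on each coordinate $j \in T$ they differ by at most $\tau$. Because the noise is added independently per coordinate, the output coordinates outside $T$ are identically distributed under $x$ and $x'$; formally, the full output is a (deterministic) post-processing that concatenates the outputs on $T$ and on $[d] \setminus T$, and the latter block is independent of the data, so it suffices to bound the privacy loss of the restriction of the mechanism to the coordinates in $T$. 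This is the step where the $\ell_0$-sensitivity is used, exactly as in Lemma~\ref{lem:peeling_lap}.

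Second, the restriction to $T$ is precisely the (non-adaptive) composition of $|T| \leq \Delta$ one-dimensional Gaussian mechanisms, each of the form $y \mapsto \Normal{f_j(x)}{\tau^2\sigma^2}$ with $|f_j(x) - f_j(x')| \leq \tau$. As recorded after the definition of CDP, a one-dimensional Gaussian mechanism with noise standard deviation $\tau\sigma$ on a statistic of sensitivity $\tau$ is $(\tfrac{1}{2\sigma^2}, \tfrac{1}{\sigma})$-CDP, and hence by Lemma~\ref{lem:CDPtozCDP} it is $(0, \tfrac{1}{2\sigma^2})$-zCDP, i.e.\ $\tfrac{1}{2\sigma^2}$-zCDP. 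Applying the composition property of zCDP from \citet{BunSt16}, the composition of these $|T|$ mechanisms is $\tfrac{|T|}{2\sigma^2}$-zCDP, and since $|T| \leq \Delta$ and $\rho$-zCDP implies $\rho'$-zCDP for $\rho' \geq \rho$, it is $\tfrac{\Delta}{2\sigma^2}$-zCDP.

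Finally I would invoke the zCDP-to-approximate-DP conversion of \citet{BunSt16} quoted in the preliminaries, namely that $\rho$-zCDP implies $(\rho + 2\sqrt{\rho\log(1/\delta)}, \delta)$-DP for every $\delta > 0$. Substituting $\rho = \tfrac{\Delta}{2\sigma^2}$ gives $\diffp_g = \tfrac{\Delta}{2\sigma^2} + 2\sqrt{\tfrac{\Delta}{2\sigma^2}\log(1/\delta)} = \tfrac{\Delta}{2\sigma^2} + \tfrac{1}{\sigma}\sqrt{2\Delta\log(1/\delta)}$, and chaining back through the post-processing reduction of the first step yields the claimed $(\diffp_g, \delta)$-DP bound for the full $d$-dimensional mechanism. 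There is no real obstacle beyond making the first step precise (discarding the unchanged coordinates); as an alternative to the black-box conversion, one could bound the R\'enyi divergence of the $\Delta$-fold Gaussian directly by $\alpha \Delta/(2\sigma^2)$ and optimize the resulting Chernoff-type bound over $\alpha > 1$, which recovers the same expression and matches the remark preceding the statement.
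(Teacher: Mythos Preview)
Your proposal is correct and follows exactly the approach the paper sketches: it mirrors the peeling argument of Lemma~\ref{lem:peeling_lap} to reduce to the at most $\Delta$ changed coordinates, applies the zCDP composition of \citet{BunSt16} to obtain $\tfrac{\Delta}{2\sigma^2}$-zCDP, and then invokes the zCDP-to-DP conversion. The paper does not spell out more than this, so your write-up is a faithful expansion of its one-line justification.
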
 

\subsection{Results}
Given an $\ell_0$-sensitivity bound, we then want to compare the various bounds of the Gaussian mechanism from Lemma~\ref{lem:AGM} and Lemma~\ref{lem:GM} with our Laplace mechanism bound from Lemma~\ref{lem:peeling_lap}.  In order to compare the utility of the Laplace and Gaussian mechanisms, we will fix the variances to be the same between them and set $\delta >0$ to be the same across.\footnote{Note that the variance of $\lap(b)$ is $2 b^2$.}  We present the comparisons in Figure~\ref{fig:GM_oneshot}.  Note that for relatively small $\ell_0$-sensitivities, we get an improvement in the overall privacy guarantee with the Laplace mechanism, but then Gaussian noise seems to win out as the $\ell_0$-sensitivity increases.
\begin{figure}[h]
\centering
\includegraphics[width=0.48\textwidth]{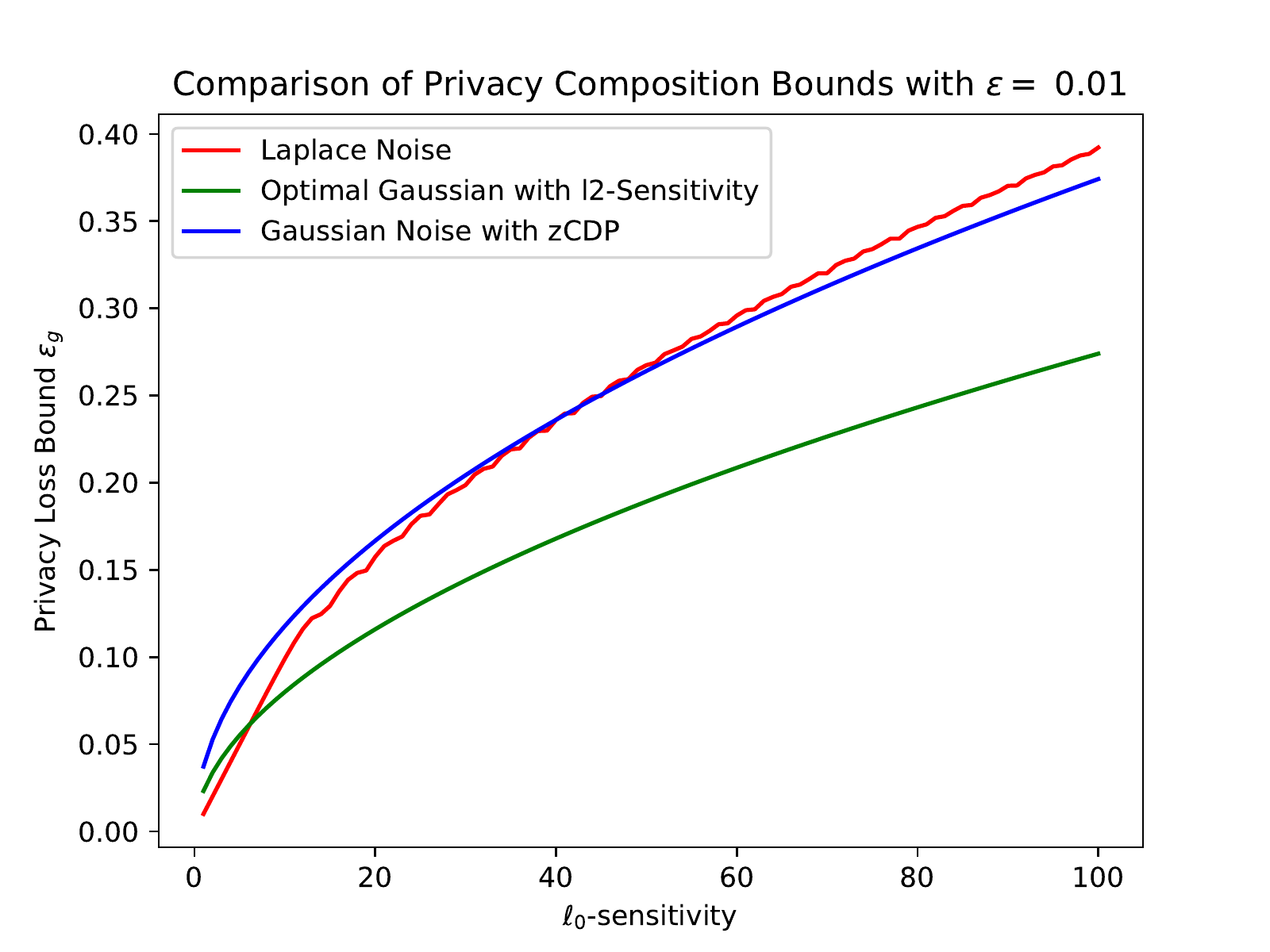}
\includegraphics[width=0.48\textwidth]{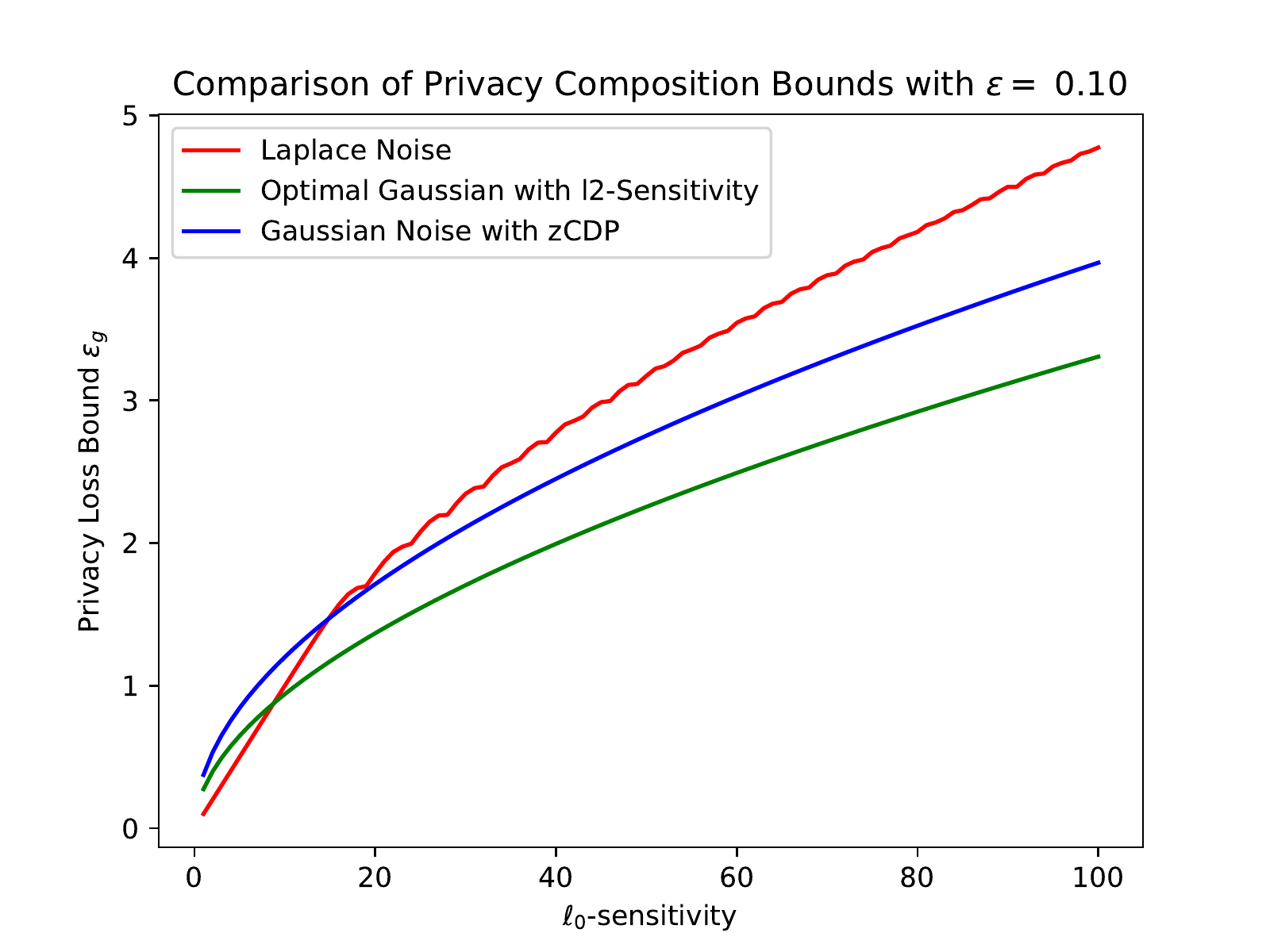}
\caption{Comparison of the overall DP guarantee with $\delta = 10^{-6}$, with ``Laplace Noise" being the bound in Lemma~\ref{lem:peeling_lap}, ``Optimal Gaussian with $\ell_2$-sensitivity" being the bound in Lemma~\ref{lem:AGM}, and ``Gaussian Noise with zCDP" being the bound in Lemma~\ref{lem:GM}.  Note that we equalize the standard deviations between the Laplace and Gaussian noise. \label{fig:GM_oneshot}}
\end{figure}

We are not just concerned with the overall DP parameters for a one-shot mechanism; we also want to consider composition.  In order to apply composition for Lemma~\ref{lem:AGM}, we will use the optimal DP composition bound from \cite{MurtaghVa16} over $k$ different mechanisms with $\ell_0$-sensitivity $\Delta$ and find the smallest $\diffp_g$ value for the given $\sigma$ and overall $\delta$.  We present the results in Figure~\ref{fig:GM_comp}.

\begin{figure}[h]
\centering
\includegraphics[width=0.48\textwidth]{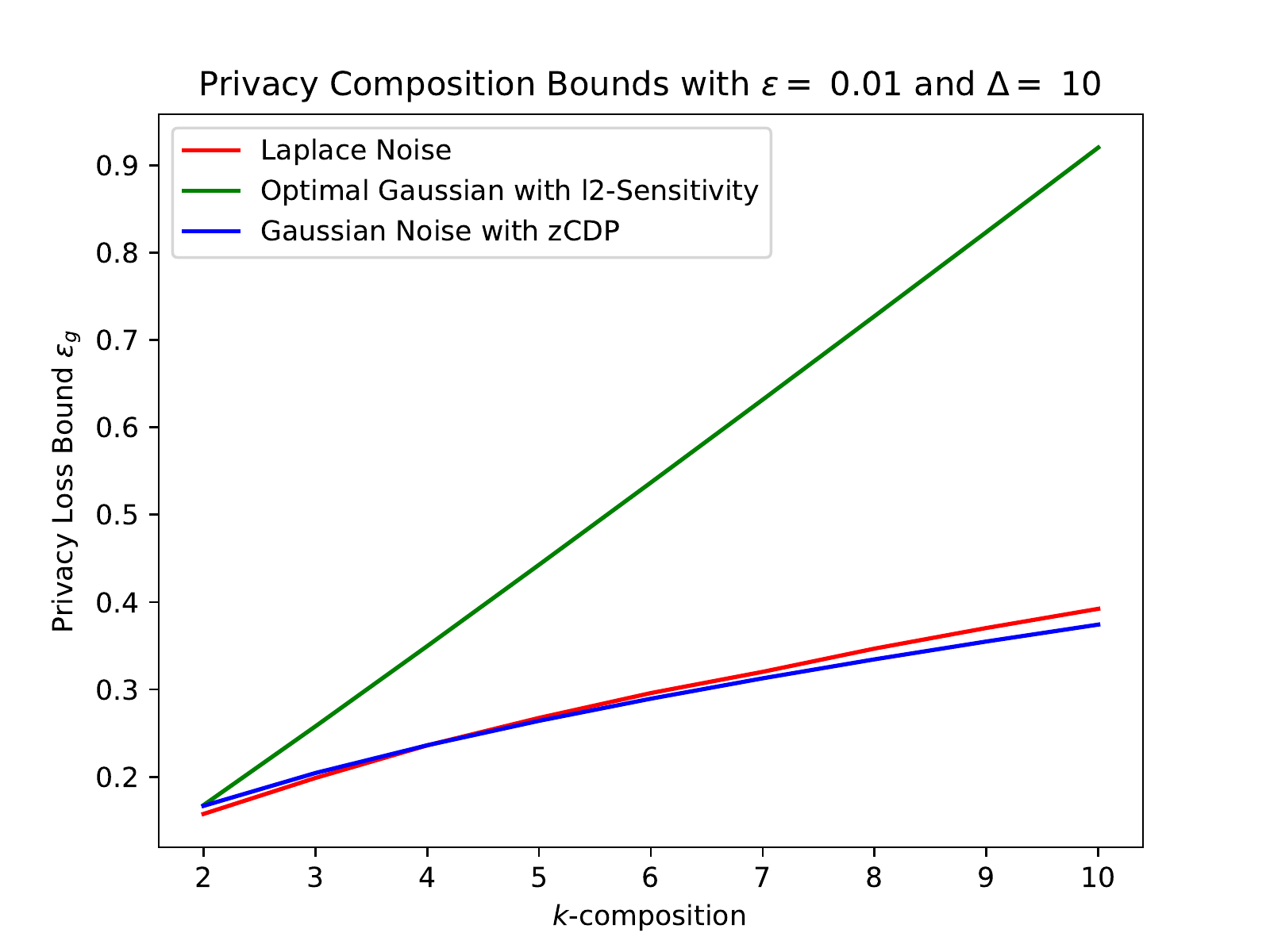}
\includegraphics[width=0.48\textwidth]{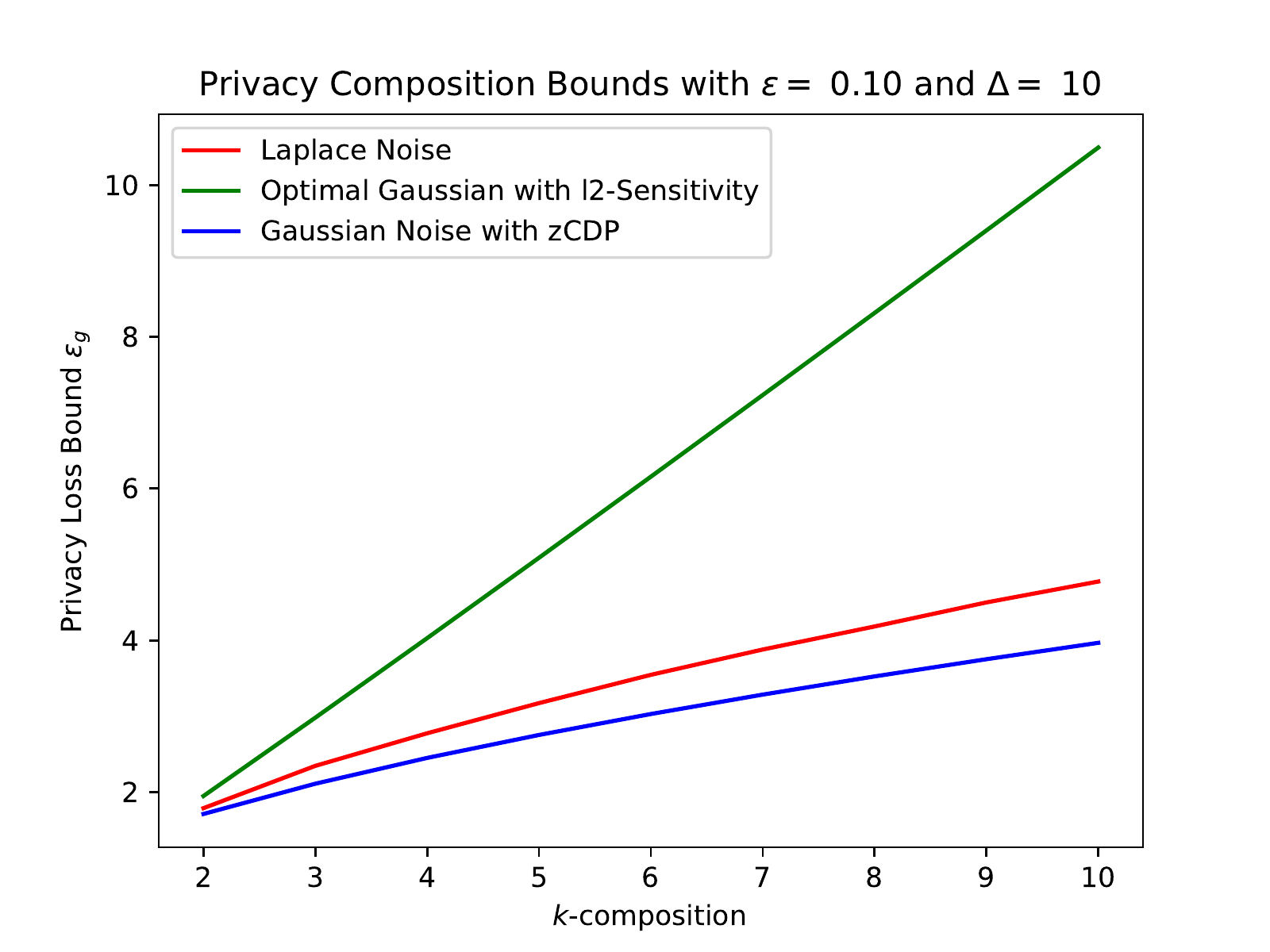}
\caption{Similar to Figure~\ref{fig:GM_oneshot}, we now consider composition with $k$ mechanisms, each adding noise to a $\ell_0$-sensitivity $\Delta=10$ function.   \label{fig:GM_comp}}
\end{figure}

Note the bounds for ``Laplace Noise" and ``Gaussian Noise with zCDP" do not change between Figures~\ref{fig:GM_oneshot} and~\ref{fig:GM_comp}, since the bounds consider composing $k\cdot\Delta$-mechanisms which varies from 20 to 100 in Figure~\ref{fig:GM_comp}.  We see that the best one-shot mechanism then does not provide a small privacy loss when composing several mechanisms.
From the empirical results, we see that the Gaussian mechanisms typically have smaller accumulated privacy loss after a large enough $k$, however for a reasonable number of compositions, Laplace outperforms Gaussian.  

\subsection{Gaussian Based Private Top-$k$ Mechanisms}
Given the results above, we then propose variants of existing mechanisms that use Gaussian noise rather than Laplace noise.  From \citet{RogersSuPeDuLeKaSaAh20}, we have the following table of mechanisms for data analytics based on histogram data.  In the $\Delta$-restricted sensitivity setting we assume that the number of categories a single user can impact is at most $\Delta$ (i.e.the $\ell_0$-sensitivity) and the unrestricted sensitivity setting has no such restriction, but requires limiting the output to the top-$k$.  Furthermore, the known domain setting is where the algorithms are given the set of categories over the histogram (the labels for the $x$-axis), because they cannot be given by the data.  Whereas the unknown domain has no such restriction and must have a parameter $\bar{d}$ for an upper bound on the number of distinct elements the histogram can have.  Each setting requires a $\ell_\infty$-sensitivity bound $\tau$.  

\begin{table}[htbp]
\centering\setcellgapes{4pt}\makegapedcells
\begin{tabular}{ |c|c|c| } 
 \hline
 & $\Delta$-restricted sensitivity & unrestricted sensitivity \\ 
 \hline
 \shortstack{Known \\ Domain} & $\knownLap{\Delta,\tau}$ \cite{DworkMcNiSm06} & $\knownEM{k,\tau}$ \cite{McSherryTa07} \\ 
 \hline
\shortstack{Unknown \\ Domain} & $\rTE{\Delta,\bar{d},\tau}$ \cite{RogersSuPeDuLeKaSaAh20} & $\rT{k,\bar{d},\tau}$ \cite{RogersSuPeDuLeKaSaAh20} \\ 
 \hline
\end{tabular}
\caption{DP algorithms for various data analytics tasks\label{table:tasks}}
\end{table}

Rather than present each algorithm here, we summarize each one and present a variant of it using Gaussian noise.  The $\knownLap{\Delta,\tau}$ mechanism can easily be replaced with a Gaussian mechanism that adds Gaussian noise to each count with $\tau\sigma$ standard deviation to ensure $(\tfrac{\Delta}{2\sigma^2} , \tfrac{\sqrt{\Delta}}{\sigma} )$-CDP \cite{DworkRo16} and $\tfrac{\Delta}{2\sigma^2}$-zCDP \cite{BunSt16} since at most $\Delta$ bins in neighboring histograms can change.  

The unrestricted sensitivity algorithms are based on the exponential mechanism (using Gumbel noise) to first discover the elements in the top-$k$ and then use the Laplace mechanism to release noisy counts on the discovered elements.  A simple modification of this algorithm is to still use exponential mechanisms to discover the elements but then use Gaussian noise on the resulting counts.  Hence, adding Gaussian noise to the count of each discovered element with $\tau\sigma$ standard deviation will guarantee $(\tfrac{k}{2\sigma^2} , \tfrac{\sqrt{k}}{\sigma} )$-CDP and $\tfrac{k}{2\sigma^2}$-zCDP.  However, this ignores a useful detail that the first phase of exponential mechanisms is giving, a ranked list of elements.  Hence, we propose a simple post processing function of the Gaussian mechanism that will respect the order given by the first phase of domain discovery.  We solve a constrained least squares problem to return the maximum likelihood estimator for the true counts given an ordering.  We present the $\LSNoise{\sigma}$ procedure in Algorithm~\ref{algo:least_squares}.
\begin{algorithm}[h!]
	\caption{$\LSNoise{\sigma}$; Return noisy counts subject to a fixed ordering}
	\begin{algorithmic}
		\State \textbf{Input:} Histogram $\bbh = h_1, \cdots, h_k$ of $k$ elements with $\ell_\infty$-sensitivity $\tau$ and an ordering $i_1,i_2, \cdots i_k$.
		\State \textbf{Output:} Noisy counts $\tilde{h}_{i_1}, \tilde{h}_{i_2},\cdots, \tilde {h}_{i_k}$.
		\State Add noise $\hat{\bbh} =  (h_1, \cdots, h_k) + (Z_1, \cdots, Z_k)$ where $Z_i \sim \Normal{0}{\tau^2\sigma^2}$.  
		\State Solve the following and let $\tilde{\bbh} = (\tilde{h}_1,\cdots, \tilde{h}_k)$ be the solution:
		\begin{align}
		\min_{\bbx} \quad & \quad || \hat{\bbh} - \bbx ||_2 \label{eq:least_squares} \\
		\text{s.t.} \quad  & \quad  x_{i_1} \geq \cdots \geq x_{i_k} \geq 0 \nonumber
		\end{align}
		\State Return $(\tilde{h}_{i_1},\cdots, \tilde{h}_{i_k} )$.	
	\end{algorithmic}\label{algo:least_squares}
\end{algorithm}
Because this is a post-processing function of the Gaussian mechanism, the privacy parameters remain the same given the outcome of $\LSNoise{\sigma}$.  Note that we could use an $\ell_1$ loss to find the MLE when using Laplace noise, however this would not guarantee a unique solution.  An additional advantage of using the $\ell_2$ loss is that we can use standard constrained least squares numerical methods, e.g. `nnls' in scipy.

We give the comparison between $\LSNoise{\sigma}$ and simply adding Laplace noise to the counts with the same overall privacy budget in Figure~\ref{fig:NoiseCompare} and $\ell_\infty$-sensitivity of $\tau = 1$.  In particular we consider the top-$25$ words in Macbeth and fix $\diffp = 0.1$ in $\knownLap{25,1}$.  We then compute the overall privacy parameter with $\delta = 10^{-6}$ using the optimal composition bound to get $\diffp'(\delta) = 2.08$.  Solving for $\sigma$ in Lemma~\ref{lem:GM} with the given total $\epsilon'(\delta)$ and $\delta = 10^{-6}$, we get $\sigma = 13.1$.  We then ran the two privatized count algorithms over 100 trials and plot 1 standard deviation from the empirical average.  From the figure, we can see that $\LSNoise{\sigma}$ has smaller error for smaller counts than the basic Laplace mechanism with the same privacy guarantee.

\begin{figure}[h]
\centering
\includegraphics[width=12cm]{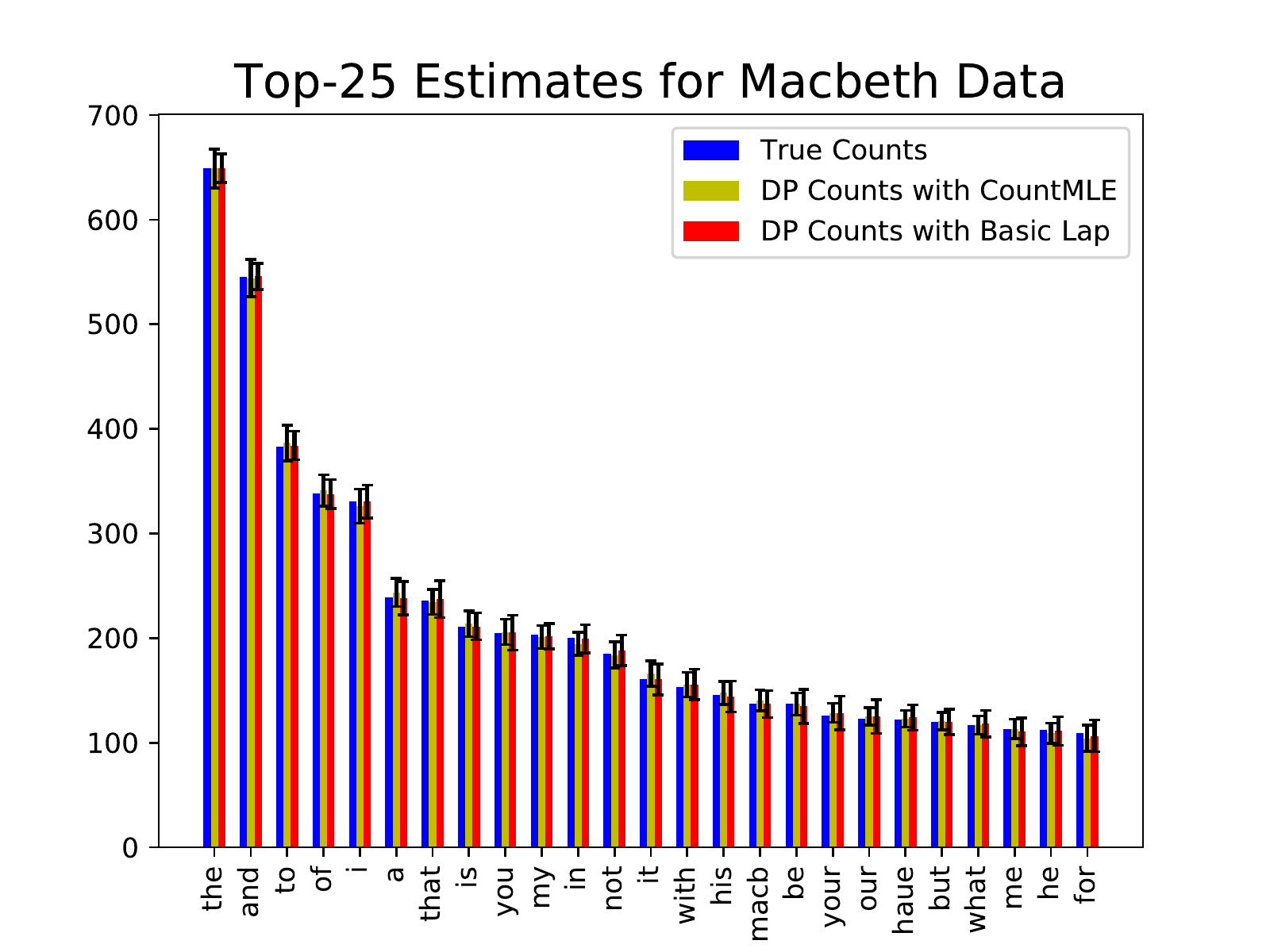}
\caption{Comparison of $(2.08,10^{-6})$-DP noise addition mechanisms given a fixed ordering.\label{fig:NoiseCompare}}
\end{figure}

Next, we consider the unknown domain and $\Delta$-restricted sensitivity setting.  The procedure $\rTE{\Delta,\bar{d},\tau}$ from \cite{RogersSuPeDuLeKaSaAh20} adds Laplace noise to the $\bar{d}$ counts that it has access to from the true histogram and adds Laplace noise proportional to $\Delta\tau/\epsilon$ to each count.  Then it adds a data dependent threshold, which also has Laplace noise added to it and only returns the elements above the noisy threshold.  The resulting algorithm is $(\epsilon,\delta)$-DP.  We propose a variant of this algorithm using a symmetric truncated Gaussian, which we write as $\truncNormal{T}{\mu}{\sigma^2}$ with truncation value $T>0$ that has the following density $f_T$ for $\mu - T\leq z\leq \mu+T $,
\[
f_T(z;\mu,\sigma) = \frac{1}{\sigma} \frac{\phi\left( \frac{z - \mu}{\sigma} \right)}{\Phi\left( \frac{T}{\sigma}\right) - \Phi\left(\frac{-T}{\sigma} \right)}.
\]
and otherwise $f_T(z;\mu,\sigma) = 0$.  Further the parameter $\bar{d}$ needs to be larger than the histogram distribution $d$ to ensure the full histogram is available.  The procedure $\truncGauss{\Delta,\bar{d},\tau}$ is presented in Algorithm~\ref{algo:truncGauss}. In related work, \citet{GengDiGuKu18} studied the privacy and utility of a truncated Laplace and showed the benefits of it over Gaussian noise when satisfying an $(\epsilon,\delta)$-DP claim.

\begin{algorithm}[h!]
	\caption{$\truncGauss{\Delta,\bar{d},\tau}$; Truncated Gaussian mechanism over unknown domain with upper bound $\bar{d}$ elements, $\ell_\infty$-sensitivity $\tau$, and $\Delta$-restricted sensitivity.}
	\begin{algorithmic}
		\State \textbf{Input:} Histogram $\bbh$, $\Delta$ sensitivity, upper bound $\bar{d}$, and $\epsilon,\delta$.
		\State \textbf{Output:} Ordered set of indices and counts.
		\State Sort $h_{(1)} \geq h_{(2)} \geq \cdots \geq h_{(\bar{d}+1 )} = 0$.
		\State Set $T$ to solve the following expression
		\begin{equation}
		\delta = \Delta \left( 1 - \frac{\Phi\left( \frac{T}{\tau\sigma} \right) - \Phi\left( \frac{\tau-T}{\tau\sigma} \right)}{\Phi\left( \frac{T}{\tau\sigma} \right) - \Phi\left( \frac{-T}{\tau\sigma} \right)} \right)
		\label{eq:truncationLevel}
		\end{equation}
		\State $S = \emptyset$
		\For {$i \leq \bar{d}$}
        			\State Set $v_i = \truncNormal{T}{h_{(i)}}{\tau^2\sigma^2}$
			\If {$v_i > \tau + T$}
				\State $S \gets S \cup \{(i,v_i)\}$
			\EndIf
		\EndFor
		\State Return $S$	
	\end{algorithmic}\label{algo:truncGauss}
\end{algorithm}

\begin{lemma}
The procedure $\truncGauss{\Delta,\bar{d},\tau}$ is $\delta$-approximately $\left(0,\tfrac{\Delta}{2\sigma^2}\right)$-zCDP.
\end{lemma}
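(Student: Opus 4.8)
The plan is to fix a pair of neighbors $x \sim x'$ with histograms $h, h'$, reduce the privacy analysis to the vector of noisy counts $(v_1, \dots, v_{\bar d})$ produced inside the loop (the thresholding $v_i > \tau + T$ and the reporting of sorted indices is a deterministic post-processing of this vector, so it does not change any R\'enyi divergence), and then combine three ingredients: a sensitivity bound on the \emph{sorted} counts, a choice of high-probability conditioning events dictated by the truncation level $T$, and a lemma bounding the R\'enyi divergence of two equal-variance Gaussians restricted to a common interval. Since $h$ and $h'$ differ in at most $\Delta$ coordinates, each by at most $\tau$, and (up to swapping $x,x'$) removal of one user only decreases coordinates, the sorted counts satisfy $0 \le h_{(i)} - h'_{(i)} \le \tau$ for every $i$ --- order statistics are $1$-Lipschitz in $\|\cdot\|_\infty$ and monotone under coordinatewise decrease --- and $\sum_i (h_{(i)} - h'_{(i)})^2 \le \|h - h'\|_2^2 \le \Delta\tau^2$, since sorting both sequences decreasingly does not increase $\ell_p$ distance. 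In particular the sorted counts have $\ell_2$-sensitivity at most $\tau\sqrt{\Delta}$, which is what produces $\rho = \Delta/(2\sigma^2)$.

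For each coordinate $i$ set $I_i \defeq [\,\max(h_{(i)}, h'_{(i)}) - T,\ \min(h_{(i)}, h'_{(i)}) + T\,]$, the common support of $\truncNormal{T}{h_{(i)}}{\tau^2\sigma^2}$ and $\truncNormal{T}{h'_{(i)}}{\tau^2\sigma^2}$, and let $E$ (resp.\ $E'$) be the event that every noisy coordinate lands in its $I_i$ under $\truncGauss{\Delta,\bar d,\tau}(x)$ (resp.\ $(x')$). A short computation with the truncated-Gaussian c.d.f.\ gives $\Pr[v_i \notin I_i \mid x] = f(d_i)$ with $d_i = h_{(i)} - h'_{(i)}$ and $f(d) = \big(\Phi(\tfrac{T}{\tau\sigma}) - \Phi(\tfrac{T-d}{\tau\sigma})\big)\big/\big(\Phi(\tfrac{T}{\tau\sigma}) - \Phi(\tfrac{-T}{\tau\sigma})\big)$; one checks $f(0) = 0$ and $f$ is increasing and convex on $[0,\tau]$ (in the relevant regime $T \ge \tau$). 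Since $0 \le d_i \le \tau$ and $\sum_i d_i \le \|h - h'\|_1 \le \Delta\tau$, a vertex argument for a separable convex function on this polytope gives $\Pr[\neg E] \le \sum_i f(d_i) \le \Delta\, f(\tau)$, and $\Delta f(\tau) = \Delta\big(1 - \tfrac{\Phi(T/(\tau\sigma)) - \Phi((\tau - T)/(\tau\sigma))}{\Phi(T/(\tau\sigma)) - \Phi(-T/(\tau\sigma))}\big)$ is exactly the right-hand side of \eqref{eq:truncationLevel}; hence $\Pr[E] \ge 1-\delta$, and symmetrically $\Pr[E'] \ge 1-\delta$.

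The crux --- and the step I expect to be the main obstacle --- is the following lemma: for $\mu, \mu' \in \R$, $s > 0$, and any interval $I$,
\[
D_\alpha\!\left(\Normal{\mu}{s^2}\big|_I \,\big\|\, \Normal{\mu'}{s^2}\big|_I\right) \le \frac{\alpha (\mu - \mu')^2}{2 s^2}.
\]
Writing $P = \Normal{\mu}{s^2}$, $Q = \Normal{\mu'}{s^2}$, and using that the density $\propto p^\alpha q^{1-\alpha}$ is the Gaussian $R = \Normal{\alpha\mu + (1-\alpha)\mu'}{s^2}$ while $\int_{\R} p^\alpha q^{1-\alpha} = e^{(\alpha-1)\alpha(\mu-\mu')^2/(2s^2)}$ (the standard R\'enyi divergence of equal-variance Gaussians), one gets $e^{(\alpha-1)D_\alpha(P|_I\|Q|_I)} = \tfrac{R(I)}{P(I)^\alpha Q(I)^{1-\alpha}}\, e^{(\alpha-1)\alpha(\mu-\mu')^2/(2s^2)}$, so it suffices to show $R(I) \le P(I)^\alpha Q(I)^{1-\alpha}$. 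The map $m \mapsto \Pr[\Normal{m}{s^2} \in I]$ is the convolution of the log-concave indicator $\mathbf{1}_I$ with the log-concave Gaussian density, hence log-concave in $m$ by Pr\'ekopa's theorem; applying concavity of its logarithm to the identity $\mu = \tfrac{1}{\alpha}(\alpha\mu + (1-\alpha)\mu') + \tfrac{\alpha-1}{\alpha}\mu'$ (a genuine convex combination, since $\alpha > 1$) yields $\alpha \log P(I) \ge \log R(I) + (\alpha-1)\log Q(I)$, i.e.\ exactly $R(I) \le P(I)^\alpha Q(I)^{1-\alpha}$.

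To finish, note that conditioned on $E$ the noisy vector is $\bigotimes_i \Normal{h_{(i)}}{\tau^2\sigma^2}\big|_{I_i}$ and conditioned on $E'$ it is $\bigotimes_i \Normal{h'_{(i)}}{\tau^2\sigma^2}\big|_{I_i}$ --- the same boxes $\prod_i I_i$, with the truncated densities' normalizers cancelling because $I_i$ is symmetric in $h_{(i)}, h'_{(i)}$. Additivity of R\'enyi divergence over product measures and the lemma give
\[
D_\alpha\!\left(\truncGauss{\Delta,\bar d,\tau}(x)\big|_E \,\big\|\, \truncGauss{\Delta,\bar d,\tau}(x')\big|_{E'}\right) \le \sum_i \frac{\alpha (h_{(i)} - h'_{(i)})^2}{2 \tau^2\sigma^2} \le \frac{\alpha\, \Delta \tau^2}{2\tau^2\sigma^2} = \alpha\cdot\frac{\Delta}{2\sigma^2},
\]
and the same bound holds in the reversed order by symmetry of the lemma; post-processing by the threshold rule preserves it. By Definition~\ref{defn:zCDP} with $\xi = 0$, $\rho = \Delta/(2\sigma^2)$, and the events $E, E'$ above, $\truncGauss{\Delta,\bar d,\tau}$ is $\delta$-approximately $(0, \tfrac{\Delta}{2\sigma^2})$-zCDP. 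The delicate points are the common-interval R\'enyi lemma (via log-concavity and the $\alpha>1$ extrapolation form of concavity) and checking that the convexity/vertex argument collapses $\Pr[\neg E]$ to precisely the algorithm's $\delta$, even though sorting can create more than $\Delta$ changed coordinates (it is the $\ell_1$ and $\ell_2$ norms of the sorted difference, not its sparsity, that are preserved).
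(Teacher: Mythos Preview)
Your proof is correct and takes a genuinely different route from the paper's. The paper works coordinatewise on the \emph{unsorted} histogram: it fixes one category with shift exactly $\tau$, shows that a single truncated Gaussian is $\delta/\Delta$-approximate $\tfrac{1}{2\sigma^2}$-zCDP by directly computing the conditional R\'enyi integral and asserting (correctly, by Gaussian unimodality over equal-width intervals, though the reason is not stated) that the resulting ratio of $\Phi$-differences is at most $1$; it then composes over the ``$\Delta$ counts that changed'' by a plain union bound, and finishes with a separate paragraph arguing that categories with true count $\le \tau$ can never pass the threshold $\tau+T$, which is what handles the unknown-domain aspect when the output carries category labels. Your argument instead tracks the sorted counts and the actual per-coordinate shifts $d_i$, replacing both ad hoc steps with structural facts: the common-interval R\'enyi bound follows from log-concavity of $m\mapsto\Pr[\Normal{m}{s^2}\in I]$ (Pr\'ekopa) together with the $\alpha>1$ extrapolation form of concavity, and the bad-event probability collapses to $\Delta f(\tau)$ by convexity of $f$ under the $\ell_1$ constraint $\sum_i d_i\le\Delta\tau$ via a vertex argument. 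This buys you a proof that is robust to the sorting step (which can make more than $\Delta$ sorted coordinates differ --- the paper's composition over ``the $\Delta$ counts that changed'' quietly presumes the unsorted view), a R\'enyi lemma valid for arbitrary intervals and any log-concave noise, and a bad-event bound that does not worst-case every shift to $\tau$. The one thing the paper's argument covers that yours does not is the explicit role of the threshold $\tau+T$: if the intended output carries category labels rather than sorted ranks, you would still need the observation that a count $\le\tau$ can never exceed $\tau+T$ after truncated noise.
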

\begin{proof}
We begin by proving that for a single count $\truncNormal{T}{h_{i}}{\tau^2\sigma^2}$ ensures $\delta/\Delta$-approximate $\rho$-zCDP.  This follows a similar analysis as in \cite{GengDiGuKu18} to prove truncated Laplace noise ensures approximate DP.  WLOG, let $h_{i}' = h_{i} + \tau$.  We consider the event $E = \{h_{i}' - T \leq \truncNormal{T}{h_{i}}{\tau^2\sigma^2} \leq h_{i} + T  \}$ and compute its probability,
\[
\Pr[E] = \int_{h_{i}'- T}^{h_{i} + T} f_T(z;h_{i},\tau\sigma) dz =  \frac{\Phi\left( \frac{T}{\tau\sigma} \right) - \Phi\left( \frac{\tau-T}{\tau\sigma} \right)}{\Phi\left( \frac{T}{\tau\sigma} \right) - \Phi\left( \frac{-T}{\tau\sigma} \right)}.
\]
Further, we consider the event $E' = \{h_{i}' - T \leq \truncNormal{T}{h_{i}'}{\tau^2\sigma^2} \leq h_{i} + T  \}$.  Due to symmetry, we have $\Pr[E] = \Pr[E']$.  Note that we set $T$ so that for a given $\delta>0$ we solve \eqref{eq:truncationLevel}, hence $\Pr[E] = \Pr[E'] \leq \delta/\Delta$.

We then consider the R\'enyi divergence between two truncated Gaussians conditioned on the events that their supports overlap.  Consider the scale $\alpha > 1$ and set constant $c =  \frac{1}{\sqrt{2\pi\tau^2\sigma^2}} \frac{1}{\Phi\left( \frac{T}{\tau\sigma}\right) - \Phi\left(\frac{\tau-T}{\tau\sigma} \right)} $.
\begin{align*}
& \exp\left( (\alpha - 1) D_\alpha(\truncNormal{T}{h_{i}}{\tau^2\sigma^2}|_E || \truncNormal{T}{h_{i}'}{\tau^2\sigma^2}|_{E'} ) \right)  \\
& = c\cdot \int_{h_i'-T}^{h_i +T} \exp\left(-\alpha \left( \frac{(z - h_{i})^2}{2\tau^2\sigma^2} \right) - (1- \alpha)\left( \frac{(z - h_{i}')^2}{2\tau^2\sigma^2} \right) \right) dz \\
& = c\cdot \int_{h_i'-T}^{h_i +T} \exp\left( \frac{-1}{2\tau^2\sigma^2} \left( \left(z - (\alpha h_{i} + (1-\alpha) h_{i}') \right)^2 - (\alpha h_{i} + (1-\alpha) h_{i}')^2 + \alpha h_{i}^2 + (1-\alpha) h_{i}'^2 \right)  \right) dz \\
& = c\cdot \exp\left( \frac{\alpha(\alpha-1)}{2\sigma^2} \right) \cdot \int_{h_i'-T}^{h_i +T} \exp\left( \frac{-1}{2\tau^2\sigma^2} \left(z - (\alpha h_{i} + (1-\alpha) h_{i}') \right)^2  \right) dz \\
& = \exp\left( \frac{\alpha(\alpha-1)}{2\sigma^2} \right) \frac{\Phi\left( \frac{T-(1-\alpha)\tau }{\tau\sigma}\right) - \Phi\left(\frac{\alpha\tau - T}{\tau\sigma} \right) }{\Phi\left( \frac{T}{\tau\sigma}\right) - \Phi\left(\frac{\tau-T}{\tau\sigma} \right)} \leq \exp\left( \frac{\alpha(\alpha-1)}{2\sigma^2} \right).
\end{align*}
The last inequality holds due to the fact that the numerator is smaller than the denominator for all $\alpha > 1$.
We next consider switching the order in the divergence.  Let $c' =  \frac{1}{\sqrt{2\pi\tau^2\sigma^2}} \frac{1}{\Phi\left( \frac{T-\tau}{\tau\sigma}\right) - \Phi\left(\frac{-T}{\tau\sigma} \right)} $
\begin{align*}
& \exp\left( (\alpha - 1) D_\alpha(\truncNormal{T}{h_{i}'}{\tau^2\sigma^2}|_{E'} || \truncNormal{T}{h_{i}}{\tau^2\sigma^2}|_{E} ) \right)  \\
& = c'\cdot \int_{h_i'-T}^{h_i +T} \exp\left(-\alpha \left( \frac{(z - h_{i}')^2}{2\tau^2\sigma^2} \right) - (1- \alpha)\left( \frac{(z - h_{i})^2}{2\tau^2\sigma^2} \right) \right) dz \\
& = c'\cdot \int_{h_i'-T}^{h_i +T} \exp\left( \frac{-1}{2\tau^2\sigma^2} \left(z - (\alpha h_{i}' + (1-\alpha) h_{i}) \right)^2 - \left( (\alpha h_{i}' + (1-\alpha) h_{i})^2 + \alpha h_{i}'^2 + (1-\alpha) h_{i}^2 \right)  \right) dz \\
& = c'\cdot \exp\left( \frac{\alpha(\alpha-1)}{2\sigma^2} \right) \cdot \int_{h_i'-T}^{h_i +T} \exp\left( \frac{-1}{2\tau^2\sigma^2} \left(z - (\alpha h_{i}' + (1-\alpha) h_{i}) \right)^2  \right) dz \\
& = \exp\left( \frac{\alpha(\alpha-1)}{2\sigma^2} \right) \frac{\Phi\left( \frac{T-\alpha\tau }{\tau\sigma}\right) - \Phi\left(\frac{(1-\alpha)\tau - T}{\tau\sigma} \right) }{\Phi\left( \frac{T-\tau}{\tau\sigma}\right) - \Phi\left(\frac{-T}{\tau\sigma} \right)} \leq \exp\left( \frac{\alpha(\alpha-1)}{2\sigma^2} \right) .
\end{align*}
Hence, truncation only reduces the R\'enyi divergence between two shifted Gaussians.  We then have 
\[
D_\alpha(\truncNormal{T}{h_{i}}{\tau^2\sigma^2}|_E || \truncNormal{T}{h_{i}'}{\tau^2\sigma^2}|_{E'} ) \leq \frac{\alpha}{2\sigma^2}.
\]
Now consider the setting where we have access to the full histogram and use the truncated Gaussian noise for each count.  In this case, we can apply zCDP composition on the $\Delta$ counts that changed in neighboring histograms, hence by applying a union bound over the events $E$ and $E'$ for each count, the result is a $\delta$-approximate $\tfrac{\Delta}{2\sigma^2}$-zCDP mechanism.  We then apply post-processing to only return elements that are above the threshold $\tau + T$.  Recall that post-processing does not increase the privacy parameters.  Note that any element whose true count is less than $\tau$ can never appear in the result, due to the truncated noise.  In a neighboring dataset a user can cause an element whose true count is 0 to have new count of at most $\tau$, and thus cannot appear in the result from the neighboring dataset.   
\end{proof}
Note that in repeated calls to $\truncGauss{\Delta_i,\bar{d}_i,\tau_i}$, where each round $i$ may have different parameter values, we can apply zCDP composition to get an overall privacy guarantee over the entire interaction.

See Figure~\ref{fig:truncGauss} for a plot of the truncation level $T$ from \eqref{eq:truncationLevel} when compared to the truncation level in $\rTE{\Delta,\bar{d},\tau}$ for various $\ell_0$-sensitivities $\Delta$ and a given $(\diffp,\delta)$-DP guarantee.  Note that the truncation level in $\rTE{\Delta,\bar{d},\tau}$ has Laplace noise added to it, whereas the truncation level in $\truncGauss{\Delta,\bar{d},\tau}$ is fixed. 
\begin{figure}[h]
\centering
\includegraphics[width=12cm]{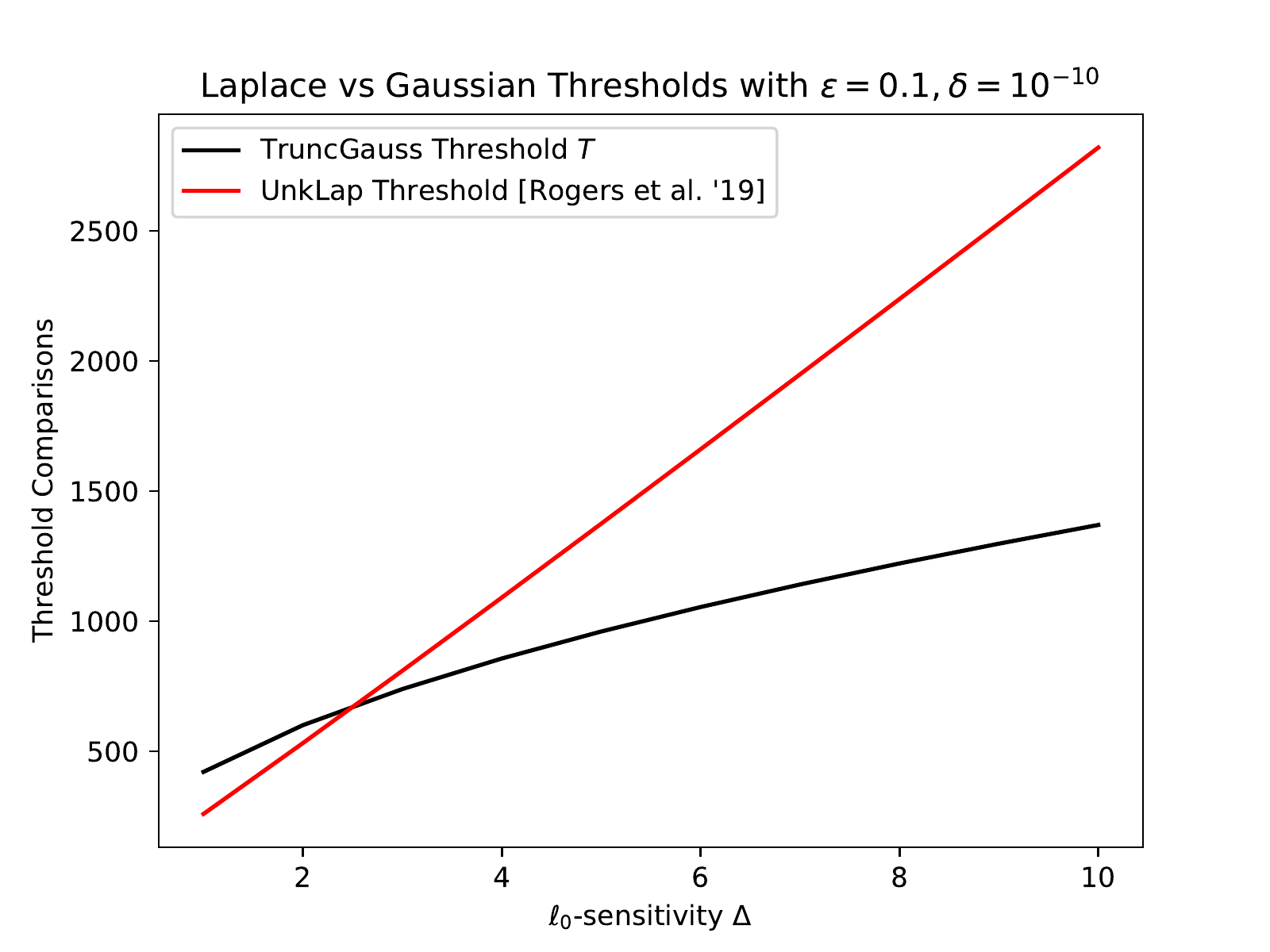}
\caption{Comparing the truncation level $T$ computed in \eqref{eq:truncationLevel} with the threshold given in $\rTE{\Delta,\bar{d},\tau}$.  We fix $\ell_\infty$-sensitivity $\tau = 1$ as well as the overall privacy parameters $(\diffp = 0.1,\delta= 10^{-10})$.  \label{fig:truncGauss}}
\end{figure}

\section{Conclusion \label{sec:conclusion}} 

In this work, we have considered the impact to the overall privacy loss when an analyst can select different types of private mechanisms, including pure DP, BR, and CDP.  We developed more general bounds on the privacy loss when the analyst can select the type of mechanism and the privacy parameters adaptively at each round, as long as the privacy parameters are predetermined and selected without replacement, thus unifying different works on bounding the privacy loss under composition.  We then considered the optimal DP composition bounds when an analyst may select at most $m$ pure DP mechanisms and the remaining $k-m$ mechanisms are BR, both with the same privacy parameter, i.e. the homogeneous case.  In the non-adaptive setting, we computed the optimal DP composition bound, which allows us to smoothly interpolate between the bounds from \cite{KairouzOhVi17}, i.e. $m = k$ and the bounds from \cite{DongDuRo19}, i.e. $m = 0$.   In the adaptive homogeneous setting, we showed that the placement of a single BR mechanism does not change the composition bound, but provided an example with two BR mechanisms where ordering does impact privacy loss.

Although our optimal privacy loss bounds only apply when we combine pure DP and BR mechanisms, we also studied the Gaussian mechanism (a standard CDP mechanism) with different types of analyses and how it compares to the Laplace mechanism (a standard DP mechanism) for releasing private histograms.  Given an $\ell_0$ and $\ell_\infty$-sensitivity for a histogram, the best mechanism, i.e. the one with the smallest privacy loss with a fixed standard deviation, is not obvious, especially if one considers composition.  We then provided new algorithms that utilize Gaussian noise for the existing algorithms in LinkedIn's privacy system.  

We see lots of interesting directions for future work.  As discussed in \citet{DongDuRo19}, determining the computational complexity of the optimal DP composition for adaptively selected BR mechanisms, even in the homogenous case, is incredibly interesting.  We found that with 3 mechanisms, 2 of which are BR, the $\arg\sup$ values were weighted sums of $\diffp$ and $\diffp_g$.  Does this generalize beyond $k = 3$?  Further, for composition of heterogenous DP mechanisms, does the ordering of the privacy parameters matter if they can be adaptively selected?


\paragraph{Acknowledgements}
We would like to thank Adrian Cardoso, Koray Mancuhan, Guillaume Saint Jacques, Reza Hosseini, and Seunghyun Lee for their helpful feedback on this work.  Also, special thanks to David Durfee and Jinshuo Dong for early conversations about this work.

\clearpage

\bibliography{bib}
\bibliographystyle{abbrvnat}

\clearpage
\appendix
\section{Omitted Proofs of Section ~\ref{sec:concentration}\label{app:ConcentrationProof}} 
\begin{proof}[Proof of Lemma~\ref{lem:concentration}]
We first show the following for $\lambda > 0$:
\begin{equation}
\E\left[\exp\left(\lambda \sum_{i=0}^k X_i\right) \right] \leq \exp(\lambda^2 b^2/2)
\label{eq:submartingale}
\end{equation}
By our hypothesis, we have that 
\[
\E[e^{\lambda X_i - \lambda^2 B_i^2/2} \mid \cF_{i-1}] \leq 1 \quad \forall i
\]
Furthermore, we have that $M_k = \exp\left( \lambda \sum_{i=0}^k X_i - \lambda^2/2 \sum_{i=0}^k B_i^2 \right)$ is a supermartingale, due to the following
\begin{align*}
\E[M_k|\cF_{k-1}] &= \E\left[\exp\left(\lambda \sum_{i=0}^k X_i - \lambda^2/2 \sum_{i=0}^k B_i^2\right) \mid \cF_{k-1} \right] \\
& = \exp\left(\lambda \sum_{i=0}^{k-1} X_i - \lambda^2/2 \sum_{i=0}^{k-1} B_i^2\right) \E\left[ e^{\lambda X_k - \lambda^2 B_k^2/2} \mid \cF_{k-1} \right] \\
& \leq  \exp\left(\lambda \sum_{i=0}^{k-1} X_i - \lambda^2/2 \sum_{i=0}^{k-1} B_i^2\right) = M_{k-1}.
\end{align*}
Hence, we have $\E[M_k] \leq 1 $.  Using our assumption that $\sum_{i=0}^k B_i^2 \leq b$, we have the following
\[
\E\left[\exp\left(\lambda \sum_{i=0}^k X_i - \lambda^2 b^2/2\right) \right] \leq \E\left[\exp\left(\lambda \sum_{i=0}^k X_i - \lambda^2 \sum_{i=0}^k B_i^2/2\right) \right] \leq 1
\]
and hence \eqref{eq:submartingale} holds.  We can now prove the result using a standard argument involving Markov's inequality.
\[
\Pr[\sum_{i=0}^k X_i > \beta] = \Pr[e^{\lambda \sum_{i=0}^k X_i - \lambda^2b^2/2} \geq e^{\lambda \beta - \lambda^2b^2/2}] \leq \frac{1}{e^{\lambda \beta - \lambda^2b^2/2}}
\]
We then set $\lambda =\tfrac{\beta}{b^2}$ to get the result.
\end{proof}
\section{Omitted Proofs of Section ~\ref{sec:optimal_nonadaptive_omitted_proofs}} 
\subsection{Proof of Lemma ~\ref{lem:deriv_zero}\label{lem:deriv_zero:proof}}
\begin{proof}
To simplify notation, we use $k' = k - m$ and $q_{\diffp,t} = q_t$.  We have the following recurrence relation for $F_\ell(t) = F_{\ell -1}(t) + f_\ell(t)$ where
\[
f_\ell(t) = \left( 1 - e^{\diffp_g - \diffp(m - \ell) - t k' } \right) \sum_{ \substack{i + 2j = \ell \\ i \in \{0,1 \cdots, k' \} \\ j \in \{0, 1, \cdots, m \} } } \alpha_{i,j} q_{t}^{k' - i} ( 1- q_{t})^j
\]
We then prove the statement by using induction $F_\ell'(t) = F_{\ell - 1}'(t) + f_\ell'(t)$.  We start with the base case,
\begin{align*}
F_0'(t) &= \left( k' \cdot e^{\diffp_g - \diffp m - t k' } \right) \alpha_{0,0} q_t^{k'} + \left( 1-e^{\diffp_g - \diffp m - t k' } \right) \alpha_{0,0} k' q_t^{k'-1} q_t' \\
& = \frac{k' \alpha_{0,0}}{1 - e^{-\diffp}}q_t^{k'-1} \left( e^{\diffp_g - \diffp m - t k' } (1 - e^{t - \diffp}) - e^{t-\diffp} \left( 1-e^{\diffp_g - \diffp m - t k' } \right) \right) \\
& =  \frac{k' \alpha_{0,0}}{1 - e^{-\diffp}}q_t^{k'-1} \left( e^{\diffp - \diffp m - t k'}  - e^{t - \diffp} \right)
\end{align*}
We now present the derivative of $f_\ell(t)$.  Note that we drop the condition in the summation where $i \in \{0,\cdots, k' \}$ and $j \in \{0,\cdots, m \}$ to ease the notation.
\begin{align*}
f_\ell'(t) &= k'e^{\diffp_g - \diffp(m - \ell) - t k'} \sum_{i+2j = \ell} \alpha_{i,j} q_{t}^{k' - i} ( 1- q_{t})^i \\
& \qquad + \left( 1 - e^{\diffp_g - \diffp(m - \ell) - t k' } \right) \sum_{i+2j = \ell} \alpha_{i,j} \left( (k' - i) q_{t}^{k' - i-1} ( 1- q_{t})^i q_t' -i q_t^{k-i} (1 - q_t)^{i-1}q_t'   \right) \\
& = \sum_{i+2j = \ell} \alpha_{i,j}  q_{t}^{k'-i-1} (1 - q_t)^{i-1}\\
& \qquad  \left( k'e^{\diffp_g - \diffp(m - \ell) - t k'}  q_t (1- q_t)  + q_t' \left( 1 - e^{\diffp_g - \diffp(m - \ell) - t k' } \right) \left( (k'-i) (1-q_t) - i q_t \right) \right)
\end{align*}
We now factor out a $1/(1- e^{-\diffp})^2$ and the inner term becomes.
\begin{align*}
(*) \defeq & k'e^{\diffp_g - \diffp(m - \ell) - t k'} \left( 1 - e^{t-\diffp} \right) \left( e^{t-\diffp} - e^{-\diffp} \right) \\
& \qquad - e^{t-\diffp} \left( 1 - e^{\diffp_g - \diffp(m - \ell) - t k' } \right) \left( (k'-i) (e^{t-\diffp} - e^{-\diffp} ) - i ( 1 - e^{t-\diffp}) \right) \\
& = k' e^{\diffp_g - \diffp(m - \ell) - t k'} \left(e^{t-\diffp} - e^{-\diffp} -e^{2(t - \diffp)} + e^{t - 2 \diffp} \right) \\
& \qquad -\left(  e^{t-\diffp}  - e^{\diffp_g - \diffp(m - \ell +1) - t (k'-1) } \right) \left( (k'-i) (e^{t-\diffp} - e^{-\diffp} ) - i ( 1 - e^{t-\diffp}) \right) \\
& =  k' \left( e^{\diffp_g - \diffp(m - \ell+1) - t (k' -1)} - e^{\diffp_g - \diffp(m - \ell+1) - t k'} - e^{\diffp_g - \diffp(m - \ell+2) - t (k' -2)} + e^{\diffp_g - \diffp(m - \ell + 2) - t (k'-1)} \right) \\
& \qquad - (k'-i) \left( e^{2(t-\diffp)}  - e^{\diffp_g - \diffp(m - \ell +2) - t (k'-2) } - e^{t-2\diffp}  + e^{\diffp_g - \diffp(m - \ell +2) - t (k'-1) } \right) \\
& \qquad + i \left( e^{t-\diffp}  - e^{\diffp_g - \diffp(m - \ell +1) - t (k'-1) } - e^{2(t-\diffp)}  + e^{\diffp_g - \diffp(m - \ell +2) - t (k'-2) }\right)
\end{align*}
We now use the inductive claim to prove the statement.  We use the fact that $\alpha_{i,j} (k' - i) = \alpha_{i+1,j} (i+1)$
\begin{align*}
F'_{\ell}(t) & = \frac{1}{1 - e^{-\diffp}} \left( e^{\diffp_g - \diffp(m -\ell + 1) - tk'} - e^{t-\diffp} \right) \sum_{ i + 2j = \ell -1 }  \alpha_{i,j} (k' - i)  q_{\diffp,t}^{k' - i - 1} (1 - q_{\diffp,t})^i \\
& \qquad + \sum_{i+2j = \ell} \alpha_{i,j}  \frac{q_{t}^{k'-i-1} (1 - q_t)^{i-1}}{(1 - e^{-\diffp})^2}\cdot(*) \\
& = \frac{1}{(1 - e^{-\diffp})^2} \sum_{i + 2j = \ell-1} (i+1) \alpha_{i+1,j} \left( e^{\diffp_g - \diffp(m -\ell +1) - tk'} - e^{t-\diffp} \right)q_t^{k' - i - 1} (1 - q_t)^{i} (1 - e^{-\diffp}) \\
& \qquad+  \sum_{i+2j = \ell} \alpha_{i,j}  \frac{q_{t}^{k'-i-1} (1 - q_t)^{i-1}}{(1 - e^{-\diffp})^2}\cdot(*) \\
& = \frac{1}{(1 - e^{-\diffp})^2} \sum_{i+2j = \ell} \alpha_{i,j}  \left( e^{\diffp_g - \diffp(m -\ell+1) - tk'} - e^{t-\diffp} \right)q_t^{k' - i} (1 - q_t)^{i-1} (1 - e^{-\diffp}) i \\
& \qquad + \frac{1}{(1 - e^{-\diffp)^2}} \sum_{i+2j = \ell} \alpha_{i,j}  q_{t}^{k'-i-1} (1 - q_t)^{i-1}\cdot(*) \\
& =  \frac{1}{(1 - e^{-\diffp})^2} \sum_{i+2j = \ell} \alpha_{i,j}  q_{t}^{k'-i-1} (1 - q_t)^{i-1} \left( i \left( e^{\diffp_g - \diffp(m -\ell+1) - tk'} - e^{t-\diffp} \right) ( 1- e^{t-\diffp}) +(*) \right)
\end{align*}
We now expand the inner term by combining like terms with the $i$ coefficient.  
\begin{align*}
& i \left( \left(e^{\diffp_g - \diffp(m -\ell+1) - tk'} - e^{t-\diffp} \right) ( 1- e^{t-\diffp}) + \left( e^{t-\diffp}  - e^{\diffp_g - \diffp(m - \ell +1) - t (k'-1) } - e^{2(t-\diffp)}  + e^{\diffp_g - \diffp(m - \ell +2) - t (k'-2) }\right) \right) \\
& = i \left( e^{\diffp_g - \diffp(m -\ell+1) - tk'} - e^{\diffp_g - \diffp(m -\ell+2) - t(k'-1)} - e^{\diffp_g - \diffp(m - \ell +1) - t (k'-1) }+ e^{\diffp_g - \diffp(m - \ell +2) - t (k'-2) } \right)
\end{align*}
Note that this is the same term as the negative of the coefficient on $k'$. We then combine the terms with $(k'-i)$ coefficient.
\begin{align*}
- (k'-i) \left( e^{\diffp_g - \diffp(m -\ell+1) - tk'} - e^{\diffp_g - \diffp(m -\ell+2) - t(k'-1)} - e^{\diffp_g - \diffp(m - \ell +1) - t (k'-1) }+ e^{\diffp_g - \diffp(m - \ell +2) - t (k'-2) }\right) \\
- (k'-i)\left( e^{2(t-\diffp)}  - e^{\diffp_g - \diffp(m - \ell +2) - t (k'-2) } - e^{t-2\diffp}  + e^{\diffp_g - \diffp(m - \ell +2) - t (k'-1) } \right) \\
= - (k'-i) \left(e^{2(t-\diffp)}- e^{t-2\diffp}  + e^{\diffp_g - \diffp(m -\ell+1) - tk'}  - e^{\diffp_g - \diffp(m - \ell +1) - t (k'-1) }  \right) \\
= (k'-i) (e^{t-\diffp} - e^{-\diffp} ) \left( e^{\diffp_g - \diffp(m-\ell) - tk'} - e^{t - \diffp} \right)
\end{align*}
Putting this altogether, we have the following.
\begin{align*}
F'_{\ell}(t) & =  \frac{1}{(1 - e^{-\diffp})^2} \sum_{i+2j = \ell} \alpha_{i,j}  q_{t}^{k'-i-1} (1 - q_t)^{i-1} (k'-i) (e^{t-\diffp} - e^{-\diffp} ) \left( e^{\diffp_g - \diffp(m-\ell) - tk'} - e^{t - \diffp} \right) \\
& =  \frac{1}{1 - e^{-\diffp}} \left( e^{\diffp_g - \diffp(m-\ell) - tk'} - e^{t - \diffp} \right) \sum_{i+2j = \ell} \alpha_{i,j} (k-i) q_{t}^{k'-i-1} (1 - q_t)^{i} 
\end{align*}
This proves the statement.
\end{proof}


\section{Omitted Proofs of Section ~\ref{sec:OptAdaptive}} 

\subsection{Proof of Proposition~\ref{prop:sbr-ord}:\label{lem:sbr-ord:proof}}

Before we prove Proposition~\ref{prop:sbr-ord} we need the following two lemmas.  The first provides an identity for reducing elemental terms to expressions independent of $t$ (similar to an identity used previously in \citet{DongDuRo19}).  The next lemma provides an expansion of $\deltaopt$ for $k$ pure DP mechanisms $\mdp(\diffp)$.

\begin{lemma}[Reduction Identity]\label{lem-sbr-ord-1} Given $\alpha\in\reals, \diffp>0,$ and $t\in\left[0,\diffp\right],$ we have the identity:
\begin{align*}
&q_{t, \diffp} \left[ 1 - e^{\alpha - t}\right]_{+} + \left(1-q_{t, \diffp}\right) \left[1 - e^{\alpha + \diffp - t}\right]_{+} =
\begin{cases}
  0 & \text{if $\alpha - t$ $\geq$ $0$} \\
   q_{t, \diffp} \left(1-e^{\alpha - t }\right) & \text{if $\alpha\leq t\leq\alpha+\diffp$} \\
   1 - e^{\alpha} & \text{otherwise}
\end{cases}.
\end{align*}
\end{lemma}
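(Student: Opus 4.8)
The plan is a direct case analysis on the signs of the two exponents $\alpha - t$ and $\alpha + \diffp - t$, exploiting that $y \mapsto [y]_+$ is piecewise linear with its single breakpoint at $0$. Since $\diffp > 0$ we always have $\alpha + \diffp - t > \alpha - t$, so as $t$ ranges over $[0,\diffp]$ there are exactly three regimes: (i) $\alpha - t \geq 0$, i.e.\ $t \leq \alpha$; (ii) $\alpha - t < 0 \leq \alpha + \diffp - t$, i.e.\ $\alpha < t \leq \alpha + \diffp$; and (iii) $\alpha + \diffp - t < 0$, i.e.\ $t > \alpha + \diffp$ (which, together with $t \leq \diffp$, forces $\alpha < 0$). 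These are precisely the three branches in the claimed identity, so it suffices to evaluate the left-hand side in each.

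In regime (i) both $1 - e^{\alpha - t}$ and $1 - e^{\alpha + \diffp - t}$ are $\leq 0$, so both positive parts vanish and the sum is $0$. In regime (ii) the first positive part equals $1 - e^{\alpha - t}$ while the second is clipped to $0$, giving $q_{t,\diffp}(1 - e^{\alpha - t})$ directly. In regime (iii) neither positive part is clipped, so the left-hand side equals
\[
q_{t,\diffp}\bigl(1 - e^{\alpha - t}\bigr) + \bigl(1 - q_{t,\diffp}\bigr)\bigl(1 - e^{\alpha + \diffp - t}\bigr) = 1 - e^{\alpha - t}\Bigl( q_{t,\diffp} + \bigl(1 - q_{t,\diffp}\bigr) e^{\diffp}\Bigr).
\]
The last step is the elementary identity $q_{t,\diffp} + (1 - q_{t,\diffp}) e^{\diffp} = e^{t}$, obtained by substituting $q_{t,\diffp} = \frac{1 - e^{t-\diffp}}{1 - e^{-\diffp}}$ and $1 - q_{t,\diffp} = \frac{e^{t-\diffp} - e^{-\diffp}}{1 - e^{-\diffp}}$ from Definition~\ref{defn:gen_rr} and simplifying the numerator $1 - e^{t-\diffp} + e^{t} - 1 = e^{t}(1 - e^{-\diffp})$; this collapses the expression to $1 - e^{\alpha}$, as claimed.

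There is essentially no obstacle here — the argument is mechanical. The only points worth care are verifying that the case split is exhaustive under the constraint $t \in [0,\diffp]$ (in particular, that regime (iii) is nonvacuous only when $\alpha < 0$) and checking that the three branches agree on the shared boundaries $t = \alpha$ and $t = \alpha + \diffp$: the middle branch $q_{t,\diffp}(1 - e^{\alpha - t})$ vanishes at $t = \alpha$ (matching branch (i)) and, using $q_{\alpha+\diffp,\diffp} = \frac{1 - e^{\alpha}}{1 - e^{-\diffp}}$, equals $1 - e^{\alpha}$ at $t = \alpha + \diffp$ (matching branch (iii)). I would present the three cases in order and defer the one-line algebra establishing $q_{t,\diffp} + (1 - q_{t,\diffp})e^{\diffp} = e^{t}$ to the end.
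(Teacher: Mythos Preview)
Your proposal is correct and follows essentially the same case analysis as the paper's own proof: both split on the signs of $\alpha - t$ and $\alpha + \diffp - t$, dispose of the first two cases by inspection, and in the third case reduce to the identity $q_{t,\diffp} + (1 - q_{t,\diffp})e^{\diffp} = e^{t}$ via the explicit formula for $q_{t,\diffp}$. Your additional checks on exhaustiveness and boundary agreement are sound extras, but the core argument is identical.
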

\begin{proof}
When ~$\alpha - t \geq 0$ we have ~$ 1-e^{\alpha - t } < 0$ and ~$1-e^{\alpha + \diffp - t } < 0$, hence
\begin{align*}
q_{t, \diffp} \left[ 1 - e^{\alpha - t}\right]_{+} + \left(1-q_{t, \diffp}\right) \left[1 - e^{\alpha + \diffp - t}\right]_{+} = 0.
\end{align*}
When ~$\alpha\leq t\leq\alpha+\diffp$ we have ~$1-e^{\alpha + \diffp - t } < 0$, hence 
\begin{align*}
q_{t, \diffp} \left[ 1 - e^{\alpha - t}\right]_{+} + \left(1-q_{t, \diffp}\right) \left[1 - e^{\alpha + \diffp - t}\right]_{+} = q_{t, \diffp} \left[ 1 - e^{\alpha - t}\right]_{+}.
\end{align*}
When $\alpha + \diffp - t \leq 0 $ we have
\begin{align*}
q_{t, \diffp} \left[ 1 - e^{\alpha - t}\right]_{+} + \left(1-q_{t, \diffp}\right) \left[1 - e^{\alpha + \diffp - t}\right]_{+} & = q_{t, \diffp} \left(1 - e^{\alpha - t}\right)+ \left(1-q_{t, \diffp}\right) \left(1 - e^{\alpha + \diffp - t}\right) \\
&= q_{t, \diffp} + 1 - q_{t, \diffp} - \left[q_{t, \diffp} + e^{\diffp}\left(1-q_{t, \diffp}\right)\right] e^{\alpha - t} \\
&= 1 - \left[\frac{1-e^{t-\diffp}}{1-e^{-\diffp}} + e^{\diffp}\left(1 -\frac{1-e^{t-\diffp}}{1-e^{-\diffp}}\right)\right] e^{\alpha - t} \\
&= 1- \frac{e^{t} - e^{t-\diffp}}{1-e^{-\diffp}}  e^{\alpha - t}\\
&= 1-e^{\alpha}.
\end{align*}
\end{proof}

\begin{lemma}\label{lem-sbr-ord-2}
For $\ell\in \naturals$, $x\in\reals$, $\diffp \geq 0$, and $\diffp_{i} = \diffp$ for $i\in\{0, \ldots, \ell\}$ we define:
\begin{align*}
\delta_{\ell}\left(x\right) \defeq \delta_{\opt}\left(\mdp(\diffp_{1}),\ldots,\mdp(\diffp_{\ell}); x\right).\
\end{align*}
Then we have the following identity for some constants $\lambda_{\ell, i} \in \reals$:
\begin{align*}
& \delta_{\ell}\left(x\right) = \sum_{i\in\left\{0,\ldots,\ell\right\}} \lambda_{\ell, i}\left[1-e^{\left(2i-\ell\right)\diffp + x}\right]_{+} .
\end{align*}

\end{lemma}

\begin{proof}
This is a straightforward induction on $\ell$. The base case $\ell=1$ is apparent from the recurrence in Lemma \ref{lem:adap_recursion_hetero}. If we have $\ell > 1$ and the identity holds for $\ell ' < \ell$ then we have:
\begin{align*}
\vphantom{\sum_{i\in\left\{0,\ldots,\ell -1\right\}}} \delta_{\ell}\left(x\right) =& \quad \qedp  \delta_{\ell-1}\left(x-\diffp\right) +  \left(1-\qedp\right) \delta_{\ell-1}\left(x+\diffp\right) \\
= & \quad \qedp  \sum_{i\in\left\{0,\ldots,\ell -1\right\}} \lambda_{\ell - 1, i}\left[1-e^{\left(2i-\ell+1\right)\diffp + x - \diffp}\right]_{+} \\
& \qquad + \quad \left(1-\qedp\right)\sum_{i\in\left\{0,\ldots,\ell-1\right\}} \lambda_{\ell -1 , i}\left[1-e^{\left(2i-\ell +1\right)\diffp + x + \diffp}\right]_{+} \\
= & \quad \qedp \lambda_{\ell-1, 0} \left[1 - e^{\left(-\ell\diffp + x\right)}\right]_{+} \\
& \qquad + \quad \sum_{i\in\{1,\ldots,\ell-1\}} \left( \qedp\lambda_{\ell-1,i}+ \left(1-\qedp \right)\lambda_{\ell-1, i-1} \right) \left[1-e^{\left(2i-\ell+1\right)\diffp + x - \diffp}\right]_{+} \\
& \qquad + \quad (1-\qedp) \lambda_{\ell-1, \ell-1} \left[1 - e^{\left(\ell\diffp + x\right)}\right]_{+} \\
= & \quad \sum_{i\in\left\{0,\ldots,\ell\right\}} \lambda_{\ell, i}\left[1-e^{\left(2i-\ell\right)\diffp + x}\right]_{+} .
\end{align*}
\end{proof}

We can now complete the proof of Proposition~\ref{prop:sbr-ord}.

\begin{proof}[Proof of Proposition~\ref{prop:sbr-ord}] 
To prove this we induct on the number, $N$, of $\diffp$-DP mechanisms. For the base case $\left(N=2\right)$ consider the expansion of the first two terms (using Lemma \ref{lem:adap_recursion_hetero}):
\begin{align*}
 \deltaopt\left(\vec{\cA}_{2}; \diffp_{g}\right) =\quad& \sup_{t\in[0,\diffp]}\biggl\{ q_{t, \diffp}\left[ \qedp \left[1 - e^{\diffp_{g} - \diffp - t}\right]_{+} + \left(1-\qedp\right)  \left[1 - e^{\diffp_{g} + \diffp - t}\right]_{+}\right]\\
& \qquad+ \left(1-q_{t, \diffp}\right) \left[\qedp \left[1 - e^{\diffp_{g} - t}\right]_{+} + \left(1-\qedp\right)  \left[1 - e^{\diffp_{g} + 2\diffp - t}\right]_{+} \right] \biggr\} \\
\deltaopt\left(\vec{\cB}_{2}; \diffp_{g}\right)=\quad& \qedp\sup_{\ton\in[0,\diffp]}{\biggl\{ q_{\ton, \diffp} \left[1 - e^{\diffp_{g} - \diffp - \ton}\right]_{+} + \left(1-q_{\ton, \diffp}\right)  \left[1 - e^{\diffp_{g} - \ton}\right]_{+} \biggr\}}  \hspace{0.5em} \\
&+ \left(1-\qedp\right) \sup_{\ttw\in[0,\diffp]}\biggl\{ {q_{\ttw, \diffp} \left[1 - e^{\diffp_{g} + \diffp - \ttw}\right]_{+} + \left(1-q_{\ttw, \diffp}\right)  \left[1 - e^{\diffp_{g} + 2\diffp - \ttw}\right]_{+} \biggr\}}
\end{align*}

Where the terms with positive coefficients that depend on $\diffp$ must be $0$, so the $\sup\limits_{\ttw\in[0,\diffp]}$ term in $\deltaopt\left(\cB_{2}; \diffp_{g}\right)$ disappears and equality is apparent.

Now, consider some $N>2$. Suppose that for $k<N$ we have $\deltaopt\left(\vec{\cA}_{k}; \diffp_{g}\right)= \deltaopt\left(\vec{\cB}_{k}; \diffp_{g}\right)$ for all $\vec{\cA}_{k}, \vec{\cB}_{k}$. Then to show ~$\deltaopt\left(\vec{\cA}_{N}; \diffp_{g}\right) = \deltaopt\left(\vec{\cB}_{N}; \diffp_{g}\right)$ there are two nontrivial cases to consider:
\begin{itemize}
\item[1.] $\cA_{1}, \cB_{1} = \mdp(\diffp)$. Here we simply expand the first $\diffp$-DP mechanism on each side using definitions and apply the inductive hypothesis.
\item[2.] $\cA_{1} = \mbr(\diffp)$ and $\cB_{1} = \mdp(\diffp)$. Here the proof is slightly more involved. First we expand the first two terms of each side:
\end{itemize}

Now consider the second case.
\begin{align*}
\deltaopt\left(\vec{\cA}_{N}; \diffp_{g}\right) = &\quad \sup_{t\in[0,\diffp]}\biggl\{ q_{t, \diffp}\left[ \qedp \deltaopt\left(\mdp(\diffp),\ldots,\mdp(\diffp); \diffp_{g} - \diffp - t\right) \hspace{0.25em} \right. \\
&\vphantom{\bigg\{} \left.\qquad\quad+ \left(1-\qedp\right)  \deltaopt\left(\mdp(\diffp),\ldots,\mdp(\diffp); \diffp_{g} + \diffp - t\right) \right] \hspace{0.25em}\\
&\vphantom{\bigg\{} \qquad\quad+ \left(1-q_{t, \diffp}\right) \left[\qedp \deltaopt\left(\mdp(\diffp),\ldots,\mdp(\diffp); \diffp_{g} - t\right) \hspace{0.25em} \right.\\
& \left. \qquad\quad+ \left(1-\qedp\right)  \deltaopt\left(\mdp(\diffp),\ldots,\mdp(\diffp); \diffp_{g} + 2\diffp - t\right) \right] \biggr\} .\\
\deltaopt\left(\vec{\cB}_{N}; \diffp_{g}\right) =& \quad \qedp\sup_{\ton\in[0,\diffp]}\biggl\{ q_{\ton, \diffp} \deltaopt\left(\mdp(\diffp),\ldots,\mdp(\diffp); \diffp_{g} - \diffp - \ton\right)\hspace{0.25em} \\
&\qquad\qquad \quad + \left(1-q_{\ton, \diffp}\right)  \deltaopt\left(\mdp(\diffp),\ldots,\mdp(\diffp); \diffp_{g} - \ton\right) \biggr\} \hspace{0.5em} \\
&\quad+\quad \left(1-\qedp\right) \sup_{\ttw\in[0,\diffp]}\biggl\{ q_{\ttw, \diffp} \deltaopt\left(\mdp(\diffp),\ldots,\mdp(\diffp); \diffp_{g} + \diffp - \ttw\right) \hspace{0.25em} \\
 &\qquad\qquad\quad + \left(1-q_{\ttw, \diffp}\right)  \deltaopt\left(\mdp(\diffp),\ldots,\mdp(\diffp); \diffp_{g} + 2\diffp - \ttw\right) \biggr\}.
\end{align*}

Where we have applied the inductive hypothesis to $\vec{\cB}_{N}$ after expanding the first term to ensure that the second mechanism is $\diffp$-BR wlog. Now, applying our formula from Lemma~\ref{lem-sbr-ord-2} we have:
\begin{align*}
\deltaopt\left(\vec{\cA}_{N}; \diffp_{g}\right) = \quad & \sup_{t\in[0,\diffp]}\biggl\{ q_{t, \diffp}\left( \qedp \sum_{i\in\left[\ell\right]} \lambda_{i}\left[1-e^{ \diffp_{g} + \left(-1 + 2i - \left(N-2\right)\right)\diffp- t}\right]_{+} \right. \\
&\qquad+ \left. \left(1-\qedp\right)  \sum_{i\in\left[\ell\right]} \lambda_{i}\left[1-e^{ \diffp_{g} + \left(1 + 2i - \left(N-2\right)\right)\diffp- t}\right]_{+} \right) \\
&\qquad+ \left(1-q_{t, \diffp}\right) \left(\qedp \sum_{i\in\left[\ell\right]} \lambda_{i}\left[1-e^{ \diffp_{g} + \left(2i - \left(N-2\right)\right)\diffp- t}\right]_{+} \right.\\
&\qquad+ \left. \left(1-\qedp\right)  \sum_{i\in\left[\ell\right]} \lambda_{i}\left[1-e^{ \diffp_{g} + \left(2 + 2i - \left(N-2\right)\right)\diffp- t}\right]_{+} \right) \biggr\}. \\
\deltaopt\left(\vec{\cB}_{N}; \diffp_{g}\right) =\quad &  \qedp\sup_{\ton\in[0,\diffp]}\biggl\{ q_{\ton, \diffp} \sum_{i\in\left[\ell\right]} \lambda_{i}\left[1-e^{ \diffp_{g} + \left(-1 + 2i - \left(N-2\right)\right)\diffp- \ton}\right]_{+} \\
& \qquad\qquad\quad+ \left(1-q_{\ton, \diffp}\right)  \sum_{i\in\left[\ell\right]} \lambda_{i}\left[1-e^{ \diffp_{g} + \left(2i - \left(N-2\right)\right)\diffp- \ton}\right]_{+}  \biggr\} \hspace{0.5em} \\
& + \left(1-\qedp\right) \sup_{\ttw\in[0,\diffp]}\biggl\{ q_{\ttw, \diffp} \sum_{i\in\left[\ell\right]} \lambda_{i}\left[1-e^{ \diffp_{g} + \left(1 + 2i - \left(N-2\right)\right)\diffp- \ttw}\right]_{+}  \\
& \qquad\qquad\quad+ \left(1-q_{\ttw, \diffp}\right)  \sum_{i\in\left[\ell\right]} \lambda_{i}\left[1-e^{ \diffp_{g} + \left(2 + 2i - \left(N-2\right)\right)\diffp- \ttw}\right]_{+} \biggr\}.
\end{align*}

Comparing terms with like coefficients we see that the reduction identity applies and so each pair of sums can be reduced. The reduction identity yields an expression that depends on $t$ only if $t \in \left(\alpha,\alpha + \diffp\right)$, therefore only one reduced term from each sum can depend on $t$ as the $\alpha = \diffp_{g} + \ell\diffp -t $ terms increase in increments of size $2\diffp$. Furthermore, because the coefficients of $\diffp$ in the first exponent in each pair of sums (to which we are applying the reduction) have the same parity, the same term $i^{*}$ depends on $t$ in both. Thus, collecting the terms constant in $t$ into constants $C_{1}(\diffp), C_{2}(\diffp) \in \reals$, we can write:

\begin{align}
\deltaopt\left(\vec{\cA}_{N}; \diffp_{g}\right) =&\quad \sup_{t\in[0,\diffp]}\biggl\{ \qedp \left(C_{1}(\diffp) + q_{t, \diffp} \lambda_{i^{*}}\left[1-e^{ \diffp_{g} + \left(-1 + 2i^{*} - \left(N-2\right)\right)\diffp- t}\right]_{+}\right) \hspace{0.5em} \nonumber \\
& \qquad\qquad+ \left(1-\qedp\right) \left(C_{2}(\diffp) + q_{t, \diffp} \lambda_{i^{*}}\left[1-e^{\diffp_{g} + \left(1+ 2i^{*} - \left(N-2\right)\right)\diffp- t}\right]_{+}\right)\biggr\}. \label{sbr-1} \\
\deltaopt\left(\vec{\cB}_{N}; \diffp_{g}\right) =&\quad \qedp\sup_{\ton\in[0,\diffp]}{\biggl\{ C_{1}(\diffp) + q_{t, \diffp} \lambda_{i^{*}}\left[1-e^{ \diffp_{g} + \left(-1 + 2i^{*}- \left(N-2\right)\right)\diffp- \ton}\right]_{+}\biggr\}} \hspace{0.5em} \nonumber \\
&\qquad\qquad+ \left(1-\qedp\right) \sup_{\ttw\in[0,\diffp]}\biggl\{ C_{2}(\diffp) + q_{\ttw, \diffp} \lambda_{i^{*}}\left[1-e^{\diffp_{g} + \left(1+ 2i^{*} - \left(N-2\right)\right)\diffp- \ttw}\right]_{+}\biggr\}. \label{sbr-2}
\end{align}

Then both remaining terms in $\deltaopt\left(\vec{\cB}_{N}; \diffp_{g}\right)$ achieve the supremum for the same argument $  t^{*}=\ton=\ttw$.  Fixing $t=t^{*}$ in (\ref{sbr-1}) and $\ton=\ttw=t^{*}$ in (\ref{sbr-2}) and comparing it is clear that $\deltaopt\left(\vec{\cA}_{N}; \diffp_{g}\right) \geq  \deltaopt\left(\vec{\cB}_{N}; \diffp_{g}\right)$. By the triangle inequality, we also have $\deltaopt\left(\vec{\cB}_{N}; \diffp_{g}\right) \geq \deltaopt\left(\vec{\cA}_{N}; \diffp_{g}\right)$, which proves the result.
\end{proof}

\subsection {Proof of Lemma~\ref{lem:xyz} \label{lem:xyz:proof}}
\begin{proof}
We start with the ordering with the DP mechanism at the end.
\begin{align*}
& \deltaopt(\mbr(\diffp), \mbr(\diffp), \mdp(\diffp) ;\diffp_g) \\
& = \sup_{t_{1,1} \in [0,\diffp]} \left\{q_{\diffp,t_{1,1}}\deltaopt(\mbr(\diffp), \mdp(\diffp) ;\diffp_g- t_{1,1}) + ( 1- q_{\diffp,t_{1,1}}) \deltaopt(\mbr(\diffp), \mdp(\diffp) ;\diffp_g + \diffp - t_{1,1}) \right\}\\
& = \sup_{t_{1,1}} \left\{q_{\diffp,t_{1,1}} \sup_{t_{1,2}} \left\{ q_{\diffp,t_{1,2}} \deltaopt(\mdp(\diffp) ;\diffp_g - t_{1,1} - t_{1,2}) + (1 -  q_{\diffp,t_{1,2}}) \deltaopt(\mdp(\diffp) ;\diffp_g + \diffp - t_{1,1} - t_{1,2}) \right\} \right. \\
& \left. + ( 1- q_{\diffp,t_{1,1}})\sup_{t_{2,2}} \left\{ q_{\diffp,t_{2,2}} \deltaopt(\mdp(\diffp) ;\diffp_g + \diffp - t_{1,1} - t_{2,2})  + (1 - q_{\diffp,t_{2,2}} )\deltaopt(\mdp(\diffp) ;\diffp_g + 2\diffp - t_{1,1} - t_{2,2}) \right\}  \right\} \\
& = \sup_{t_{1,1}, t_{1,2}, t_{2,2}} \left\{ q_{\diffp,t_{1,1}} q_{\diffp,t_{1,2}} q_{2\diffp,\diffp} [1 - e^{\diffp_g - \diffp - t_{1,1} - t_{1,2} }]_+ + q_{\diffp,t_{1,1}} q_{\diffp,t_{1,2}} (1 - q_{2\diffp,\diffp})[1 - e^{\diffp_g + \diffp - t_{1,1} - t_{1,2}}]_+ \right. \\
& \qquad\qquad \left.  + q_{\diffp,t_{1,1}} (1 -  q_{\diffp,t_{1,2}}) q_{2\diffp,\diffp} [1 - e^{\diffp_g - t_{1,1} - t_{1,2} }]_+   +  (1 - q_{\diffp,t_{1,1}}) q_{\diffp,t_{2,2}} q_{2\diffp,\diffp} [1 - e^{\diffp_g - t_{1,1} - t_{2,2} }]_+ \right. \\
& \qquad \qquad\left. + (1 - q_{\diffp,t_{1,1}}) ( 1- q_{\diffp,t_{2,2}}) q_{2\diffp,\diffp} [1 - e^{\diffp_g + \diffp - t_{1,1} - t_{2,2} }]_+ \right\}
\end{align*}
\begin{align*}
& \deltaopt(\mdp(\diffp), \mbr(\diffp), \mbr(\diffp) ;\diffp_g) \\
& = q_{2\diffp,\diffp} \deltaopt(\mbr(\diffp), \mbr(\diffp) ;\diffp_g - \diffp) + ( 1- q_{2\diffp,\diffp} ) \deltaopt( \mbr(\diffp), \mbr(\diffp) ;\diffp_g+ \diffp) \\
& = q_{2\diffp,\diffp} \sup_{t_{1,2} \in [0,\diffp]} \left\{ q_{\diffp,t_{1,2}}\deltaopt(\mbr(\diffp) ;\diffp_g - \diffp- t_{1,2}) + ( 1- q_{\diffp,t_{1,2}})\deltaopt(\mbr(\diffp) ;\diffp_g - t_{1,1}) \right\} \\
& \quad + (1 - q_{2\diffp,\diffp}) \sup_{t_{2,2} \in [0,\diffp]} \left\{ q_{\diffp, t_{2,2}}\deltaopt(\mbr(\diffp) ;\diffp_g + \diffp - t_{2,2}) + (1 -  q_{\diffp, t_{2,2}}) \underbrace{\deltaopt(\mbr(\diffp) ;\diffp_g + 2\diffp - t_{2,2})}_{ = 0} \right\} \\
& = q_{2\diffp,\diffp} \sup_{t_{1,2}} \left\{ q_{\diffp,t_{1,2}} \sup_{t_{1,3}} \left\{ q_{\diffp,t_{1,3}} [1 - e^{\diffp_g - \diffp- t_{1,2} - t_{1,3}}]_+ + (1 - q_{\diffp,t_{1,3}} ) [ 1- e^{\diffp_g - t_{1,2} - t_{1,3} }]_+ \right\} \right. \\
&\qquad\qquad\qquad  \left.+ ( 1- q_{\diffp,t_{1,2}}) \sup_{t_{2,3}} \left\{ q_{\diffp,t_{2,3}} [1 - e^{\diffp_g - t_{1,2} - t_{2,3}}]_+ + (1 - q_{\diffp,t_{2,3}}) [ 1 - e^{\diffp_g + \diffp - t_{1,2} - t_{2,3}}]_+ \right\} \right\} \\ 
& \quad + ( 1- q_{2 \diffp,\diffp} ) \sup_{t_{2,2} } \left\{ q_{\diffp, t_{2,2}} \sup_{t_{3,3}} \left\{ q_{\diffp,t_{3,3}} [1 - e^{\diffp_g + \diffp - t_{2,2} - t_{3,3}}]_+ + (1 - q_{\diffp,t_{3,3}}) \underbrace{[1 - e^{\diffp_g + 2\diffp - t_{2,2}- t_{3,3}}]_+}_{=0}\right\} \right\} \\ 
& = \sup_{t_{1,2}, t_{2,2}, t_{1,3}, t_{2,3}, t_{3,3} } \left\{ q_{\diffp,t_{1,2}} q_{\diffp,t_{1,3}} q_{2\diffp,\diffp} [1 - e^{\diffp_g - \diffp - t_{1,2} - t_{1,3} }]_+ + q_{\diffp,t_{2,2}} q_{\diffp,t_{3,3}} (1 - q_{2\diffp,\diffp})[1 - e^{\diffp_g + \diffp - t_{2,2} - t_{3,3}}]_+ \right. \\
& \qquad\qquad \left.  + q_{\diffp,t_{1,2}} (1 -  q_{\diffp,t_{1,3}}) q_{2\diffp,\diffp} [1 - e^{\diffp_g - t_{1,2} - t_{1,3} }]_+   +  (1 - q_{\diffp,t_{1,2}}) q_{\diffp,t_{2,3}} q_{2\diffp,\diffp} [1 - e^{\diffp_g - t_{1,2} - t_{2,3} }]_+ \right. \\
& \qquad \qquad\left. + (1 - q_{\diffp,t_{1,2}}) ( 1- q_{\diffp,t_{2,3}}) q_{2\diffp,\diffp} [1 - e^{\diffp_g + \diffp - t_{1,2} - t_{2,3} }]_+ \right\}
\end{align*}

We next consider the ordering that alternates between BR and DP, as one would do with using the exponential mechanism to discover the $k$ most frequent elements and then adding Laplace noise to the counts of the elements that were discovered.  
\begin{align*}
& \deltaopt(\mbr(\diffp), \mdp(\diffp), \mbr(\diffp) ;\diffp_g) \\
& = \sup_{t_{1,1} \in [0,\diffp]} \left\{ q_{\diffp,t_{1,1}} \deltaopt(\mdp(\diffp), \mbr(\diffp); \diffp_g - t_{1,1})  + (1 - q_{\diffp,t_{1,1}})  \deltaopt(\mdp(\diffp), \mbr(\diffp); \diffp_g + \diffp - t_{1,1})  \right\} \\
&= \sup_{t_{1,1} \in [0,\diffp]} \left\{ q_{\diffp,t_{1,1}} (q_{2\diffp,\diffp} \deltaopt(\mbr(\diffp); \diffp_g - t_{1,1} - \diffp) + ( 1 - q_{2\diffp,\diffp}) \deltaopt(\mbr(\diffp); \diffp_g - t_{1,1} + \diffp) ) \right. \\
& \qquad \left. + (1 - q_{\diffp,t_{1,1}})  \left( q_{2\diffp,\diffp} \deltaopt(\mbr(\diffp); \diffp_g  - t_{1,1}) + ( 1 - q_{2\diffp,\diffp}) \underbrace{\deltaopt(\mbr(\diffp); \diffp_g  - t_{1,1} + 2\diffp)}_{=0}  \right) \right\} \\
& =  \sup_{t_{1,1}, t_{1,3}, t_{2,3}, t_{3,3} } \left\{ q_{\diffp,t_{1,1}} q_{\diffp,t_{1,3}} q_{2\diffp,\diffp}  [1 - e^{\diffp_g - t_{1,1} - \diffp - t_{1,3}}]_+ 
+ q_{\diffp,t_{1,1}} q_{\diffp,t_{2,3}} ( 1 - q_{2\diffp,\diffp}) [1 - e^{\diffp_g - t_{1,1} + \diffp - t_{2,3}}]_+  \right. \\
& \qquad\qquad\qquad\qquad \left. + q_{\diffp,t_{1,1}} (1 - q_{\diffp,t_{1,3}}) q_{2\diffp,\diffp} [1 - e^{\diffp_g - t_{1,1} - t_{1,3}}]_+ + (1 - q_{\diffp,t_{1,1}})  q_{\diffp,t_{3,3}} q_{2\diffp,\diffp} [1 - e^{\diffp_g  - t_{1,1} - t_{3,3}}]_+ \right. \\
& \qquad\qquad\qquad\qquad \left.  + (1 - q_{\diffp, t_{1,1}}) (1 - q_{\diffp,t_{3,3}})  q_{2\diffp,\diffp}[1 - e^{\diffp_g + \diffp - t_{1,1} - t_{3,3}}]_+\right\}
\end{align*}

In the case when $\diffp_g \geq \diffp$, all the terms with $[1 - e^{\diffp_g + \diffp - t - t'}]_+ = 0$, so all three become equal, with a sup over three terms.
Hence, in the case when $\diffp_g \geq \diffp$ we have
\[ 
\deltaopt(\mdp(\diffp), \mbr(\diffp), \mbr(\diffp) ;\diffp_g)= \deltaopt(\mbr(\diffp), \mdp(\diffp), \mbr(\diffp) ;\diffp_g)
\]
We then assume for the rest of the proof that $\diffp_g < \diffp$.  There are many similar terms in each of the expressions.  We start by focusing on the following term when $t < \diffp_g$,
\begin{align*}
 \sup_{t' \in [0,\diffp]} & \left\{ q_{\diffp,t'} [1 - e^{\diffp_g - \diffp- t - t'}]_+ + (1 - q_{\diffp,t'} ) [ 1- e^{\diffp_g - t - t' }]_+ \right\}  \\
 & = \max\left\{ q_{\diffp,\tfrac{\diffp_g - t}{2}}^2 \left(1- e^{- \diffp}\right), 1 - e^{\diffp_g - \diffp - t} \right\}.
\end{align*}
To determine which term attains the maximum, we consider each term,
\begin{align*}
q_{\diffp,\tfrac{\diffp_g - t}{2}}^2 \left(1- e^{- \diffp}\right) & = \frac{\left(1 - e^{\tfrac{\diffp_g - t}{2} - \diffp}\right)^2}{1 - e^{-\diffp}} = \frac{1 -2 e^{\tfrac{\diffp_g - t}{2} - \diffp} + e^{\diffp_g - t -2\diffp}}{1 - e^{-\diffp}} \\
1 - e^{\diffp_g - \diffp - t} & = \frac{1 - e^{- \diffp} - e^{\diffp_g  - t - \diffp} + e^{\diffp_g  - t - 2\diffp} }{1 - e^{-\diffp}}
\end{align*}
Hence, for $t < \diffp_g$, we have 
\begin{align*}
& \max\left\{ q_{\diffp,\tfrac{\diffp_g - t}{2}}^2 \left(1- e^{- \diffp}\right), 1 - e^{\diffp_g - \diffp - t} \right\} = q_{\diffp,\tfrac{\diffp_g - t}{2}}^2 \left(1- e^{- \diffp}\right) \\
& \iff 0 \leq e^{\diffp_g - t} - 2e^{\tfrac{\diffp_g - t}{2}} + 1 = \left(1 - e^{\tfrac{\diffp_g - t}{2}}\right)^2.
\end{align*}
In the case where $t \in [\diffp_g, \diffp]$, we have
\begin{align*}
 \sup_{t' \in [0,\diffp]} & \left\{ q_{\diffp,t'} [1 - e^{\diffp_g - \diffp- t - t'}]_+ + (1 - q_{\diffp,t'} ) [ 1- e^{\diffp_g - t - t' }]_+ \right\} = 1 - e^{\diffp_g - \diffp - t}.
\end{align*}

We next focus on another common term in each expression, considering first when $t < \diffp_g$.
\begin{align*}
 \sup_{t' \in [0,\diffp]} & \left\{ q_{\diffp,t'} [1 - e^{\diffp_g - t - t'}]_+ + (1 - q_{\diffp,t'}) [ 1 - e^{\diffp_g + \diffp - t - t'}]_+ \right\} = q_{\diffp,\tfrac{\diffp_g +\diffp - t}{2}}^2 \left(1- e^{- \diffp}\right)
\end{align*}  
On the other hand, if $t \in [\diffp_g, \diffp]$, we have
\begin{align*}
 \sup_{t'\in [0,\diffp]} &  \left\{ q_{\diffp,t'} [1 - e^{\diffp_g - t - t'}]_+ + (1 - q_{\diffp,t'}) [ 1 - e^{\diffp_g + \diffp - t - t'}]_+ \right\} \\
 & =  \max\left\{ q_{\diffp,\tfrac{\diffp_g +\diffp - t}{2}}^2 \left(1- e^{- \diffp}\right), 1 - e^{\diffp_g - t} \right\}
\end{align*}  
Once again, to determine which term attains the maximum, we consider each term,
\begin{align*}
q_{\diffp,\tfrac{\diffp_g + \diffp - t}{2}}^2 \left(1- e^{- \diffp}\right) & = \frac{\left(1 - e^{\tfrac{\diffp_g - \diffp - t}{2}}\right)^2}{1 - e^{-\diffp}} = \frac{1 -2 e^{\tfrac{\diffp_g - \diffp - t}{2} } + e^{\diffp_g - \diffp - t }}{1 - e^{-\diffp}} \\
1 - e^{\diffp_g - t} & = \frac{1 - e^{- \diffp} - e^{\diffp_g  - t} + e^{\diffp_g - \diffp - t} }{1 - e^{-\diffp}}
\end{align*}
Hence, for $t \in  [\diffp_g,\diffp]$, we have 
\begin{align*}
& \max\left\{ q_{\diffp,\tfrac{\diffp_g + \diffp - t}{2}}^2 \left(1- e^{- \diffp}\right), 1 - e^{\diffp_g - t} \right\} = q_{\diffp,\tfrac{\diffp_g + \diffp - t}{2}}^2 \left(1- e^{- \diffp}\right) \\
& \iff 0 \leq e^{\diffp_g - \diffp - t} - 2e^{\tfrac{\diffp_g- \diffp - t}{2}} + 1 = \left(1 - e^{\tfrac{\diffp_g - \diffp - t}{2}}\right)^2
\end{align*}

Putting this together, we have
\begin{align*}
& \deltaopt(\mdp(\diffp), \mbr(\diffp), \mbr(\diffp) ;\diffp_g) \\
&= q_{2\diffp,\diffp} \max \left\{  \sup_{t_{1,2} \in [0, \diffp_g)} \left\{ q_{\diffp,t_{1,2}} q_{\diffp,\tfrac{\diffp_g - t_{1,2}}{2}}^2 \left(1- e^{- \diffp}\right) + ( 1 - q_{\diffp,t_{1,2} } )  q_{\diffp,\tfrac{\diffp_g + \diffp - t_{1,2}}{2}}^2 \left( 1 - e^{ - \diffp} \right) \right\},  \right. \\
& \qquad\qquad\qquad \left.  \sup_{t_{1,2} \in [\diffp_g,\diffp]} \left\{ q_{\diffp,t_{1,2}} (1 - e^{\diffp_g - \diffp - t_{1,2}}) + (1-  q_{\diffp,t_{1,2}}) q_{\diffp,\tfrac{\diffp_g +\diffp - t_{1,2}}{2}}^2 \left(1- e^{- \diffp}\right) \right\}  \right\} \\
&\qquad\qquad\qquad\qquad +  ( 1- q_{2 \diffp,\diffp} ) \sup_{t_{2,2} \in [\diffp_g,\diffp] } \left\{ q_{\diffp,t_{2,2}}q_{\diffp,\diffp + \tfrac{\diffp_g - t_{2,2}}{2}}^2 \left( 1 - e^{ -\diffp} \right) \right\} \\
& = \max\left\{ \sup_{t_{1,2} \in [0,\diffp_g)} x(t_{1.2}) , \sup_{t_{1,2} \in [\diffp_g,\diffp]} y(t_{1,2})\right\} + \sup_{t_{2,2} \in [\diffp_g,\diffp] } z(t_{2,2})
\end{align*}

We now focus on  $\deltaopt(\mbr(\diffp), \mdp(\diffp), \mbr(\diffp) ;\diffp_g)$, 
\begin{align*}
& \deltaopt(\mbr(\diffp), \mdp(\diffp), \mbr(\diffp) ;\diffp_g) \\
&= \max \left\{ q_{2\diffp,\diffp} \sup_{t_{1,1} \in [0, \diffp_g)} \left\{ q_{\diffp,t_{1,1}} q_{\diffp,\tfrac{\diffp_g - t_{1,1}}{2}}^2 \left(1- e^{- \diffp}\right) + ( 1 - q_{\diffp,t_{1,1} } )  q_{\diffp,\tfrac{\diffp_g + \diffp - t_{1,1}}{2}}^2 \left( 1 - e^{ - \diffp} \right) \right\},  \right. \\
& \qquad \qquad q_{2\diffp,\diffp} \sup_{t_{1,1} \in [\diffp_g,\diffp]} \left\{ q_{\diffp,t_{1,1}} (1 - e^{\diffp_g - \diffp - t_{1,1}}) + (1-  q_{\diffp,t_{1,1}}) q_{\diffp,\tfrac{\diffp_g +\diffp - t_{1,1}}{2}}^2 \left(1- e^{- \diffp}\right) \right. \\
&\qquad\qquad\qquad\qquad \left. \left.+  ( 1- q_{2 \diffp,\diffp} ) q_{\diffp,t_{1,1}}q_{\diffp,\diffp + \tfrac{\diffp_g - t_{1,1}}{2}}^2 \left( 1 - e^{ -\diffp} \right) \right\} \right\} \\
& = \max\left\{\sup_{t_{1,1} \in [0,\diffp_g)}x(t_{1,1}) , \sup_{t_{1,1} \in [\diffp_g, \diffp]} y(t_{1,1}) + z(t_{1,1}) \right\}
\end{align*}

Lastly, we focus on $\deltaopt(\mbr(\diffp), \mbr(\diffp), \mdp(\diffp) ;\diffp_g)$.  Note that when $t_{1,1} < \diffp_g$, we always have $\diffp_g + \diffp - t_{1,1} - t_{1,2} \geq 0$ for all $t_{1,2} \in [0,\diffp]$, and for $t_{1,1}  \in [\diffp_g, \diffp]$ we have $\diffp_g - t_{1,1} - t_{1,2} < 0$ for any $t_{1,2} >0$.
\begin{align*}
& \deltaopt(\mbr(\diffp), \mdp(\diffp), \mbr(\diffp) ;\diffp_g) \\
& = \sup_{t_{1,1}, t_{1,2}} \left\{ q_{\diffp,t_{1,1}} q_{\diffp,t_{1,2}} q_{2\diffp,\diffp} [1 - e^{\diffp_g - \diffp - t_{1,1} - t_{1,2} }]_+ + q_{\diffp,t_{1,1}} q_{\diffp,t_{1,2}} (1 - q_{2\diffp,\diffp})[1 - e^{\diffp_g + \diffp - t_{1,1} - t_{1,2}}]_+ \right. \\
& \qquad\qquad \left.  + q_{\diffp,t_{1,1}} (1 -  q_{\diffp,t_{1,2}}) q_{2\diffp,\diffp} [1 - e^{\diffp_g - t_{1,1} - t_{1,2} }]_+   +  (1 - q_{\diffp,t_{1,1}}) q_{2\diffp,\diffp} q_{\diffp,\tfrac{\diffp_g + \diffp - t_{1,1}}{2}}^2 (1 - e^{-\diffp)} \right\}
\\
&  = \max\left\{ \sup_{t_{1,1} \in [0,\diffp_g)} x(t_{1,1}), \sup_{t_{1,1} \in [\diffp_g, \diffp]} y(t_{1,1}) + z(t_{1,1} ) \right\} 
\end{align*}
Summarizing this, we then have for the nonnegative functions $x(t)$, $y(t)$, and $z(t)$ defined above, 
\begin{align*}
\deltaopt(\mdp(\diffp), \mbr(\diffp), \mbr(\diffp) ;\diffp_g)  & = \max \left\{ \sup_{t \in [0,\diffp_g)}x(t), \sup_{t \in [\diffp_g, \diffp] }y(t) \right\} + \sup_{t' \in [\diffp_g, \diffp]} z(t') \\
 \deltaopt(\mbr(\diffp), \mdp(\diffp), \mbr(\diffp) ;\diffp_g)  & = \max \left\{ \sup_{t \in [0,\diffp_g)}x(t), \sup_{t \in [\diffp_g, \diffp] }y(t) +z(t) \right\}  \\
 & = \deltaopt(\mbr(\diffp), \mbr(\diffp), \mdp(\diffp) ;\diffp_g)  
\end{align*}
\end{proof}

\subsection {Proof of Lemma~\ref{lem:ordbrdp2} \label{lem:ordbrdp2:proof}}
\begin{proof}
We start by computing the derivatives of the necessary functions,
\begin{align*}
x'(t) & = \tfrac{1-e^{\diffp/2}}{(1-e^{-\diffp})^{2}} \left(e^{t} - e^{\diffp/2} \right) e^{\tfrac{\diffp_{g} - t}{2} - 2\diffp}. \\
y'(t) & = e^{-2\diffp - t} \left(e^{t/2} - e^{\tfrac{\diffp_g + \diffp}{2}} \right) \left( e^{3t/2} - e^{\tfrac{\diffp_g + \diffp}{2}} \right)\\
z'(t) & = -e^{-\diffp - t} ( e^{t/2}-e^{\diffp_g/2}) \left(e^{3 t/2} - e^{\diffp + \diffp_g/2} \right)\\
(y+z)'(t) &= (e^{\diffp/2} - 1)^2 (e^{\diffp/2} + 1) (e^{\diffp/2} - e^t) e^{\tfrac{\diffp_g -6 \diffp - t}{2}}.
\end{align*}
Note that for any $t \in [0,\diffp]$, we have
\begin{align*}
& q_{\diffp,\tfrac{\diffp_g - t}{2}}^2 \left(1- e^{- \diffp}\right) > (1 - e^{\diffp_g - \diffp - t}) \\
& \iff 0 < \left(1 - e^{\tfrac{\diffp_g - t}{2}} \right)^2
\end{align*}
Hence, as long as $t \neq \diffp_g$, we have $x(t) > y(t)$.  
We break up the analysis into two cases, depending on whether $\diffp_g$ is larger or smaller than $\diffp/2$.
\paragraph{Case 1:} First assume that $\diffp/2< \diffp_g$.  In this case, we have,
\[
\sup_{t \in [\diffp_g,\diffp]} y(t) < x(\diffp/2) = \sup_{t \in [0,\diffp_g)} x(t).
\]
Furthermore, we have for any $t \in [\diffp_g,\diffp]$
\[
y(t) + z(t) < x(\diffp/2) + z(t) < x(\diffp/2) + z(t) \leq x(\diffp/2) + \sup_{t' \in [\diffp_g,\diffp]}z(t')
\]
\paragraph{Case 2:} When $\diffp/2 = \diffp_g$, we have for any $t \in (\diffp_g,\diffp]$
\begin{align*}
y(t) + z(t) & < x(t) + z(t) < x(\diffp_g) + z(t) \leq  \sup_{t' \in [0,\diffp_g)} x(t') + \sup_{t' \in [\diffp_g,\diffp]} z(t')  \\
\text{ and } y(\diffp_g) + \underbrace{z(\diffp_g)}_{=0} &= x(\diffp_g) < x(\diffp_g) + \sup_{t' \in [\diffp_g,\diffp]} z(t') = \sup_{t' \in [0,\diffp_g)} x(t') + \sup_{t' \in [\diffp_g,\diffp]} z(t') 
\end{align*}
Note that we are taking $\sup$ over compact sets, including the element $\diffp_g$.  Thus, we have, in the case of $\diffp/2 \leq \diffp_g$ that $\deltaopt(\mbr(\diffp), \mdp(\diffp), \mbr(\diffp) ;\diffp_g)  < \deltaopt(\mdp(\diffp), \mbr(\diffp), \mbr(\diffp) ;\diffp_g) $.  

\paragraph{Case 3: }Now we consider $\diffp/2 > \diffp_g$.  
In this case, 
\[
\sup_{t \in [0,\diffp_g)} x(t) = x(\diffp_g) = y(\diffp_g) < y\left( \tfrac{\diffp_g + \diffp}{3}\right) =   \sup_{t' \in [\diffp_g, \diffp]} y(t').
\]
Looking at the derivatives of $z(t)$ and $y(t)$, computed above, the term that maximizes $y(t)$ is not the same as $z(t)$, so that 
\[
\sup_{t \in [\diffp_g, \diffp]} y(t) + z(t) < \sup_{t \in [\diffp_g,\diffp]}y(t) + \sup_{t' \in [\diffp_g, \diffp] } z(t')
\]
This concludes the proof.
\end{proof}

\end{document}